\begin{document}

\begin{frontmatter}%

\title{Polynomial Identity Testing via \\ Evaluation of Rational Functions
    \titlefootnote{A conference version of this paper appeared in the
    \href{https://doi.org/10.4230/LIPIcs.ITCS.2022.119}{Proceedings of the 13th Innovations in Theoretical Computer Science Conference}~\cite{conf-version}.}}  %

\author[hu]{Ivan Hu}
\author[vanmelkebeek]{Dieter van Melkebeek}
\author[morgan]{Andrew Morgan}

\begin{abstract}
We introduce a hitting set generator for Polynomial Identity Testing based on evaluations of low-degree univariate rational functions at abscissas associated with the variables. We establish an equivalence up to rescaling with a generator introduced by Shpilka and Volkovich, which has a similar structure but uses multivariate polynomials.

We initiate a systematic analytic study of the power of hitting set generators by characterizing their vanishing ideals, \ie, the sets of polynomials that they fail to hit. We provide two such characterizations for our generator. 
First, we develop a small collection of polynomials that jointly produce the vanishing ideal. As corollaries, we obtain tight bounds on the minimum degree, sparseness, and partition class size of set-multilinearity in the vanishing ideal.
Second, inspired by a connection to alternating algebra, we develop a structured deterministic membership test for the multilinear part of the vanishing ideal. We present a derivation based on alternating algebra along with the required background, as well as one in terms of zero substitutions and partial derivatives, avoiding the need for alternating algebra.
 
As evidence of the utility of our analytic approach, we rederive known derandomization results based on the generator by Shpilka and Volkovich and present a new application in derandomization / lower bounds for read-once oblivious algebraic branching programs.
\end{abstract}

\iffalse %
%
%
%
%
%
%
%
%
%
%
%
%
%
%
%
%
%
%
%
%
%

\tocarxivcategory{cs.CC}

\fi %

\end{frontmatter}

\section{Overview}
\label{s.intro}

Polynomial identity testing (PIT) is the fundamental problem of deciding
whether a given multivariate algebraic circuit formally computes the zero
polynomial.
PIT has a simple, efficient randomized algorithm that only needs blackbox
access to the circuit:
Pick a random point and check whether the circuit evaluates to zero on that
particular point.

Despite the fundamental nature of PIT and the simplicity of the randomized
algorithm,
no efficient deterministic algorithm is known---even in the white-box setting,
where the algorithm has access to the description of the circuit.
The existence of such an algorithm would imply long-sought circuit lower
bounds \cite{HeintzSchnorr1980, Agrawal2005, KabanetsImpagliazzo2004}.
Conversely, sufficiently strong circuit lower bounds yield blackbox
derandomization for all of BPP,
the class of decision problems admitting efficient randomized algorithms with
bounded error \cite{NisanWigderson1994,ImpagliazzoWigderson1997}.
Although the known results leave gaps between the two directions, they 
show that PIT constitutes an important stepping stone towards derandomizing BPP, and 
suggest that derandomizing BPP can be achieved in a blackbox fashion if at all.

Blackbox derandomization of PIT for a class of polynomials $\Class$ in the
variables \(x_1\), \ldots, \(x_n\) is equivalent to the efficient
construction of a substitution $G$ that replaces each \(x_i\) by a
low-degree polynomial in a small set of fresh variables
such that, for every nonzero polynomial $p$ from $\Class$,
$p(G)$ remains nonzero {\cite[Lemma 4.1]{ShpilkaYehudayoff2010}}.
We refer to \(G\) as a generator,
the fresh variables are its seed,
and we say that $G$ hits the class $\Class$.
If there are \(l\) seed variables,
and if $p$ and $G$ have degree at most $n^{O(1)}$,
then the resulting deterministic PIT algorithm for $\Class$ makes $n^{O(l)}$
blackbox queries.

Much progress on derandomizing PIT has been obtained by designing such substitutions and analyzing their hitting properties for interesting classes $\Class$.
Shpilka and Volkovich \cite{ShpilkaVolkovich2008} introduced a generator, now dubbed the Shpilka--Volkovich generator, or ``SV generator'' for short. It takes as an additional parameter a positive integer $l$ and can be viewed as an algebraic version of  $l$-wise independence in the sense that any selection of $l$ of the original variables can remain independent while the others are forced to zero. The property is realized using Lagrange interpolation with respect to $n$ distinct elements of the underlying field $\FF$, one element $a_i$ corresponding to each original variable $x_i$. We refer to the elements $a_i$ as \emph{abscissas}; they are also parameters of $\SV$.

\begin{definition}[SV generator]
    \label{def.sv}
    The \emph{Shpilka--Volkovich (SV) Generator} for \(\FF[x_1,\dots,x_n]\) is paramet\-rized by the following data:    %
    \begin{itemize}
        \item A positive integer \(l\).
        \item For each \(i \in [n]\), a distinct abscissa \(a_i\in \FF\).
    \end{itemize}
    The generator \(\SVl\) takes as seed \(l\) pairs of fresh variables
    \((y_1,z_1),\ldots,(y_l,z_l)\) and substitutes
    \begin{equation}\label{eq-sv-sub}
    x_i\gets\sum_{t=1}^l  z_t \cdot L_i(y_t),
    \end{equation}
    where the \emph{Lagrange interpolant} $L_i$ is the unique univariate polynomial of degree at most $n-1$ satisfying $L_i(a_i)=1$ and $L_i(a_j) = 0$ for $j \in [n] \setminus \{i\}$.
\end{definition}
$\SVparam{1}$ takes two seed variables, $y$ and $z$. For any $i \in [n]$, setting $y=a_i$ gets $x_i=z$ while the other variables are set to zero. For larger $l$, $\SVl$ is the sum of $l$ independent copies of $\SVparam{1}$. 

Shpilka and Volkovich proved that $\SVparam{1}$ hits sums of a bounded number of read-once formulas for
$l=O(\log n)$ \cite{ShpilkaVolkovich2008}, later improved to $l=O(1)$ \cite{MinahanVolkovich2017}.
The generator for $l=O(\log n)$ has also been shown to hit multilinear
depth-4 circuits with bounded top fan-in \cite{KMSV2009},
multilinear bounded-read formulas \cite{AvMV2011},
commutative read-once oblivious algebraic branching programs \cite{FSS2014},
\(\Sigma\mathrm{m}\bigwedge\Sigma\Pi^{O(1)}\) formulas (i.e., sums of terms that are the product of a monomial and a power of a bounded-degree polynomial) \cite{Forbes2015},
circuits with locally-low algebraic rank in the sense of
\cite{KumarSaraf2016},
and orbits of simple polynomial classes under invertible linear
transformations of the variables \cite{MediniShpilka2021}.
The generator is an ingredient in other hitting set constructions, as well,
notably constructions using the technique of low-support rank concentration
\cite{ASS2013,AGKS2014,GKST2015,GKS2016,SahaThankey2021,BhargavaGhosh2021}.
It also forms the core of a ``succinct'' generator that hits a variety of
classes, including depth-2 circuits \cite{FSV2018}.

\paragraph{Vanishing ideal.}
In this paper, we initiate a systematic study of the power of a generator $G$ through the set of polynomials $p$ such that $p(G)$ vanishes, which we denote by \(\VanIdeal{G}\). For any fixed generator $G$, \(\VanIdeal{G}\) is closed under addition, and for all \(q\in \FF[x_1,\dots,x_n]\) and \(p\in \VanIdeal{G}\), \(q\cdot p\in \VanIdeal{G}\). By definition, this means that the set \(\VanIdeal{G}\) has the algebraic structure of an ideal. From now on, we refer to \(\VanIdeal{G}\) as the \emph{vanishing ideal} of $G$. Our technical contributions can be understood as precisely characterizing the vanishing ideal of the SV generator.

Characterizations of the vanishing ideal facilitate two objectives:
\begin{description}

  \item[Derandomization.]
   A generator $G$ hits a class $\Class$ of polynomials if and only if $\Class$ and $\VanIdeal{G}$ have at most the zero polynomial in common. For a class $\Class$ defined by a resource bound, $G$ trivially hits $\Class$ if the characterization of the nonzero elements in $\VanIdeal{G}$ is incompatible with being computable within the resource bound. In other words, derandomization of PIT for $\Class$ reduces to proving lower bounds for $\VanIdeal{G}$. By developing explicit structure for polynomials in the ideal, lower bounds become more tractable. 

   More generally, given a characterization of $\VanIdeal{G}$, in order to derandomize PIT for a class \(\Class\) it suffices to design another generator $G'$ that hits merely the polynomials in $\Class \cap \VanIdeal{G}$. As $G$ hits the remainder of $\Class$, combining $G$ with $G'$ yields a generator for all of $\Class$. In this way, one may assume---for free---additional structure about the polynomials in $\Class$, namely that the polynomials moreover belong to $\VanIdeal{G}$. 

  \item[Lower bounds.]
    If we happen to know that $G$ hits the class $\Class$ of polynomials
    computable within some resource bound,
    then any expression for a nonzero polynomial in $\VanIdeal{G}$ yields an
    explicit polynomial that falls outside $\Class$.
    Such a statement is often referred to as hardness of representation,
    and it can be viewed as a lower bound in the model of computation underlying
    $\Class$ (assuming the polynomial can be computed in the model at all).
    Characterizing \(\VanIdeal{G}\) makes explicit the polynomials to which
    the lower bound applies.

\end{description}

We illustrate how to make progress on both objectives through our
characterizations of the SV generator's vanishing ideal.

\paragraph{Rational function evaluations.}
Another contribution of our paper is the development of an alternate view of
the SV generator,
namely as evaluations of univariate rational functions of low degree.
We would like to promote the perspective for its intrinsic appeal and 
applicability.
Among other benefits, it facilitates the study of the vanishing ideal.

The transition goes as follows.
Recall in \expref{Definition}{def.sv} that the SV generator takes as additional parameters a positive integer $l$
and an arbitrary choice of distinct abscissas $a_i \in \FF$ for each of the original variables $x_i$, \(i\in[n]\).
When \(l=1\), $\SVparam{1}$ takes as seed two fresh variables, $y$ and $z$, and can be described succinctly in terms of the Lagrange interpolants $L_i$ for the set of abscissas. Plugging in an explicit expression for the Lagrange interpolants, we have:
\begin{equation}\label{eq.intro.SV}
  x_i \gets z \cdot L_i(y) \doteq
  z \cdot \prod_{j\in [n]\setminus\{i\}} \frac{y - a_{j}}{a_i - a_{j}}.
\end{equation}
By rescaling,
the denominators on the right-hand side of \eqref{eq.intro.SV} can be cleared,
resulting in the following somewhat simpler substitution:
\begin{equation}\label{eq.intro.SVnonnorm}
  x_i \gets z \cdot
    \prod_{j\in [n]\setminus\{i\}} (y - a_{j}).
\end{equation}
The vanishing ideals of \eqref{eq.intro.SVnonnorm} and $\SVparam{1}$ are the
same up to rescaling each variable to match the rescaling from
\eqref{eq.intro.SV} to \eqref{eq.intro.SVnonnorm}.

More importantly,
we apply the change of variables
$z \gets z' / \prod_{j \in [n]} (y - a_j)$. The
resulting substitution now uses rational functions of the seed:
\begin{equation}\label{eq.intro.RFE}
  x_i \gets \frac{z'}{y-a_i}.
\end{equation}
The notion of vanishing ideal naturally extends to rational function
substitutions.
The change of variables from \eqref{eq.intro.SVnonnorm} to
\eqref{eq.intro.RFE} establishes that any polynomial vanishing on
\eqref{eq.intro.SVnonnorm} also vanishes on \eqref{eq.intro.RFE}.
The change of variables is invertible
(the inverse is $z' \gets z \cdot \prod_{j\in [n]} (y-a_j)$),
so any polynomial vanishing on \eqref{eq.intro.RFE} also vanishes on
\eqref{eq.intro.SVnonnorm}.
We conclude that the vanishing ideal of \eqref{eq.intro.RFE} is the same as
that of $\SVparam{1}$ up to rescaling the variables.

Note that, for fixed \(y\) and \(z'\), \eqref{eq.intro.RFE} may be interpreted
as first forming a univariate rational function
\(f(\alpha) = \frac{z'}{y-\alpha}\)
(depending on \(y\) and \(z'\) but independent of \(i\))
and then substituting \(x_i\gets f(a_i)\).
As \(y\) and \(z'\) vary,
\(f\) ranges over all rational functions in \(\alpha\) with numerator degree zero and
denominator degree one.
We denote \eqref{eq.intro.RFE} by \(\RFEparam{0}{1}\),
where $\RFE$ is a short-hand for {\em Rational Function Evaluation},
\(0\) bounds the numerator degree, and \(1\) bounds the denominator degree.

As a generator, \(\RFEparam{0}{1}\) naturally generalizes to $\RFEkl$ for
arbitrary $k, l \in \N$.
\begin{definition}[RFE generator]
  \label{def.rfegen}
  The \emph{Rational Function Evaluation Generator (RFE)} for
  \(\FF[x_1,\ldots,x_n]\) is
  parametrized by the following data:
  \begin{itemize}
    \item
      A non-negative integer \(k\), the \emph{numerator degree}.
    \item
      A non-negative integer \(l\), the \emph{denominator degree}.
    \item For each \(i\in [n]\), a distinct \emph{abscissa} \(a_i \in \FF\).
  \end{itemize}
  The generator \(\RFEkl\) takes as seed a rational function \(f \in \FF(\alpha)\)
  such that \(f\) can be written as \(g/h\) for some \(g,h\in\FF[\alpha]\)
  with \(\deg(g)\le k\), \(\deg(h)\le l\), and \(h(a_i)\ne0\)
  for all \(i\in[n]\).
  From \(f\),
  it generates the substitution \( x_i \gets f(a_i) \) for each \(i\in[n]\).
\end{definition}
There are multiple ways to parametrize the seed of \(\RFEkl\) using scalars; the flexibility to choose is a source of convenience. We refer to \expref{Section}{s.rfe-formal} for a discussion on different parametrizations, as well as on how large the underlying field $\FF$ must be.
As is customary in the context of blackbox derandomization of PIT, we assume that $\FF$ is sufficiently large, possibly by taking a field extension.

The connection between \(\RFEparam{0}{1}\) and $\SVparam{1}$ extends as
follows.
For higher values of $l$, $\SVl$ is defined as the sum of $l$ independent
instantiations of $\SVparam{1}$.
The same transformations as above relate $\SVl$ and the sum of $l$ independent
instantiations of $\RFEparam{0}{1}$. 
Partial fraction decomposition expresses a (non-degenerate) univariate rational function with numerator of
degree $l-1$ and denominator of degree $l$ as a sum of $l$ rational functions with numerators of degree 0 and denominators of degree 1. As a result, $\SVparam{l}$ is equivalent in power to $\RFElml$, up to variable rescaling. See \expref{Section}{s.rfe-formal} for a formal treatment. 

For parameter values $k \ne l-1$,
there is no SV generator that corresponds to $\RFEkl$,
but $\SVparam{\max(k+1,l)}$ encompasses $\RFEkl$ (up to rescaling)
and uses at most twice as many seed variables.
Thus, the $\RFE$-generator and the $\SV$-generator efficiently hit the same
classes of polynomials.
However, $\RFE$ induces simple linear dependencies on the seed variables---as opposed 
to the nonlinear dependencies produced by $\SV$---which enables our approach for determining the vanishing ideal.
The moral is that,
even though polynomial substitutions are sufficient for derandomizing PIT,
it nevertheless helps to consider rational substitutions.
Their use may simplify analysis
and arguably yield more elegant constructions.

As another indication of the power of rational substitutions, an alternate interpretation of the $\RFE$ generator is that it substitutes the ratio of two linear functions of the seed variables, where the coefficients of the linear functions are powers of the abscissas. A generator that only substitutes linear functions---as opposed to a ratio of linear functions---of the seed variables must have seed length $n$ in order to hit all linear polynomials. This is because if the seed length were less than $n$, then there exists a nontrivial linear combination of the $n$ variables that becomes zero after substitution. In contrast, the simplest nontrivial case of $\RFE$, $\RFEparam{0}{1}$, hits all linear polynomials and only needs a seed of length 2. 

\paragraph{Generating set.}
Our first result describes a small and explicit generating set for the
vanishing ideal of $\RFE$.
It consists of instantiations of a single determinant expression.
\begin{theorem}[generating set]\label{thm.ideal.generators}
  Let $k, l, n \in \N$ and let $a_i$ for $i \in [n]$ be distinct elements of $\FF$.
  The vanishing ideal of $\RFEkl$ in $\FF[x_1,\dots,x_n]$ for the
  given choice of abscissas $(a_i)_{i\in[n]}$ is generated by the following
  polynomials over all choices of $k+l+2$ indices
  $i_1, i_2, \dots, i_{k+l+2} \in [n]$:
  \begin{equation}\label{eq.def.gen}
    \VanGenkl[i_1,i_2, \dots, i_{k+l+2}]
    \doteq \det
      \begin{bmatrix}
        a_{i_j}^{k} & a_{i_j}^{k-1} & \dots & 1 &
        a_{i_j}^l x_{i_j} & a_{i_j}^{l-1}x_{i_j} & \dots & x_{i_j}
      \end{bmatrix}_{j=1}^{k+l+2}.
  \end{equation}
  Moreover,
  for any fixed set $C \subseteq [n]$ of size $k+1$, the polynomials $\VanGenkl[C \sqcup L]$
  form a generating set of minimum size when $L$ ranges over all $(l+1)$-subsets of $[n]$ that are disjoint from $C$, where 
  \[\VanGenkl[S] \doteq \VanGenkl[i_1,i_2,\dots,i_{|S|}]\] 
  for $S = \{i_1, \dots, i_{|S|}\} \subseteq [n]$ with $i_1 < i_2 < \dots < i_{|S|}$.
 \end{theorem}
The name ``\(\VanGen\)'' is a shorthand for ``Elementary Vandermonde
Circulation''.
Later in this overview and in \expref{Section}{s.simplicial-repr} we discuss a representation of polynomials using alternating algebra,
with connections to notions from network flow.
In this representation, polynomials in the vanishing ideal coincide with
circulations,
and instantiations of \(\VanGen\) are the elementary circulations.

We refer to the set $C$ in \expref{Theorem}{thm.ideal.generators} as a
\emph{core}.
The core $C$ plays a similar role as in a combinatorial sunflower except that,
unlike the petals of a sunflower,
the sets $L$ do not need to be disjoint outside the
core.

\begin{example}
  \label{ex.rfe01}
  Consider the special case where $k=0$ and $l=1$.
  The generator for $\VanIdeal{\RFEparam{0}{1}}$
  when \(i_1=1\), \(i_2=2\), and \(i_3=3\) is given by
  \[ \VanGenparam{0}{1}[1,2,3]
    \doteq
    \begin{vmatrix*}
      1 & a_1 x_1 & x_1 \\
      1 & a_2 x_2 & x_2 \\
      1 & a_3 x_3 & x_3 \\
    \end{vmatrix*}
    =
    (a_1-a_2)x_1x_2 + (a_2-a_3)x_2x_3 + (a_3-a_1)x_3x_1.
  \]
  For any fixed $i^* \in[n]$,
  the polynomials $\VanGenparam{0}{1}[S]$ form a generating set of
  minimum size when $S$ ranges over all subsets of $[n]$ of size 3 that contain $C = \{i^*\}$.
  As an aside, they also constitute minimal polynomials not computable by read-once formulas, which is consistent with the fact that $\SVparam{1}$ hits all read-once formulas (see \expref{Theorem}{thm.rof}).
\end{example}

In general, the generators $\VanGenkl$ are nonzero, multilinear, homogeneous
polynomials of degree $l+1$, and they have nonzero coefficients for all multilinear monomials of degree $l+1$.
Each generating set of minimum size in \expref{Theorem}{thm.ideal.generators}
yields a Gröbner basis with respect to every monomial order that
prioritizes the variables outside $C$.
A Gröbner basis is a special generating set that allows solving ideal-membership
queries more efficiently, among other problems in computational algebra \cite{CLO2013,AL1994}.
Computing Gröbner bases for general ideals is exponential-space complete \cite{KuhnleM1996,Mayr1997}.
\expref{Theorem}{thm.ideal.generators} represents a rare instance of a natural and
interesting ideal for which we know a small and explicit Gröbner
basis. See the end of \expref{Section}{s.gens} for more background on Gröbner bases.

To gain some intuition about dependencies between the generators $\VanGenkl$,
note that permuting the order of the variables used in the construction of
$\VanGenkl$ yields the same polynomial or minus that polynomial,
depending on the sign of the permutation.
This follows from the determinant structure of $\VanGenkl$
and is the reason why we need to fix the order of the variables in order to
obtain a generating set of minimum size.
More profoundly,
the following relationship holds for every choice of $k+l+3$ indices
$i_1, i_2, \dots, i_{k+l+3} \in [n]$
and every univariate polynomial $q$ of degree at most $k$:
\begin{equation}\label{eq.generator.dependencies}
  \det
  \begin{bmatrix}
    q(a_{i_j}) & 
    a_{i_j}^{k} & a_{i_j}^{k-1} & \dots & 1 & 
    a_{i_j}^l x_{i_j} & a_{i_j}^{l-1}x_{i_j} & \dots & x_{i_j}
  \end{bmatrix}_{j=1}^{k+l+3} = 0.
\end{equation}
The determinant in \eqref{eq.generator.dependencies} vanishes because the
first column of the matrix is a linear combination of the next $k+1$.
A minor expansion across the first column expresses the determinant
of the matrix as a linear combination of minors,
and each minor is an instantiation of $\VanGenkl$.
Since \eqref{eq.generator.dependencies} vanishes, the minor expansion yields a linear dependency for every nonzero polynomial $q$ of degree at most $k$.
In fact,
when \(\{i_1,\ldots,i_{k+l+3}\}\) varies over subsets of \([n]\) containing a
fixed core of size \(k+1\),
the equations \eqref{eq.generator.dependencies} generate \emph{all} linear
dependencies among instances of $\VanGenkl$.

As corollaries to \expref{Theorem}{thm.ideal.generators}
we obtain the following tight bounds on $\VanIdeal{\RFEkl}$.
The bounds hold for every way to choose the parameters in \expref{Definition}{def.rfegen}, including the abscissas.
\begin{corollary}
    \label{cor.rfe.degree}
    The minimum {\em degree} of a nonzero polynomial in $\VanIdeal{\RFEkl}$
    equals $l+1$.
\end{corollary}
\expref{Corollary}{cor.rfe.degree} proves a conjecture by Fournier and Korwar \cite{FournierKorwar2018}
(additional partial results reported in \cite{Korwar2021}) that there
exists a polynomial of degree $l+1$ in $n=2l+1$ variables that $\SVl$
fails to hit.
The conjecture follows because the generators for $\VanIdeal{\SVl}$ have
degree $l+1$ and use $2l+1$ variables.
See also \expref{Corollary}{cor-gen-sv} in \expref{Section}{s.gens}.

As none of the generators contain a monomial of support $l$ or less, the same holds for every nonzero polynomial in $\VanIdeal{\RFEkl}$.
This extends the known property that $\SVl$ hits every polynomial that contains a monomial of support $l$ or less \cite{ShpilkaVolkovich2008}. 
See \expref{Proposition}{prop.ideal.membership.degree-bound} and \expref{Theorem}{thm.ideal.membership} for a strengthening in the case of multilinear polynomials. 

\begin{corollary}
    \label{cor.rfe.sparseness}
    The minimum {\em sparseness}, \ie, number of monomials, of a nonzero
    polynomial in $\VanIdeal{\RFEkl}$ equals $\binom{k+l+2}{l+1}$.
\end{corollary}
The generators $\VanGenkl$ realize the bound in \expref{Corollary}{cor.rfe.sparseness} as they exactly contain all
multilinear monomials of degree $l+1$ that can be formed out of their
$k+l+2$ variables.
The claim that no nonzero polynomial in $\VanIdeal{\RFEkl}$ contains fewer
than $\binom{k+l+2}{l+1}$ monomials requires an additional combinatorial
argument (see \expref{Lemma}{lem.rfe-hits-sparse}). 
It is a (tight) quantitative strengthening of the known property that
$\SVl$ hits every polynomial with fewer than $2^l$ monomials
\cite{AvMV2011,GKST2015,Forbes2015,FSV2018}.
Note that for $k=l-1$ we have that
$\binom{k+l+2}{l+1} = \binom{2l+1}{l+1} = \Theta(2^{2l}/\sqrt{l})$.
One consequence is that for $\SVl$ to hit all polynomials with $m$ monomials, a seed length of $l=\Omega(\log m)$ is required. In particular, hitting sparse polynomials requires $l = \Omega(\log n)$. 

Another consequence deals with set-multilinearity, a common restriction in works on derandomizing PIT
and algebraic circuit lower bounds.
A polynomial $p$ of degree $l+1$ in a set of variables
$\{x_1,\ldots,x_n\}$ is said to be set-multilinear
if $[n]$ can be partitioned as
$[n] = X_1 \sqcup X_2 \sqcup \dots \sqcup X_{l+1}$
such that every monomial in $p$ is a product
$x_{i_1}\cdot x_{i_2} \cdot \dots \cdot x_{i_{l+1}}$,
where $i_j \in X_j$.
Note that set-multilinearity implies multilinearity but not the other
way around.
As the generators $\VanGenkl$ are not set-multilinear,
it is not immediately clear from \expref{Theorem}{thm.ideal.generators}
whether $\VanIdeal{\RFEkl}$ contains nontrivial set-multilinear
polynomials of any degree.
However, a variation on the construction of the generators $\VanGenkl$
yields explicit set-multilinear homogeneous polynomials in
\(\VanIdeal{\RFEkl}\) of degree $l+1$ where each \(X_j\) has size $k+2$ (see \expref{Definition}{def.smvan1}).
We denote them by $\SMVanOnekl$, where $\SMVanOne$ stands for
``Elementary Set-Multilinear Vandermonde Circulation''.
$\SMVanOnekl$ contains all monomials of the form
$x_{i_1}\cdot x_{i_2} \cdot \dots \cdot x_{i_{l+1}}$ with $i_j \in X_j$.
For any variable partition $(X_1, X_2, \dots, X_{l+1})$ with
$|X_1| = \cdots = |X_{l+1}| = k+2$,
$\SMVanOnekl$ is the only set-multilinear polynomial in $\VanIdeal{\RFEkl}$
with that variable partition, up to a scalar multiple, and exhibits the following extremal property. See also \expref{Theorem}{thm.smvan1}.
\begin{corollary}
    \label{cor.rfe.partition-class-size}
    The minimum {\em partition class size} of a nonzero set-multilinear
    polynomial of degree $l+1$ in $\VanIdeal{\RFEkl}$ equals $k+2$.
\end{corollary}

\paragraph{Membership test.}
Our second characterization of the vanishing ideal of $\RFE$ can be viewed as
a structured membership test.
Given a polynomial \(p\),
there is a generic way to test whether \(p\) belongs to the
vanishing ideal of a generator $G$,
namely by symbolically substituting \(G\) into \(p\) and verifying that the
result simplifies to zero.
When $G$ is a polynomial substitution,
the well-known transformation of a generator into a deterministic blackbox PIT
algorithm 
yields another test:
Verify \(p(G)=0\) for a sufficiently large set of substitutions into the seed
variables.
By clearing denominators,
the same goes for rational substitutions like $\RFEkl$.

While the generic test works, one cannot extract $G$-specific insight into whether or why $G$ hits any particular polynomial. In contrast, our membership test uses the specific structure of $G$ and provides useful insight. Building on the generating set of \expref{Theorem}{thm.ideal.generators},
we state our structured test for membership of multilinear polynomials in $\VanIdeal{\RFEkl}$ in terms of partial derivatives and zero substitutions. Several prior papers demonstrated the utility of those operations in the context of derandomizing PIT using the SV generator,
especially for syntactically multilinear models
\cite{ShpilkaVolkovich2008,KMSV2009,AvMV2011}.

\begin{theorem}[membership test for multilinear polynomials]\label{thm.ideal.membership}
  Let $k, l, n \in \N$ and let $a_i$ for $i \in [n]$ be distinct elements of $\FF$.
  A multilinear polynomial $p \in \FF[x_1,\dots,x_n]$ belongs to
  $\VanIdeal{\RFEkl}$ if and only if both of the following conditions hold:
  \begin{enumerate}
    \item\label{thm.ideal.membership.cond1}
      There are no monomials of degree $l$ or less,
      nor of degree $n-k$ or more, in \(p\).
    \item\label{thm.ideal.membership.cond2}
      For all disjoint subsets $K,L \subseteq [n]$ with $|K|=k$ and $|L|=l$,
      $\PartialZero{L}{K}{p}$ is zero upon the following
      substitution for each \(i\in[n]\setminus (K\cup L)\), where $z$ denotes a fresh variable:
      \begin{equation}\label{eq.thm.ideal.membership.sub}
        x_i \gets z\cdot \frac%
          { \prod_{j \in K} (a_i - a_{j}) }%
          { \prod_{j \in L} (a_i - a_{j}) }.
      \end{equation}
  \end{enumerate}
\end{theorem}

A few technical comments regarding the statement are in order.
The first part of \expref{condition}{thm.ideal.membership.cond1} in
\expref{Theorem}{thm.ideal.membership} generalizes the known property that $\SVl$ hits
every multilinear polynomial that contains a monomial of degree $l$ or less
\cite{ShpilkaVolkovich2008}. As for the second part, see \expref{Proposition}{prop.ideal.membership.degree-bound} for more discussion.  
The two parts together imply that all multilinear polynomials on $n \le k+l+1$ variables are hit by $\RFEkl$.

In \expref{condition}{thm.ideal.membership.cond2},
$\PartialZero{L}{K}{p}$ denotes the polynomial obtained by taking the partial derivative of $p$ with respect to every variable in $L$ and setting all the variables in $K$ to zero.
Because of the multilinearity, the order of the operations does not matter,
and the resulting polynomial depends only on variables in
\([n]\setminus (K\cup L)\). The substitution \eqref{eq.thm.ideal.membership.sub} can be viewed as $x_i \gets f(a_i)$, where 
\begin{equation*}
  f(\alpha) = z \cdot f_{K,L}(\alpha) \doteq z \cdot \frac{
      \prod_{j \in K} (\alpha - a_{j})
    }{
      \prod_{j \in L} (\alpha - a_{j})
    }
\end{equation*}
is a valid seed of \(\RFEkl\) for polynomials in the variables $x_i$, $i \in [n] \setminus (K \cup L)$.
Upon substitution, $\PartialZero{L}{K}{p}$ becomes a univariate polynomial $q$ of degree at most $n-k-l$ in the fresh variable $z$. In the case where $p$ is homogeneous, $q$ has at most one term, and $q$ is nonzero if and only $q$ is nonzero at $z=1$. In general, for any fixed set $Z$ of $n-k-l+1$ elements of $\FF$, $q$ is nonzero if and only if $q$ is nonzero at some $z \in Z$. 

\expref{Theorem}{thm.ideal.membership} can be understood as stating that a multilinear polynomial \(p\) is hit by \(\RFEkl\)
if and only if
\(p\) has a monomial supported on few or all-but-few variables,
or else there is a set of $k$ zero substitutions, \(K\),
and a set of $l$ partial derivatives, \(L\),
whose application to \(p\) leaves a polynomial that is nonzero after
substituting \(x_i \gets z\cdot f_{K,L}(a_i)\).
By judiciously choosing variables for the zero substitutions and partial derivatives, prior papers managed to simplify polynomials $p$ of certain types and reduce PIT for $p$ to PIT for simpler instances, resulting in efficient recursive algorithms.
In \expref{Section}{s.test},
we develop a general framework for such algorithms
and prove correctness directly from \expref{Theorem}{thm.ideal.membership}.
Moreover,
because \expref{Theorem}{thm.ideal.membership} is a precise characterization,
any argument that \(\SV\) or \(\RFE\) hits a class of multilinear polynomials
can be converted into one within our framework,
\ie, into an argument based on zero substitutions and partial derivatives.
Thus, \expref{Theorem}{thm.ideal.membership} shows that these tools harness the complete
power of $\SV$ and $\RFE$ for multilinear polynomials.

\paragraph{Applications.}
We illustrate the utility of our characterizations of the vanishing ideal of
$\RFE$ in the two directions mentioned before.

\subparagraph{Derandomization.}
To start,
we demonstrate how \expref{Theorem}{thm.ideal.membership} yields an alternate
proof of the result from \cite{MinahanVolkovich2017}
that \(\SVparam{1}\)---equivalently, \(\RFEparam{0}{1}\)---hits
every nonzero read-once formula $F$.
Whereas the original proof hinges on a clever ad-hoc argument,
our proof (described in  \expref{Section}{s.test}) is entirely systematic and amounts to
a couple straightforward observations in order to apply
\expref{Theorem}{thm.ideal.membership}.

As a proof of concept of the additional power of our characterization for derandomization,
we make progress in a well-studied model for algebraic computation, namely read-once oblivious algebraic
branching programs (ROABPs). An ROABP consists of a layered digraph, the width of which constitutes an important complexity parameter. We refer to \expref{Section}{s.roabp.background} for more background.

\begin{theorem}[ROABP hitting property]\label{thm.roabp}
  For any integer $l \ge 1$, $\SVl$ hits the class of polynomials computed by
  read-once oblivious algebraic branching programs of width less than
  $(l/3)+1$ that contain a monomial of degree at most $l+1$.
\end{theorem}
To the best of our knowledge,
\expref{Theorem}{thm.roabp} is incomparable to the known results for ROABPs
\cite{RazShpilka2005,JQS2009,JQS2010,ForbesShpilka2013,FSS2014,AGKS2014,AFSSV2015,GKST2015,GKS2016,GuoGurjar2020,SahaThankey2021,BhargavaGhosh2021}.
Without the restriction that the polynomial has a monomial of degree at most
\(l+1\),
\expref{Theorem}{thm.roabp} would imply a fully blackbox polynomial-time identity test
for the class of constant-width ROABPs.
No such test has been proven to exist at this time;
prior work requires either quasipolynomial time or else opening the
blackbox,
such as by knowing the order in which the variables are read.

With the restriction, 
hitting the class in \expref{Theorem}{thm.roabp} with $\SVl$ represents fairly specialized progress. This is because $\SVparam{l+1}$ is well-known to hit every polynomial containing a monomial of support $l+1$ or less, and thus hits the class in \expref{Theorem}{thm.roabp}, irrespective of the restriction on ROABP width. That said, the method of proof of \expref{Theorem}{thm.roabp} diverges
significantly from prior uses of the SV generator
and therefore may be of independent interest.
We elaborate on the method more when we discuss the techniques of this paper, but for now,
we point out that most prior uses of the SV generator rely on combinatorial
arguments,
\ie, arguments that depend only on which monomials are present in
the polynomials to hit.
\expref{Theorem}{thm.roabp} necessarily goes beyond this
because there is a polynomial in \(\VanIdeal{\SVl}\) of degree \(l+1\) that
has the same monomials as a polynomial computed by an ROABP of width 2, which by \expref{Theorem}{thm.roabp} is not in \(\VanIdeal{\SVl}\) for $l \ge 4$.
Namely, any instance of \(\SMVanOneparam{l-1}{l}\) contains exactly all the
monomials of the form \(x_{i_1}\cdot x_{i_2}\cdot \dots \cdot x_{i_{l+1}}\)
with \((i_1,\ldots,i_{l+1})\in X_1\times \cdots \times X_{l+1}\) for some
disjoint sets \(X_j\);
the same goes for \(\prod_j \sum_{i_j\in X_j} x_{i_j}\),
which is computed by an ROABP of width 2.

\subparagraph{Lower bounds.}
Our result for ROABPs also illustrates this direction. Our derandomization result for the class in \expref{Theorem}{thm.roabp} is equivalent to the following lower bound. 
\begin{theorem}[ROABP lower bound]
  \label{thm.roabp.lb}
  For any integer $l \ge 1$, any nonzero polynomial in $\VanIdeal{\SVl}$ that contains a monomial of degree at most $l+1$, requires ROABP width at least $(l/3)+1$.
\end{theorem}
Such a lower bound is interesting because there are appealing polynomials meeting the conditions, in particular the generators $\VanGenparam{l-1}{l}$ as well as $\SMVanOneparam{l-1}{l}$.
Other hardness of representation results follow in a similar manner from prior hitting properties of $\SV$ in the literature.
The following lower bounds apply to computing both
\(\VanGenparam{l-1}{l}\) and \(\SMVanOneparam{l-1}{l}\):
\begin{itemize}
  \item
    Any syntactically multilinear formula must have at least
    \(\Omega(\log(l)/\log\log(l))\) reads of some variable
    \cite[Theorem~6.3]{AvMV2011}.
  \item
    Any sum of read-once formulas must have at least \(\Omega(l)\) summands
    \cite[Corollary~5.2]{MinahanVolkovich2017}.
  \item
    There exists an order of the variables such that any ROABP with that order
    must have width at least \(2^{\Omega(l)}\)
    \cite[Corollary~4.3]{FSS2014}.
  \item
    Any \(\Sigma\mathrm{m}\bigwedge\Sigma\Pi^{O(1)}\) formula must have top
    fan-in at least \(2^{\Omega(l)}\)
    \cite{Forbes2015}; 
    see also \cite[Lemma~5.12]{FSV2018}.
  \item
    Lower bounds over characteristic zero for circuits with locally-low
    algebraic rank
    \cite[Lemma~5.2]{KumarSaraf2016}.
\end{itemize}

\paragraph{Techniques.}
Many of our results ultimately require showing that,
under suitable conditions,
\(\RFE\) hits a polynomial \(p\).
A recurring analysis fulfills this role in the proofs of
\expref{Theorems}{thm.ideal.generators},~\ref{thm.ideal.membership},~and~\ref{thm.roabp}.
We take intuition from the analytic setting (\eg, \(\FF = \R\))
and study the behavior of \(p(\RFE)\) as a function of the seed's zeroes and poles. When they are near the abscissas of chosen variables of
\(p\), the behavior is dominated by the contributions of the monomials of \(p\)
for which
the variables with abscissas near zeros have minimal degree and
the variables with abscissas near poles have maximal degree.
This allows us to analyze a first approximation to \(p(\RFE)\) by ``zooming in'' on
the contributions of the monomials in which the chosen
variables have extremal degrees.
If the first approximation is nonzero,
then we can conclude that \(\RFE\) hits \(p\).
We capture the technique in our Zoom Lemma (\expref{Lemma}{lem.multi-zsub-deriv}).
Formal Laurent series can express the analytic intuition purely algebraically. We provide a proof from first principles that does not require any background in Laurent series and works over all fields.

\expref{Theorem}{thm.ideal.generators} states the equality $I = \VanIdeal{\RFEkl}$ of two ideals,
where $I$ denotes the ideal generated by all instantiations of $\VanGenkl$,
and $\VanIdeal{\RFEkl}$ the vanishing ideal of $\RFEkl$.
\begin{itemize}
  \item
    The inclusion $\subseteq$ follows from linearizing the defining equations
    of $\RFEkl$ (\expref{Lemma}{lemma.vangen-vanishes}).
    The technique mirrors the use of resultants to compute implicit equations
    for rational plane curves.
    This is where the switch from $\SV$ to $\RFE$ helps. 
  \item
    To establish the inclusion $\supseteq$ we first show that the
    equivalence class of any polynomial $p$ modulo $I$
    contains a representative $r$ whose monomials exhibit the combinatorial
    structure of a core (\expref{Lemma}{lem.vangen-coredform}). If $p \not\in I$, $r$ is nonzero.
    The core structure of $r$ then allows us to apply the zooming-in mechanism such that the resulting first approximation to $r$ is nonzero, in which case the \hyperref[lem.multi-zsub-deriv]{Zoom Lemma} tells us that $\RFEkl$ hits $r$ (\expref{Lemma}{lem.rfe-hits-cored}). By the inclusion $\subseteq$, we conclude that $\RFEkl$ hits $p$.
\end{itemize}

The proof of \expref{Theorem}{thm.ideal.membership} also relies on the
\hyperref[lem.multi-zsub-deriv]{Zoom Lemma}. 
Membership to the ideal is equivalent to the vanishing of all coefficients of
the expansion of $p(\RFE)$.
The application of the Zoom Lemma can be viewed as determining a small number of coefficients
sufficient to guarantee that their vanishing implies all coefficients vanish.
The restriction to multilinear polynomials $p$ allows us to express the
zoomed-in contributions of $p$ as the result of applying partial derivatives and
zero-substitutions.

\expref{Theorem}{thm.roabp} makes use of the characterization of the minimum width
of a read-once oblivious algebraic branching program computing a polynomial
$p$ as the maximum rank of the monomial coefficient matrices of $p$ for
various variable partitions \cite{Nisan1991}.
The result is effectively about polynomials \(p\) that are homogeneous of degree $l+1$,
in which case the monomial coefficient matrices have a block-diagonal structure with $l+2$ blocks.
An application of the \hyperref[lem.multi-zsub-deriv]{Zoom Lemma} in the contrapositive yields linear equations
between elements of consecutive blocks under the assumption that $\SVl$ fails to hit $p$.
When some block is zero, the equations yield a Cauchy system on the rows or columns of its
neighboring blocks. Based on the fact that Cauchy systems have full rank and exploiting the specific structure, we deduce several constraints on the row-space/column-space of the neighboring blocks. A careful analysis and case analysis based on the number of zero blocks yields a rank lower bound of at least $(l/3)+1$ for a well-chosen partition of the variables.

We point out that, in the preceding application,
the \hyperref[lem.multi-zsub-deriv]{Zoom Lemma} is instantiated several times in parallel to form a large
system of equations on the coefficients of \(p\),
and the whole system is necessary for the analysis.
This stands in contrast to most prior work using \(\SV\),
which can be cast as using knowledge of how \(p\) is computed to guide a search for a
\emph{single} fruitful instantiation of the \hyperref[lem.multi-zsub-deriv]{Zoom Lemma}.

\paragraph{Alternating algebra representation.}
The inspiration for several of our results stems from expressing the
polynomials $\VanGenkl$ using concepts from alternating algebra
(also known as exterior algebra or Grassmann algebra).
In fact, the membership test for the ideal generated by the instantiations of $\VanGenkl$ in \expref{Theorem}{thm.ideal.membership} is based on the relationship \(\bdry^2=0\) from alternating algebra.
Our original statement and proof of the theorem made use of that framework,
but we managed to eliminate the alternating algebra afterwards.
Still, as we find the perspective insightful and potentially helpful for
future developments,
we describe the connection briefly here
and in more detail in \expref{Section}{s.simplicial-repr}.
We explain the intuition for the
simple case where the degree of the polynomial $p$ equals $l+1$.
In that setting, belonging to the ideal generated by the polynomials
$\VanGenkl$  is equivalent to being in their linear span.

The alternating algebra $\Simplices$ of a vector space $\SpanVertices$ over a field $\FF$ consists
of the closure of $\SpanVertices$ under an additional binary operation,
referred to as ``wedge'' and denoted $\wedge$,
which is bilinear, associative, and satisfies
\begin{equation}\label{eq.lindepzero}
  u \wedge u = 0
\end{equation}
for every $u \in \SpanVertices$.
This determines a well-defined algebra.
When the characteristic of $\FF$ is not 2,
\eqref{eq.lindepzero} can equivalently be understood as anti-commutativity:
\begin{equation}\label{eq.anti-commute}
  u_1 \wedge u_2 = -(u_2 \wedge u_1)
\end{equation}
for every $u_1, u_2\in \SpanVertices$.
For any characteristic and $u_1, u_2, \dots, u_t \in \SpanVertices$,
\begin{equation} \label{eq.wedge}
  u_1 \wedge u_2 \wedge \dots \wedge u_t
\end{equation}
is nonzero iff the $u_i$'s are linearly independent,
and any permutation of the order of the vectors in \eqref{eq.wedge} yields the
same element of $\Simplices$ up to a sign.
The sign equals the sign of the permutation,
whence the name ``alternating algebra.''
If $\SpanVertices$ has a basis $\Vertices=\{v_1,\dots,v_n\}$ of size $n$,
then a basis for $\Simplices$ can be formed by all $2^n$ expressions of the form
\eqref{eq.wedge},
where the $u_i$'s range over all subsets of $\Vertices$
and are taken in some fixed order.
Considering the elements of $\Vertices$ as vertices,
the basis elements of $\Simplices$ can be thought of as the oriented simplices of all
dimensions that can be built from $\Vertices$.

Anti-commutativity arises naturally in the context of network flow,
where $V$ denotes the vertices of the underlying graph,
and a wedge $v_1 \wedge v_2$  of level $t=2$ represents one unit of flow from
$v_1$ to $v_2$.
Equation~\eqref{eq.anti-commute} reflects the fact that one unit of flow
from $v_1$ to $v_2$ cancels with one unit of flow from $v_2$ to $v_1$.
The adjacent levels $t=1$ and $t=3$ also have natural interpretations in the
flow setting:
$v_1$ (the element of $\Simplices$ of the form \eqref{eq.wedge} with $t=1$) represents
one unit of surplus flow at $v_1$ (the vertex of the graph),
and $v_1 \wedge v_2 \wedge v_3$ abstracts a circulation of one
unit along the directed cycle $v_1 \to v_2 \to v_3 \to v_1$.

The different levels are related by so-called boundary maps, which are linear transformations that map a simplex to a linear
combination of its subsimplices of one dimension less.
The maps are parametrized by a linear weight function $w: \SpanVertices \to \FF$,
and defined on the vertices by
\begin{equation}\label{eq.boundary}
  \bdry_w:
  v_1 \wedge v_2 \wedge \dots \wedge v_t
  \mapsto
  \sum_{i=1}^t
    (-1)^{i+1}
    w(v_i) \;
    v_1 \wedge \dots \wedge v_{i-1} \wedge v_{i+1} \wedge \dots \wedge v_t,
\end{equation}
an expression resembling the minor expansion of a determinant along a column
$[ w(v_i) ]_{i=1}^t$.
In the flow setting,
using $w \equiv 1$,
applying \(\bdry_1\) to \(v_1 \wedge v_2\)
yields $v_2 - v_1$, the superposition of demand at $v_1$
and surplus at $v_2$ corresponding to one unit of flow from $v_1$ to $v_2$.
Likewise, $\bdry_1$ sends the abstract elementary circulation
\(v_1 \wedge v_2 \wedge v_3\) to the superposition of the three edge flows
that make up the directed 3-cycle $v_1\to v_2 \to v_3 \to v_1$. 
A linear combination $p$ of terms \eqref{eq.wedge} with $t=2$ represents a
valid circulation iff it satisfies conservation of flow at every vertex,
which can be expressed as $\bdry_1(p)=0$,
\ie, $p$ is in the kernel of $\bdry_1$.
An equivalent criterion is for $p$ to be the superposition of
circulations around directed 3-cycles,
which can be expressed as $p$ being in the image of $\bdry_1$.
The relationship $\im(\bdry_w) = \ker(\bdry_w)$ between the image and the kernel of a boundary map holds for any nonzero $w$, and generalizes to composed boundary maps:
For any linearly independent \(w_1, \ldots, w_{k+1}\), it holds that
\begin{equation}\label{eq.image.kernel}
  \im\left(
    \bdry_{w_{k+1}} \circ \bdry_{w_{k} }\circ \dots \circ \bdry_{w_1}
  \right)
  =
  \bigcap_{r=1}^{k+1} \ker\left( \bdry_{w_r} \right).
\end{equation}
When \(w_1, \ldots, w_{k+1}\) are linearly dependent, \(\bdry_{w_{k+1}} \circ \dots \circ \bdry_{w_1}\) is the zero map.

In the context of \(\RFE\),
the set $V$ consists of a distinct vertex $v_i$ for each variable $x_i$,
and simplices correspond to multilinear monomials.
The anti-commutativity of $\wedge$ coincides with the fact that swapping two
arguments to \(\VanGenkl\) means swapping two rows in \eqref{eq.def.gen},
which changes the sign of the determinant.
Using boundary maps, $\VanGenkl[i_1,i_2,\dots,i_{k+l+2}]$
can be viewed as
$\bdry_{\omega} (v_{i_1} \wedge v_{i_2} \wedge \dots \wedge v_{i_{k+l+2}})$, 
where
\(\bdry_{\omega} \doteq \bdry_{w_{k+1}} \circ \bdry_{w_k} \circ \dots \circ
\bdry_{w_1}\) and
$w_r(v_i) \doteq (a_i)^{r-1}$.
By \eqref{eq.image.kernel},
this means that $\VanGenkl$ is in the kernel of $\bdry_{w_r}$ for each
$r \in [k+1]$,
or equivalently,
in the kernel of $\bdry_w$ for each $w: \SpanVertices \to \FF$
of the form $w(v_i)=q(a_i)$
where $q$ is a polynomial of degree at most $k$.
In fact,
\eqref{eq.image.kernel} implies that the linear span of the generators
$\VanGenkl$ consists exactly of the polynomials of degree $l+1$ in this
kernel. The linear span coincides with the polynomials of degree $l+1$ in the ideal generated by the polynomials $\VanGenkl$. For multilinear polynomials, being in the kernel can be expressed in terms of zero substitutions and partial derivatives as in \expref{Theorem}{thm.ideal.membership}. This yields an alternate route for deriving our membership test for multilinear polynomials of degree $d=l+1$ in the ideal generated by the instantiations of $\VanGenkl$, which by \expref{Theorem}{thm.ideal.generators} agrees with $\VanIdeal{\RFEkl}$. In the basic case where $k=0$ and $l=1$, only the weight function $w \equiv 1$ needs to be considered and the kernel requirement coincides with flow conservation. We refer to \expref{Section}{s.simplicial-repr} for the general multilinear case of arbitrary degree. 

\paragraph{Related recent work and further research.}
We propose to systematically investigate the power of generators by characterizing their vanishing ideals. As we demonstrated for $\SV$ and $\RFE$, such characterizations can exhibit both strengths and weaknesses of the generator. 

Specific other generators of interest include Klivans--Spielman \cite{KS2001} and generators based on the matrix rank condenser by Gabizon and Raz \cite{GR2008,KS2011,FS2012}.  A related direction is figuring out how vanishing ideals are affected when manipulating generators. Examples include the $\RFE$ generator with pseudorandom abscissas, or work that relates the vanishing ideal of a combination of generators to the vanishing ideals of the constituent generators. In particular, a combination of $\SV$ with Klivans--Spielman appears in the literature \cite{KMSV2009, FSS2014, FSV2018}, where the latter is used to effectively hit sparse polynomials, which our results show that $\SV$ does not.

The generator $\SVl$ is the canonical example of an $l$-wise independent generator in the algebraic setting. Understanding the power of $l$-wise independent generators more broadly, \eg, as formalized in \cite{FSTW21, MediniShpilka2021}, could lead to useful insights for derandomizing PIT. This work demonstrates explicit polynomials like \(\VanGenparam{l-1}{l}\) and \(\SMVanOneparam{l-1}{l}\) that are not automatically hit by \(l\)-wise independence as they are not hit by \(\SVl\). Is there a deeper underlying reason related to $l$-wise independence?

A generator hits all polynomials from a resource-bounded class iff no nonzero polynomial in the vanishing ideal can be computed within those resources. Chatterjee and Tengse \cite{CT2023} recently showed the following generic limitation: The vanishing ideal of any generator computable by algebraic circuits of polynomial size in the number of variables contains a nonzero polynomial computable in VPSPACE. From this perspective, our results exhibit a weakness of $\SV$ and $\RFE$ in that their vanishing ideals contain nonzero polynomials from the presumably much smaller class VBP. In fact, $\VanGenkl$ is a polynomial depending on only $k+l+2$ variables and is computable by a branching program of size polynomial in the number of variables. Thus, in order to hit all branching programs of size $s$, $\SV$ and $\RFE$ require a seed length $k+l+2=s^{\Omega(1)}$. 

A related question is whether the generators we have identified have minimal (or approximately minimal) complexity in the vanishing ideal. Andrews and Forbes \cite{AndrewsForbes2022} recently established such a result for a generator that substitutes an $n\times m$ matrix of variables with the product of $n\times l$ and $l \times m$ matrices of variables for small $l$. The vanishing ideal of their generator is straightforwardly generated by $(l+1) \times (l+1)$ minors. For this vanishing ideal the authors manage to show that every nonzero element is at least as hard as computing $\Theta(l^{1/3})\times \Theta(l^{1/3})$ determinants (under simple reductions and in the sense of border complexity). 

Lastly, we list some avenues for improving specific aspects of our results. \expref{Theorem}{thm.ideal.membership} represents an elementary deterministic membership test in the vanishing ideal of $\RFEkl$ for \emph{multilinear} polynomials. Can the elementary test can be extended to all polynomials? From the alternating algebra perspective, the test relies an the convenient one-to-one correspondence between multilinear polynomials and elements of the alternating algebra. For general polynomials, this correspondence is no longer one-to-one, and the resulting membership test is nondeterministic.

Another target is eliminating degree restrictions for our characterizations of specialized classes of polynomials, in particular in \expref{Theorem}{thm.roabp} for ROABPs.
Removing the degree restriction for ROABPs would result in a full blackbox derandomization of constant-width ROABPs. An alternative possibility is that, through better analysis of the vanishing ideal, it turns out that $\RFE$ has limitations in derandomizing constant-width ROABPs.

\paragraph{Organization.}
We start in \expref{Section}{s.rfe-formal} with formal aspects of the $\RFE$ generator that have been omitted from the informal discussion thus far. 
We construct the generating set for the vanishing ideal
(\expref{Theorem}{thm.ideal.generators}) in \expref{Section}{s.gens},
followed by the \hyperref[lem.multi-zsub-deriv]{Zoom Lemma} in \expref{Section}{s.zoom}.
The ideal membership test (\expref{Theorem}{thm.ideal.membership}) is developed in
\expref{Section}{s.test}.
We present the results on sparseness in \expref{Section}{s.sparse},
and the ones on set-multilinearity in \expref{Section}{s.set-multiaffine}.
Background on ROABPs and our result on derandomizing PIT for ROABPs
(\expref{Theorem}{thm.roabp}) are covered in \expref{Section}{s.constant-width-roabp}.
We end our paper in \expref{Section}{s.simplicial-repr} with a further discussion
of the alternating algebra representation and an alternate derivation of the membership test for multilinear polynomials in the ideal generated by the instances of $\VanGenkl$.

\section{RFE Generator}
\label{s.rfe-formal}

We defined the $\RFE$ generator in the overview but omitted some of the formal details. In this section, we discuss different parametrizations of $\RFE$ as well as how to obtain deterministic blackbox PIT algorithms from a generator and how large the underlying field $\FF$ must be. We also state and establish the precise relationship between $\RFElml$ and $\SVl$.

In \expref{Definition}{def.rfegen},
we defined \(\RFE\) as a set of substitutions formed by varying the seed \(f\)
over certain rational functions with coefficients in \(\FF\).
Meanwhile,
our analyses proceed by parametrizing \(f\) by scalars,
abstracting the scalar parameters as fresh formal variables,
and calculating in the field of rational functions in those variables.
The approaches are equivalent over large enough fields, 
and the flexibility to choose is a source of convenience.
Here are some natural parametrizations of \(f\):
\begin{description}
  \item[Coefficients.]
    Select scalars \(g_0, \ldots, g_k, h_0,\ldots, h_l \in\FF\) and
    set
    \[
      f(\alpha) =
        \frac{
          g_k\alpha^k
          + g_{k-1} \alpha^{k-1}
          + \cdots
          + g_1 \alpha
          + g_0
        }{
          h_l\alpha^l
          + h_{l-1} \alpha^{l-1}
          + \cdots
          + h_1 \alpha
          + h_0
      },
    \]
    ignoring choices of \(h_0,\ldots,h_l\) for which the denominator
    vanishes at some abscissa.

  \item[Evaluations.]
    Fix two collections, \(B = \{b_1,\ldots,b_{k+1}\}\) and
    \(C = \{c_1,\ldots,c_{l+1}\}\), each of distinct scalars from \(\FF\).
    Then select scalars \(g_1, \ldots, g_{k+1}\) and \(h_1,\ldots,h_{l+1}\)
    and set
    \[
      f(\alpha) = \frac{ g(\alpha) }{ h(\alpha) }
    \]
    where \(g\) is the unique degree-\(k\) polynomial with
    \(g(b_1)=g_1\), \(g(b_2)=g_2\), \ldots, \(g(b_{k+1}) = g_{k+1}\),
    and \(h\) is defined similarly with respect to \(C\).
    Choices of \(h_1,\ldots,h_{l+1}\) that lead \(h\) to vanish at some
    abscissa are ignored.

    Note that an explicit formula for \(g\) and \(h\) in terms of the
    parameters can be obtained using the Lagrange interpolants with
    respect to \(B\) and \(C\).

  \item[Roots.]
    Select scalars \(z, s_1,\ldots,s_{k'}, t_1,\ldots,t_{l'} \in \FF\)
    for some \(k' \le k\) and \(l'\le l\) and set
    \[
      f(\alpha) = z \cdot
        \frac{ (\alpha - s_1) \cdot \cdots \cdot (\alpha - s_{k'}) }%
             { (\alpha - t_1) \cdot \cdots \cdot (\alpha - t_{l'}) },
    \]
    where \(\{t_1,\ldots,t_{l'}\}\) is disjoint from the set of abscissas.

    In fact, it is no loss of power to restrict to \(k'=k\) and
    \(l'=l\).
\end{description}
Hybrids are of course possible, too.
For example, \expref{Proposition}{prop.rfe-equiv-sv} below uses the evaluations
parametrization for the numerator and roots parametrization for the
denominator.

The following lemma justifies that,
for any polynomial \(p\),
as long as \(\FF\) is large enough,
\(p(\RFE)\) vanishes with respect to a particular parametrization of \(\RFE\)
if and only if
it vanishes with respect to \(\RFE\) as defined in \expref{Definition}{def.rfegen}.
The lemma is an %
immediate consequence %
of the well-known analogous result for polynomials, sometimes referred to as the
Polynomial Identity Lemma   %
\cite{Ore1922,DeMilloLipton1978,Zippel1979,Schwartz1980,arvind19}.

\begin{lemma}
  \label{lem.schwz}
  Let \(\FF\) be field,
  and \(f = g/h\in\FF(\tau_1,\ldots,\tau_l)\) be a rational function in \(l\)
  variables with \(\deg(g) \le d\) and \(\deg(h)\le d\).
  Let \(S\subseteq\FF\) be finite.
  Then the probability that \(f\) vanishes or is undefined when each
  \(\tau_i\) is substituted by a uniformly random element of \(S\)
  is at most \(2d/|S|\).
\end{lemma}
\begin{proof}
The rational function $f$ vanishes or is undefined if and only if the polynomial $p \doteq =gh$ vanishes, which happens with probability at most $\deg(p)/|S|$
according to the Polynomial Identity Lemma.
\end{proof}

In particular,
if \(\FF\) is infinite,
then, for all polynomials \(p\),
all the above parametrizations and \expref{Definition}{def.rfegen}
are equivalent for the purposes of hitting \(p\);
when \(p\) is fixed,
the equivalence holds provided \(|\FF| \ge \poly(n, \deg(p))\).
Quantitative bounds on the number of substitutions to perform
when testing whether \(\RFE\) hits \(p\) in the blackbox algorithm
likewise follow from \expref{Lemma}{lem.schwz}.
As is customary in the context of blackbox derandomization of PIT,
if $\FF$ is not large enough,
then one works instead over a sufficiently large extension of $\FF$.

\medskip

We now formally state and argue the close relationship between $\RFElml$ and $\SVl$ that we sketched in \expref{Section}{s.intro}.
\begin{proposition}
  \label{prop.rfe-equiv-sv}
  Let $l$ and $n$ be positive integers.
  There is an invertible diagonal transformation \(A : \FF^n\to\FF^n\)
  such that, for any polynomial \(p\in\FF[x_1,\ldots,x_n]\),
  \(p(\SVl) = 0\) if and only if \((p\circ A)(\RFElml) = 0\).
\end{proposition}
In particular, the vanishing ideals of $\RFElml$ and of $\SVl$ are the
same up to the rescaling of \expref{Proposition}{prop.rfe-equiv-sv}.

\begin{proof}[Proof of \expref{Proposition}{prop.rfe-equiv-sv}]
  Let \(\FFext\) be the field of rational functions in indeterminates
  \(\upsilon_1\), \ldots, \(\upsilon_l\), \(\zeta_1\), \ldots, \(\zeta_l\)
  over \(\FF\).
  A polynomial \(p\in\FF[x_1,\ldots,x_n]\) has \(p(\SVl)=0\)
  if and only if
  \(p\) vanishes at the point
  \begin{equation}
    \label{eq:prop.rfe-equiv-sv:sv}
    \left(
      \sum_{t=1}^l \zeta_t \prod_{j\in [n]\setminus\{i\}} \frac{\upsilon_t -
      a_{j}}{a_i - a_{j}}
      \;:\; i \in [n]
    \right) \in \FFext^n.
  \end{equation}
  Set \(A : \FF^n \to \FF^n\) to be the diagonal linear transformation
  that divides the coordinate for \(x_i\) by
  \(\prod_{j\in [n]\setminus\{i\}} (a_i - a_{j})\).
  $A$ is invertible.
  Applying \(A^{-1}\) to \eqref{eq:prop.rfe-equiv-sv:sv} yields the point
  \begin{equation}
    \label{eq:thm.rfe-equiv-sv:svnonnorm}
    \left(
      \sum_{t=1}^l
        \zeta_t
        \prod_{j\in [n]\setminus\{i\}}
          (\upsilon_t - a_{j})
      \;:\; i \in [n]
    \right)
    =
    \left(
      \sum_{t=1}^l
        \left(
          \zeta_t
          \prod_{j\in [n]} (\upsilon_t - a_{j})
        \right)
      \frac{1}{\upsilon_t - a_i}
      \;:\; i \in [n]
    \right).
  \end{equation}
  \(p\) vanishes at \eqref{eq:prop.rfe-equiv-sv:sv} if and only if
  \(p\circ A\) vanishes at \eqref{eq:thm.rfe-equiv-sv:svnonnorm}.
  Now let \(\FFext'\) be the field of rational functions in indeterminates
  \(\tau_1\), \ldots, \(\tau_l\), \(\sigma_1\), \ldots,
  \(\sigma_l\) over \(\FF\).
  After the change of variables
  \begin{equation*}
    \zeta_t \gets
      \frac{1}{\prod_{j\in [n]}(\tau_t-a_{j})}
      \cdot
      \frac{-\sigma_t}{\prod_{s \ne t}(\tau_t - \tau_{s})}
      \quad
      \text{and}
      \quad
    \upsilon_t \gets \tau_t
  \end{equation*}
  \eqref{eq:thm.rfe-equiv-sv:svnonnorm} becomes
  \begin{equation}
    \label{eq:prop.rfe-equiv-sv:sv-cov}
    \left(
      \sum_{t=1}^l
        \frac{\sigma_t}{\left(\prod_{s \ne t} \tau_t - \tau_{s}\right)}
        \frac{1}{a_i - \tau_t}
      \;:\; i \in [n]
    \right)
    =
    \left(
      \frac{
        \sum_{t=1}^l
          \sigma_t
          \prod_{s \ne t} \frac{a_i - \tau_{s}}%
                               {\tau_t - \tau_{s}}
      }{
        \prod_{t=1}^l a_i - \tau_t
      }
      \;:\; i \in [n]
    \right)
    \in \FFext'^n.
  \end{equation}
  Since the change of variables is invertible,
  \(p\circ A\) vanishes at \eqref{eq:thm.rfe-equiv-sv:svnonnorm}
  if and only if it vanishes at \eqref{eq:prop.rfe-equiv-sv:sv-cov}.

  Now, viewing \(\sigma_1,\ldots,\sigma_l, \tau_1,\ldots,\tau_l\) as
  seed variables,
  observe that the right-hand side of \eqref{eq:prop.rfe-equiv-sv:sv-cov}
  is \(\RFElml(g/h)\)
  where
  \(g\) is parametrized by evaluations (\(g(\tau_t)=\sigma_t\))
  and
  \(h\) is parametrized by roots (\(\tau_1,\ldots,\tau_l\)).
  It follows that \(p\circ A\) vanishes at
  \eqref{eq:prop.rfe-equiv-sv:sv-cov} if and only if \((p\circ A)(\RFElml)
  = 0\).
\end{proof}

\section{Generating Set}
\label{s.gens}

In this section,
we establish \expref{Theorem}{thm.ideal.generators},
our characterization of the vanishing ideal of $\RFE$ in terms of an explicit
generating set.
For every $k,l \in \N$,
we develop a template, $\VanGenkl$, for constructing polynomials that belong
to the vanishing ideal of $\RFEkl$
such that all instantiations collectively generate the vanishing ideal.

The template can be derived in the following fashion.
Fix any seed \(f\) of \(\RFEkl\),
and write it as \(f = g/h\)
where \(g(\alpha) = \sum_{d=0}^k g_d \alpha^d\) and
\(h(\alpha) = \sum_{d=0}^l h_d \alpha^d\) are respectively polynomials of
degree \(k\) and \(l\).
For any \(i \in [n]\),
the polynomial \(g(a_i)/h(a_i) - x_i \in \FF[x_1,\ldots,x_n]\) vanishes by
definition at \(\RFEkl(f)\).
While this polynomial varies with \(f\),
it does so uniformly.
Specifically,
after rescaling to \(g(a_i) - h(a_i)x_i\),
the polynomial depends only \emph{linearly} on the coefficients of \(g\) and
\(h\).
We exploit this uniformity to construct a polynomial
that vanishes at \(\RFEkl(f)\)
but that now is \emph{independent} of \(f\).
Since \(f\) is arbitrary,
the constructed polynomial belongs to the vanishing ideal of \(\RFEkl\).

The construction begins by expressing the vanishing of each
\(g(a_i) - h(a_i)x_i\) at \(\RFEkl(f)\)
as the following system of equations.
Abbreviating
\begin{align*}
  \vec{g} &\doteq
  \begin{bmatrix}
    g_k & g_{k-1} & \dots & g_1 & g_0
  \end{bmatrix}^\T
  \\
  \vec{h} &\doteq
  \begin{bmatrix}
    h_l & h_{l-1} & \dots & h_1 & h_0
  \end{bmatrix}^\T,
\end{align*}
we write
\begin{equation}\label{eq.vangen.cons}
  \begin{bmatrix}
    a_i^k     & a_i^{k-1}    & \dots & 1 &
    a_i^l x_i & a_i^{l-1}x_i & \dots & x_i
  \end{bmatrix}_{i \in [n]}
  \cdot
  \begin{bmatrix}
    \vec{g} \\
    -\vec{h}
  \end{bmatrix}
  = 0.
\end{equation}
Written this way, \eqref{eq.vangen.cons} has the form of a homogeneous system
of linear equations.
There is one equation for each \(i \in [n]\)
and one unknown for each of the \(k+l+2\) parameters of the seed \(f\).
The system's coefficient matrix has no dependence on \(f\), but 
for any \(f\), substituting \(\RFEkl(f)\) into \(x_1, \ldots, x_n\)
yields a system that has a nontrivial solution,
namely the vector in \eqref{eq.vangen.cons}.

Consider, then,
the determinant of the square subsystem of \eqref{eq.vangen.cons} formed by
any \(k+l+2\) rows.
It is a polynomial in \(\FF[x_1, \ldots, x_n]\).
Because the coefficient matrix in \eqref{eq.vangen.cons} is independent of
\(f\),
the determinant is independent of \(f\).
Because the subsystem has a nonzero solution after substituting \(\RFEkl(f)\)
for any \(f\),
the determinant vanishes after substituting \(\RFEkl(f)\) for any \(f\).
We conclude that the determinant belongs to the vanishing ideal of \(\RFEkl\).

Recalling that the determinant for the subsystem consisting of rows
\(i_1,\ldots,i_{k+l+2}\) is identically
$\VanGenkl[i_1,i_2, \dots, i_{k+l+2}]$,
we have established:
\begin{lemma}
  \label{lemma.vangen-vanishes}
  For every \(k,l \in \N\) and $i_1,i_2,\dots,i_{k+l+2} \in [n]$,
  \(\VanGenkl[i_1, \ldots, i_{k+l+2}] \in \VanIdeal{\RFEkl}\).
\end{lemma}

As we explain at the end of this section, the above derivation is where our use of $\RFE$ in lieu of $\SV$ plays a critical role. Before moving on, we also point a few elementary properties and an provide an explicit expression for the coefficients of $\VanGenkl$ as products of Vandermonde determinants in the abscissas and a sign term. We introduce the following notation for the underlying Vandermonde matrices.
\begin{definition}[Vandermonde matrix]
For $T=\{i_1, \dots, i_t\} \subseteq [n]$  with $i_1 < \dots < i_t$, we abbreviate the Vandermonde matrix built from $a_i$ for $i \in T$ in increasing order as
\begin{equation}
  \label{eq.vandermonde-rep}
  \Amat{T}
  \doteq
  \begin{bmatrix}
    a_{i_1}^{t-1} & \cdots & 1 \\
    \vdots & & \vdots \\
    a_{i_t}^{t-1} & \cdots & 1
  \end{bmatrix}.
\end{equation}
\end{definition}
The sign term makes use of the following standard combinatorial quantity.
\begin{definition}[cross inversions]
For $A,B \subseteq [n]$, $\XInv(A,B) \doteq |\{ (a,b)\in A\times B \mid a>b \}|$ denotes the number of \emph{cross inversions} between $A$ and $B$.
\end{definition}
\begin{proposition}\label{prop.vangen.basic}
  For every \(k,l \in \N\), 
  \(\VanGenkl\) is skew-symmetric in that,
  for any \(i_1,\ldots,i_{k+l+2} \in [n]\) and permutation \(\pi\) of $[k+l+2]$,
  \[
    \VanGenkl[i_1,\ldots, i_{k+l+2}]
    =
    \sgn(\pi)\cdot
    \VanGenkl[i_{\pi(1)},\ldots,i_{\pi(k+l+2)})].
  \]
  For any $S \subseteq [n]$ with $|S|=k+l+2$, $\VanGenkl[S]$ is a nonzero, multilinear, and homogeneous polynomial of total degree \(l+1\),
  and every multilinear monomial of degree $l+1$ in
  \(x_{i_1},\ldots,x_{i_{k+l+2}}\) appears with a nonzero coefficient. More specifically, for $S = \{i_1, \dots, i_{k+l+2}\} \subseteq [n]$ with $i_1 < i_2 < \dots, i_{k+l+2}$,
  \begin{equation}
    \label{eq.EVC.expanded}
      \VanGenkl[S] = \sum_{\substack{K\sqcup L= S \\ |L|=l+1}}
      \gamma_{K,L} \cdot \prod_{i \in L} x_i,
  \end{equation}
  where
  \begin{equation} \label{eq-EVC-coef}
    \gamma_{K,L} \doteq (-1)^{\XInv(K,L)} \cdot \Adet{K} \cdot \Adet{L}.
\end{equation}    
\end{proposition}
\begin{proof}
  All assertions follow from
  elementary properties of determinants, that Vandermonde
  determinants are nonzero unless they have duplicate rows,
  and the following computation. The coefficient $\gamma_{K,L}$ can be obtained by plugging in 0 for $x_i$ with $i \in K$, and 1 for $x_i$ with $i \in L$. For $K^*$ consisting of the first $k+1$ elements of $S$ and $L^*$ of the last $l+1$, this yields the determinant
  \begin{equation}\label{eq.det.ordered}
    \begin{vmatrix*}
      a_{i_1}^k     & \cdots & 1      && 0      & \cdots & 0 \\
      \vdots        & \ddots & \vdots && \vdots & \ddots & \vdots \\
      a_{i_{k+1}}^k & \cdots & 1      && 0      & \cdots & 0 \\
      \\
      *      & \cdots & *      && a_{i_{k+2}}^l   & \cdots & 1 \\
      \vdots & \ddots & \vdots && \vdots          & \ddots & \vdots \\
      *      & \cdots & *      && a_{i_{k+l+2}}^l & \cdots & 1
    \end{vmatrix*},
  \end{equation}
  which equals the product of the Vandermonde matrices $\Adet{K^*}$ and $\Adet{L^*}$, and confirms the expression for $\gamma_{K^*,L^*}$ as $\XInv(K^*,L^*)=0$. For general $K$ and $L$, we obtain a determinant with the same shape as \eqref{eq.det.ordered} after rearranging the rows such that the ones involving $a_i$ for $i \in K$ appear first and in order, and the ones involving $a_i$ for $i \in L$ appear last and in order. By skew-symmetry, the rearrangement induces an additional factor of $\sgn(\pi) = (-1)^{\Inv(\pi)}$, where $\pi$ denotes the underlying permutation of $[k+l+2]$ and $\Inv(\pi)$ denotes the number of inversions of $\pi$, which equals $\XInv(K,L)$. 
\end{proof}

\expref{Lemma}{lemma.vangen-vanishes} shows that the polynomials $\VanGenkl[i_1,\dots,i_{k+l+2}]$ belong to the vanishing ideal of $\RFEkl$. In fact, various subsets of them \emph{generate} the vanishing ideal. To prove that a certain subset does so, we establish the following two steps:
\begin{enumerate}
  \item
    Modulo the ideal $I$ generated by the subset, 
    every polynomial $p$ is equal
    to a polynomial $r$ with a particular combinatorial structure
    (\expref{Lemma}{lem.vangen-coredform}).
  \item
    Every nonzero polynomial $r$ with that structure is hit by
    $\RFEkl$ (\expref{Lemma}{lem.rfe-hits-cored}).
\end{enumerate}
Together,
these show that every polynomial in the vanishing ideal of $\RFEkl$ is equal to the zero polynomial modulo the ideal $I$. 
We conclude that the ideals coincide,
\ie, the vanishing ideal of $\RFEkl$ is generated by the subset of instantiations of \(\VanGenkl\) that define $I$.

For the subset of instantiations $\VanGenkl[C \sqcup L]$ where $C \subseteq [n]$ is any fixed subset of size $k+1$ and $L \subseteq [n]$ ranges over all subsets of size $l+1$ disjoint from $C$, the combinatorial structure bridging the two steps is that the polynomial is cored.

\begin{definition}[monomial support and cored polynomial]
  \label{def.sunform}
  The \emph{support} of a monomial $m \in \FF[x_1,\dots,x_n]$, denoted $\supp(m)$, is the set of indices $i \in [n]$ such that $m$ depends on $x_i$. 
  For \(c,t\in\N\),
  a polynomial \(p \in \FF[x_1,\dots,x_n]\) is said to be \emph{\((c,t)\)-cored} if there exists $C \subseteq [n]$, called the \emph{core}, such that $|C| \le c$ and for every monomial $m$ of \(p\), $|\supp(m) \setminus C| \le t$.
  For any subset $C \subseteq [n]$ and monomial $m = \prod_{i \in [n]} x_i^{d_i}$, we call $\prod_{i \in C} x_i^{d_i}$ the $C$-part of $m$, and $\prod_{i \in [n] \setminus C} x_i^{d_i}$ the non-$C$-part of $m$.
\end{definition}

The crux for the first step is the following property, which allows us to gradually get closer to a $(k+1,l)$-cored polynomial. 

  \begin{proposition}
    \label{prop.lem.vangen-coredform}
    Let \(k,l,n\in\N\), let \(C\) be a \((k+1)\)-subset of \([n]\), and let $I$ denote the ideal generated by the polynomials $\VanGenkl[C \sqcup L]$ where $L$ ranges over all $(l+1)$-subsets of $[n] \setminus C$. Consider a monomial $m \in \FF[x_1,\dots,x_n]$ such that $|\supp(m) \setminus C| > l$. Modulo $I$, $m$ is equal to a linear combination of monomials whose non-$C$-parts have lower degree than the non-$C$-part of $m$.
  \end{proposition}
\begin{proof}
  Let $L$ be a subset of $\supp(m) \setminus C$ of size $l+1$. Let $m'$ be the monomial such that $m = m' \cdot x^L$, where $x^L \doteq \prod_{i \in L} x_i$.
  By \expref{Proposition}{prop.vangen.basic},
  \(x^L\) is a monomial of $\VanGenkl[C \sqcup L]$,
  and every other monomial of $\VanGenkl[C \sqcup L]$ has non-$C$-part of degree at most $l$. 
  It follows that $m' \cdot \VanGenkl[C \sqcup L]$ can be written as $c \cdot m + r$, where $c$ in a nonzero element in $\FF$ and every monomial in $r$ has non-$C$-part of lower degree than $m$ does. Since ideals are closed under multiplication by any other polynomial, $m' \cdot \VanGenkl[C \sqcup L] \in I$. Thus, we have $0 \equiv c \cdot m +r \bmod I$, which can be rewritten as $m \equiv - c^{-1} \cdot r \bmod I$.
\end{proof}

\expref{Proposition}{prop.lem.vangen-coredform} leads to the following formalization of the first step of our approach.

\begin{lemma}
  \label{lem.vangen-coredform}
  Let \(k,l,n\in\N\),
  let \(C\) be a \((k+1)\)-subset of \([n]\), and
  let $I$ be the ideal generated by the
  polynomials \(\VanGenkl[C \sqcup L]\)
  where \(L\) ranges over all \((l+1)\)-subsets of \([n] \setminus C\).
  Modulo $I$,
  every polynomial is equal
  to a \((k+1,l)\)-cored polynomial with core \(C\).
\end{lemma}

\begin{proof}
For any polynomial $p \in \FF[x_1,\dots,x_n]$, \expref{Proposition}{prop.lem.vangen-coredform} allows us to 
systematically eliminate any monomial $m$ in \(p\) that violates the \((k+1,l)\)-cored condition, without changing $p$ modulo $I$. The process may introduce other monomials, but those monomials all have non-$C$-parts of degree lower than $m$ does. This means that the process cannot continue indefinitely. When it ends, the remaining polynomial is \((k+1,l)\)-cored with core \(C\) and is equivalent to $p$ modulo $I$. 
\end{proof}

The second step of our approach is formalized in \expref{Lemma}{lem.rfe-hits-cored}.

\begin{lemma}
  \label{lem.rfe-hits-cored}
  Let \(k,l,n\in\N\) and let $r \in \FF[x_1,\dots,x_n]$ be nonzero and \((k+1,l)\)-cored.
  Then \(\RFEkl\) hits \(r\).
\end{lemma}

We prove \expref{Lemma}{lem.rfe-hits-cored} from the \hyperref[lem.multi-zsub-deriv]{Zoom Lemma} in \expref{Section}{s.zoom}. Assuming it, we have all ingredients for the proof of \expref{Theorem}{thm.ideal.generators}. 

\begin{proof}[Proof of \expref{Theorem}{thm.ideal.generators}]
The combination of \expref{Lemma}{lemma.vangen-vanishes},
\expref{Lemma}{lem.vangen-coredform}, and \expref{Lemma}{lem.rfe-hits-cored} shows that,
for every core $C \subseteq [n]$ of size $k+1$, the vanishing ideal \(\VanIdeal\RFEkl\) is generated by the
polynomials \(\VanGenkl[C \sqcup L]\) where \(L\) ranges over all \((l+1)\)-subsets of \([n] \setminus C\). The generators are all homogeneous of minimum degree \(l+1\), and each generator has a monomial that occurs in none of the other generators (namely the product of the variables in $L$). Therefore, the generating set has minimum size
since it forms a vector space basis of the degree-\((l+1)\) part of \(\VanIdeal\RFEkl\).
\end{proof}

As an aside, we justify along the same lines the claim from  \expref{Section}{s.intro} that all linear dependencies among instances of $\VanGenkl$ are generated by the equations \eqref{eq.generator.dependencies} when $\{i_1, \dots, i_{k+l+3}\}$ ranges over all subsets of $[n]$ containing a core $C$ of size $k+1$. A similar reduction strategy modulo those equations allows us to rewrite 
\[
\sum_{\substack{S \subseteq [n] \\ |S|=k+l+2}} c_S \cdot  \VanGenkl[S] = 0
\]
such that the range of the subsets $S$ is reduced to a $(k+1,l+1)$-cored subclass with core $C$. By linear independence, the only equation of that form is the trivial one with all $c_S=0$.

\paragraph{Gr\"obner basis.} We end this section with a short discussion on Gr\"obner bases. This part is not essential for understanding the remainder of the paper; the reader may feel free to skip it. Readers who want to know more may refer to \cite{AL1994,CLO2013}.

Gr\"obner bases are useful for solving several computational problems involving ideals, including determining whether a given polynomial $p$ belongs to an ideal $I$ given by a finite set $G$ of generators. The setup presumes a total order $\ge$ on monomials with the following properties:
\begin{itemize}
    \item For all monomials $m$, we have $m \ge 1$, where $1$ denotes the empty monomial.
    \item For monomials $m_1,m_2,m$, we have that if $m_1\ge m_2$, then $m_1\cdot m \ge m_2\cdot m$.
\end{itemize}
Assuming such a monomial ordering $\ge$, every nonzero polynomial has a unique monomial that is maximal in $\ge$, which we call the leading monomial. 

For a given a polynomial $p$, we can compute a $G$-reduced form of $p$ by repeatedly applying the following reduction step, starting from $f=p$: Find $g\in G$ such that the leading monomial of $g$ divides some monomial $m$ of $f$, and then subtract a suitable multiple of $g$ from $f$ so as to produce a new value of $f$ that does not contain $m$ as a monomial. If several such $g$ and $m$ exist, pick any. The process continues until no suitable $g$ and $m$ can be found, which the properties of the ordering $\ge$ guarantee to happen at some point. The final $f$ is called a $G$-reduced form of $p$, which may or may not be unique.

A natural algorithm to determine membership of $p$ in $I$ is to compute a $G$-reduced form $f$ of $p$ and conclude that $p \in I$ if and only if $f=0$. The algorithm generalizes computing the remainder in univariate polynomial division, with some differences being that there are multiple choices of $g$, and the monomial $m$ does not need to be the leading monomial of $f$. A positive conclusion, $p \in I$, is always correct because reduction does not affect membership and $0$ trivially belongs to the ideal. However, the algorithm can have false negatives, namely when $p \in I$ has a nonzero $G$-reduced form, \ie, the reduction process reaches some $f \in I$ that does not have a monomial $m$ divisible by the leading monomial of some $g \in G$. 

A \emph{Gr\"obner basis} $G$ for $I$ is a finite set of generators satisfying the additional constraint that every nonzero element of $I$ has a monomial $m$ that is divisible by the leading monomial of some element of $G$. In this case, the above algorithm for deciding membership in $I$ is always correct. In fact, this gives another characterization of when a finite generating set $G$ is a Gr\"obner basis. Yet another characterization is that every polynomial $p$ has a unique $G$-reduced form $f$.

In the overview, we claimed that the set $G$ of polynomials $\VanGenkl[C \sqcup L]$ form a Gröbner basis for $\VanIdeal{\RFEkl}$, where $C \subseteq [n]$ is a fixed core of size $k+1$ and $L$ ranges over the $(l+1)$-subsets of $[n] \setminus C$. This holds with respect to any monomial ordering such that, for every $L$, $x^L \doteq \prod_{i \in L} x_i$ is the leading monomial of $\VanGenkl[C \sqcup L]$. Examples of such orderings include all lexicographic orderings where the variables outside $C$ have higher priority than the variables inside $C$. \expref{Lemma}{lem.rfe-hits-cored} implies that every nonzero polynomial in $\VanIdeal{\RFEkl}$ has a monomial with more than $l$ variables outside of $C$, which is to say that the monomial is divisible by $x^L$ for some $L \subseteq [n] \setminus C$ of size $l+1$. As $x^L$ is the leading monomial of $\VanGenkl[C \sqcup L]$, we conclude that every nonzero polynomial in $\VanIdeal{\RFEkl}$ has a monomial that is divisible by the leading term of some element of $G$, \ie, $G$ is a Gröbner basis. 

One can interpret \expref{Proposition}{prop.lem.vangen-coredform} as performing a reduction step of $p$ by $G$. \expref{Lemma}{lem.vangen-coredform} keeps performing this step until it is no longer possible, yielding a $G$-reduced form $f$ of $p$ that is $(k+1,l)$-cored with core $C$. \expref{Lemma}{lem.rfe-hits-cored} implies that any two $(k+1,l)$-cored representatives modulo $I$ of the same polynomial $p$ coincide, so every polynomial $p$ has a unique $G$-reduced form. This is another way to see that the set $G$ is a Gr\"obner basis. 

\paragraph{Instantiation for SV.}
By the connection between $\SV$ and $\RFE$, a generating set for $\VanIdeal{\RFElml}$ induces a generating set for $\VanIdeal{\SVl}$. We provide an explicit expression as an instantiation of \expref{Theorem}{thm.ideal.generators} and \expref{Proposition}{prop.vangen.basic}. 

\begin{corollary}\label{cor-gen-sv}
 Let $l, n \in \N$ and let $a_i$ for $i \in [n]$ be distinct elements of $\FF$.
 For any fixed set $C \subseteq [n]$ of size $l$, the polynomials $\VanGenSV{l}[C \sqcup L]$ form a generating set
 of minimum size for $\VanIdeal{\SVl}$ when $L$ ranges over all $(l+1)$-subsets of $[n]$ that are disjoint from $C$. Here, for any $S = \{i_1, \dots, i_{2l+1}\} \subseteq [n]$ with $i_1 < \dots < i_{2l+1}$,
 \[
 \VanGenSV{l}[S] \doteq \sum_{\substack{T \subseteq S \\ |T|=l+1}} \gamma'_{S \setminus T,T} \cdot \prod_{i \in
 T} x_i,
 \]
 where
 \begin{equation}\label{eq.SVgen.coef}
 \gamma'_{S \setminus T,T} \doteq (-1)^{\XInv(S \setminus T,T)} \cdot \left(\prod_{i \in T}\prod_{j \in [n]\setminus \{i\}} (a_i-a_{j})\right) \cdot 
 \Adet{S\setminus T} \cdot \Adet{T}.
\end{equation}
\end{corollary}

\begin{proof}
By \expref{Proposition}{prop.rfe-equiv-sv}, for any polynomial \(p\in\FF[x_1,\ldots,x_n]\), \(p(\SVl) = 0\) iff \((p\circ A)(\RFElml) = 0\), where $A:\FF^n\to \FF^n$ is the invertible transformation that divides each variable $x_i$ by $\prod_{j\in [n]\setminus \{i\}} (a_i-a_j)$. Since the vanishing ideal of $\RFElml$ coincides with the ideal generated by the polynomials $\VanGenparam{l-1}{l}[C \sqcup L]$, it follows that the vanishing ideal of $\SVl$ coincides with the ideal generated by the polynomials 
\begin{equation}\label{eq.gen.transfo}
\VanGenparam{l-1}{l}[C \sqcup L] \circ A^{-1}.
\end{equation} 
For $T \subseteq S \doteq C \sqcup L$ with $|T|=l+1$, the coefficient of $\prod_{i \in T} x_i$ in $\VanGenparam{l-1}{l}[S]$ is given by $\gamma_{S \setminus T, T}$. By the construction of $A$, $A^{-1}$ multiplies $x_i$ by $\prod_{j\in [n]\setminus \{i\}} (a_i-a_j)$. Thus, the coefficient of $\prod_{i \in T} x_i$ in \eqref{eq.gen.transfo} equals $\gamma'_{S\setminus T, T}$ given by \eqref{eq.SVgen.coef} and 
$\VanGenparam{l-1}{l}[C \sqcup L] \circ A^{-1} = \VanGenSV{l}[S]$.
\end{proof}

In comparison to the coefficients $\gamma_{S \setminus T, T}$ of the polynomial $\VanGenparam{l-1}{l}[S]$, the coefficients $\gamma'_{S \setminus T, T}$ of $\VanGenSV{l}[S]$ contain an additional term, namely the middle term on the right-hand side of \eqref{eq.SVgen.coef}. As a consequence, each coefficient of $\VanGenSV{l}[S]$ depends on \emph{all} abscissas $a_1,\dots,a_n$, whereas the coefficients of $\VanGenkl[S]$ only depend on the abscissas with indices in $S$. This reflects a difference in setup between the two generators: The substitution for a variable $x_i$ is a multivariate function of all abscissas in $\SV$ versus a univariate function of the abscissa $a_i$ only in $\RFE$. The difference represents one reason why $\RFE$ is more convenient to work with than $\SV$, even though both have essentially the same power.

A more important reason is our derivation of the generating set $\VanGenkl$ in \expref{Lemma}{lemma.vangen-vanishes}. Our approach hinges on the fact that the substitutions for a variable $x_i$ induce linear equations involving the seed variables $g_{k},\dots,g_{0},h_{l},\dots,h_{0}$, with coefficients being expressions in terms of the polynomial variables $x_1,\dots,x_n$ and abscissas $a_1,\dots,a_n$. Collecting $k+l+2$ of such equations yields as many linear constraints as unknowns, which suffices to derive a nontrivial element of the vanishing ideal. The substitutions \eqref{eq-sv-sub} for $x_i$ made by $\SVl$ similarly induce linear equations, though not between the mere seed variables $y_1, z_1, \dots, y_l, z_l$ but between monomials in the seed variables, namely the constant monomial and the monomials $z_t y_t^d$ for $t\in [l]$ and $d\in \{0,\dots,n-1\}$. In contrast to the setting of $\RFE$, even if we collect all of those equations, namely $n$ linear equations in $nl+1$ unknowns, this does not give us enough information to derive a nontrivial element of the vanishing ideal.

\section{Zoom Lemma}
\label{s.zoom}

Throughout the paper we make repeated use of a key technical tool,
the \hyperref[lem.multi-zsub-deriv]{Zoom Lemma}.
The lemma allows us to zoom in on the contributions of the monomials in a
polynomial $p$ that have prescribed degrees in a subset of the variables.
We introduce the following terminology for prescribing degrees.

\begin{definition}[degree pattern]\label{def.degree-pattern}
  Let \(J \subseteq [n]\).
  A \emph{degree pattern} with domain \(J\) is a \(J\)-indexed
  tuple \(d \in \N^J\) of nonnegative integers.
  A degree pattern \(d\) \emph{matches} a monomial
  $m \in \FF[x_1,\dots,x_n]$
  if,
  for every \(j \in J\),
  \(m\) has degree exactly $d_j$ in $x_j$.
  We say that \(d\) is \emph{in} \(p\) if \(d\) matches some monomial in \(p\).

  For any fixed \(J\),
  every polynomial \(p \in \FF[x_1,\ldots,x_n]\) can be written uniquely in
  the form
  \begin{equation*}
    p = \sum_{d \in \N^J}  p_d \cdot x^d
  \end{equation*}
  where \(x^{\smash{d}} \doteq \prod_{j\in J} x_j^{\smash{d_j}}\)
  and \(p_d\) depends only on variables not indexed by \(J\).
  We refer to \(p_d\) as the coefficient of \(x^d\) in \(p\).
\end{definition}

The notation $p_d$ can be viewed as a generalization of the common one for the
coefficient of degree $d$ of a univariate polynomial $p$.

Our technique allows us to zoom in on the contributions of the coefficients
$p_d$ of degree patterns $d$ that satisfy the
following additional constraint.
\begin{definition}[extremal degree pattern]\label{def.extremal}
  Let \(K, L \subseteq [n]\).
  A degree pattern $d^* \in \N^{K \cup L}$ is \emph{$(K,L)$-extremal} in
  a polynomial $p \in \FF[x_1,\dots,x_n]$
  if $d^*$ is the unique degree pattern $d \in \N^{K \cup L}$ in $p$ that satisfies both
  \begin{itemize}
    \item[(i)] $d_j \le d^*_j$ for all $j \in K$, and
    \item[(ii)] $d_j \ge d^*_j$ for all $j \in L$.
  \end{itemize}
\end{definition}
The notion of extremality in \expref{Definition}{def.extremal} is closely related to
standard notions of minimality and maximality of tuples of numbers.
A \(J\)-tuple $d^*$  is minimal in a set $D$ of such tuples
if the only tuple $d \in D$
that satisfies $d_j \le d^*_j$ for all $j \in J$,
is $d^*$ itself.
A maximal tuple is defined similarly by replacing $\le$ by $\ge$.
Minimality is equivalently \((J,\varnothing)\)-extremality,
and maximality is equivalently \((\varnothing,J)\)-extremality.

When \(K\) and \(L\) intersect,
note that only degree patterns \(d \in \N^{K\cup L}\)
with \(d_j = d^*_j\) for all \(j \in K \cap L\)
affect whether \(d^*\) is \((K,L)\)-extremal.

The above terminology lets us state our key technical lemma succinctly.
\begin{lemma}[Zoom Lemma]\label[zoom-lemma]{lem.multi-zsub-deriv}
  Let \(K,L \subseteq [n]\),
  let \(p \in \FF[x_1,\ldots,x_n]\),
  and let $d^* \in \N^{K\cup L}$ be a degree pattern that is $(K,L)$-extremal
  in $p$.
  If the coefficient $p_{d^*}$ is nonzero upon the substitution
  \begin{equation}
    \label{eq.multi-zsub-deriv.sub}
    x_i \gets
    z \cdot
      \frac{
        \prod_{j \in K\setminus L} ( a_i - a_j )
      }{
        \prod_{j \in L\setminus K} ( a_i - a_j )
      }
      \qquad \forall i\in [n]\setminus (K\cup L)
  \end{equation}
  where $z$ is a fresh variable, 
  then \(\RFEkl\) hits \(p\) for any \(k\ge|K|\) and \(l\ge|L|\).
\end{lemma}

Note that the result of substituting \eqref{eq.multi-zsub-deriv.sub} into $p_{d^*}$ is a univariate polynomial $q$ in $z$. In the case where $p$ is homogeneous, $q$ has a single monomial, so $q$ is nonzero iff $q$ is nonzero at $z=1$. In general, it suffices for $q$ to be nonzero at some point $z \in \FF$. As for the conclusion, the most interesting settings in \expref{Lemma}{lem.multi-zsub-deriv} are $k=|K|$ and $l=|L|$. This is because the range of $\RFEkl$ is contained in the range of $\RFEparam{k'}{l'}$ for $k'\ge k$ and $l'\ge l$. Also, whereas many or our instantiations of the \hyperref[lem.multi-zsub-deriv]{Zoom Lemma} have \(K\) and \(L\) disjoint,
this is not necessary for the lemma to hold.%
\footnote{%
  In fact, allowing \(K\) and \(L\) to overlap is useful in
  \expref{Section}{s.test} (see \expref{Proposition}{prop.zsub-deriv-power-variant}) and 
  \expref{Section}{s.constant-width-roabp} (see \expref{Proposition}{prop.application.roabp.coef-eqns}).%
}

Let us first see how the \hyperref[lem.multi-zsub-deriv]{Zoom Lemma} allows us to complete the proof of
\expref{Theorem}{thm.ideal.generators}. There are several ways to do so; we present a fairly generic way.

\begin{proof}[Proof of \expref{Lemma}{lem.rfe-hits-cored} from the
  \texorpdfstring{\hyperref[lem.multi-zsub-deriv]{Zoom Lemma}}{Zoom Lemma}]
  Let \(C \subseteq [n]\) denote a core of size at most $k+1$ for \(r\).
  We construct subsets $K, L \subseteq [n]$ with $|K| \le k$ and $|L| \le l$,
  and a degree pattern $d^*$ with domain $K \cup L$ that is $(K,L)$-extremal in
  $r$ such that $r_{d^*}$ is nonzero upon the substitution 
  \eqref{eq.multi-zsub-deriv.sub}.
  The \hyperref[lem.multi-zsub-deriv]{Zoom Lemma} then implies that $\RFEkl$ hits $r$.

  The construction consists of two steps.
  First, we pick $i^*\in C$ arbitrarily. (We can assume without loss of
  generality that $C$ is nonempty because if $C$ is a core, then so is $C$ with an additional element.) 
  We also set $K \doteq C \setminus \{i^*\}$,
  and let $d_+$ be a degree pattern with domain $K$ that matches a monomial in
  $r$ and that is minimal among all such degree patterns.
  The existence of $d_+$ follows from the fact that $r$ is nonzero.
  By construction, $|K| \le (k+1)-1 = k$ and \(r_{d_+}\) is nonzero.

  Second, we pick a degree pattern $d_-$ with domain $[n]\setminus C$ that
  matches a monomial in $r_{d_+}$ and that is maximal among all such degree
  patterns.
  The existence of $d_-$ follows from the fact that $r_{d_+}$ is nonzero.
  Let $L$ denote the set of indices $j \in [n]\setminus C$ on which $d_-$ is
  positive.
  The hypothesis that $C$ is a $(k+1,l)$-core for $r$ implies that
  $|L| \le l$.
  By construction, the restriction of $d_-$ to the domain $L$ is
  maximal among the degree patterns with domain $L$ in $r_{d_+}$.

  Note that $K$ and $L$ are disjoint,
  because $K \subseteq C$ and $L \subseteq [n]\setminus C$.
  We define $d^*$ as the degree pattern with domain $K \sqcup L$ that agrees
  with $d_+$ on $K$ and with $d_-$ on $L$.
  The minimality and maximality properties of $d_+$ and $d_-$ imply that $d^*$
  is $(K,L)$-extremal in $r$.
  As there is at least one monomial in $r$ that agrees with the degree pattern $d^*$, the coefficient $r_{d^*}$ is nonzero. Since $K$ includes all of $C$ but $i^*$, $r_{d^*}$ cannot depend on variables indexed by $C$ other than $x_{i^*}$. By the maximality of $d_-$ on $[n] \setminus C$ and the fact that $L$ contains all indices in $[n] \setminus C$ on which $d_-$ is positive, $r_{d^*}$ cannot depend on any variable in $[n] \setminus C$. Thus, $r_{d^*}$ is a nonzero polynomial that depends only on
  $x_{i^*}$. It follows that
  substituting \eqref{eq.multi-zsub-deriv.sub} into $r_{d^*}$ yields a nonzero polynomial in $z$.
\end{proof}

Before giving a formal proof of the \hyperref[lem.multi-zsub-deriv]{Zoom Lemma},
we provide some intuition for the mechanism behind it,
and we explain how the choice of the substitution
\eqref{eq.multi-zsub-deriv.sub} and the extremality requirement arise. We consider $k=|K|$ and $l=|L|$, and focus on the setting of homogeneous polynomials $p$, in which case we can set $z=1$ without loss of generality. 

We start with the special case where (i) $\ell=0$,
or equivalently $L=\emptyset$,
and (ii) the degree pattern $d^* \in \N^K$ is zero in every coordinate,
so \(x^{d^*}\) is the constant monomial 1.
We can zoom in on $p_{d^*}$ by setting all variables $x_j$ for $j \in K$ to
zero.
The generator $\RFEparam{k}{0}$ allows us to do so by picking a seed $f$ such
that $f(a_j)=0$ for all $j \in K$, namely
\begin{equation}\label{eq.zoom.seed.special}
  f(\alpha) \doteq \prod_{j\in K}(\alpha - a_j).
\end{equation}
The evaluation of $p$ at $\RFE(f)$ coincides with the evaluation of $p_{d^*}$
at $\RFE(f)$,
which is precisely \eqref{eq.multi-zsub-deriv.sub} with $z=1$.
If the evaluation is nonzero,
then evidently $\RFEparam{k}{0}$ hits $p$, as desired.

In order to handle more general degree patterns $d^* \in \N^K$, 
we introduce a fresh parameter $\xi_j$ for each $j \in K$,
and replace $a_j$ in \eqref{eq.zoom.seed.special} by $a_j - \xi_j$,
\ie, we consider the seeds
\begin{equation}\label{eq.zoom.seed.K}
  \hat{f}(\alpha) \doteq \prod_{j \in K} (\alpha - a_j + \xi_j),
\end{equation}
where the hat indicates a dependency on the fresh parameters.
For each \(i\),
\(\hat{f}(a_i)\) is a multivariate polynomial in $\xi_j$, $j \in K$,
and \(\RFE(\hat{f})\) applies the substitution \(x_i \gets \hat{f}(a_i)\)
for each \(i \in [n]\).
The critical property is that $\hat{f}(a_i)$ contains the factor \(\xi_i\) for $i \in K$ but not for $i \not\in K$.
More precisely,
\begin{equation*}
  \hat{f}(a_i) = \begin{cases}
    \hat{c}_i\cdot \xi_i & i \in K \\
    \hat{c}_i & i \not\in K
  \end{cases},
\end{equation*}
where
$\hat{c}_i \doteq \prod_{j \in K\setminus\{i\}} (a_i - a_j + \xi_j)$
is a multivariate polynomial in the parameters $\xi$ with nonzero constant
term,
namely $c_i \doteq \prod_{j \in K\setminus\{i\}} (a_i - a_j)$.

For any monomial \(m\) with matching degree pattern \(d \in \N^K\),
we have
\[
  m(\RFE(\hat{f}))
  =
  m(\hat{c}) \cdot \xi^{d}
  =
  m_{d}(\hat{c}) \cdot \hat{c}^{d} \cdot \xi^{d}.
\]
Here we see that,
when \(m(\RFE(\hat{f}))\) is expanded as a linear combination of monomials in
the \(\xi_j\),
the combination contains only monomials divisible by \(\xi^d\) and the coefficient of \(\xi^d\) is nonzero (namely $c^{d}$). 

In the expansion of \(p(\RFE(\hat{f}))\),
the coefficient of \(\xi^{d^*}\)
\begin{itemize}
  \item[(a)] has a contribution $m(c) = m_{d^*}(c) \cdot c^{d^*}$ from each
    monomial $m$ in $p$ that matches \(d^*\), and
  \item[(b)] may have contributions from other monomials $m$ in $p$
    but only from those whose degree pattern on $K$ is smaller than $d^*$,
    \ie, only if $\deg_j(m) \le d^*_j$ for all $j \in K$.
\end{itemize}
By adding the contributions of all monomials $m$ with degree pattern $d^*$ we obtain
\[ p_{d^*}(\RFE(\hat{f})) \cdot \RFE(\hat{f})^{d^*} = p_{d^*}(\hat{c}) \cdot \hat{c}^{d^*} \cdot \xi^{d^*}.\]
By properties (a) and (b) above,
we conclude that the coefficient of the monomial $\xi^{d^*}$ in
$p(\RFE(\hat{f}))$:
\begin{itemize}
  \item[(a')] has a contribution of $p_{d^*}(c) \cdot c^{d^*}$ from the
    monomials matching $d^*$, and
  \item[(b')] cannot have any additional contributions provided that there are
    no degree patterns on $K$ in $p$ that are smaller than $d^*$.
\end{itemize}
For a degree pattern $d^*$ in $p$,
condition (b') can be formulated as the minimality of $d^*$ among the degree
patterns on $K$ in $p$,
which is exactly the requirement that $d^*$ is $(K,L)$-extremal in $p$
for $L = \emptyset$.
Under this condition we conclude that the coefficient of the monomial
$\xi^{d^*}$ in $p(\RFE(\hat{f}))$ equals $p_{d^*}(c) \cdot c^{d^*}$.
Note that $c^{d^*}$ is nonzero.
Since $p_{d^*}(c)$ only depends on the components $c_i$
for $i \in [n] \setminus K$,
and those components agree with \eqref{eq.multi-zsub-deriv.sub} for $z=1$,
the coefficient of the monomial $\xi^{d^*}$ in $p(\RFE(\hat{f}))$ is nonzero
if and only if $p_{d^*}$ is nonzero at the point
\eqref{eq.multi-zsub-deriv.sub} with $z=1$.
Thus, for a homogeneous polynomial $p$, the hypotheses of the lemma imply that
$p(\RFE(\hat{f}))$ is a nonzero polynomial in the parameters $\xi$.
It follows that a random setting of the parameters $\xi$ yields a seed $f'$
for $\RFEparam{k}{0}$ such that $p(\RFE(f'))$ is nonzero.
This shows that $\RFEparam{k}{0}$ hits $p$.

The symmetric case $k=0$ can be obtained from the case $l=0$ by transforming
$x_i \mapsto x_i^{-1}$ for each $i \in [n]$.
The transformation maps a seed $f$ for $\RFEparam{0}{l}$ into a seed
$\tilde{f}$ for $\RFEparam{l}{0}$,
wherein the zeroes of \(\tilde{f}\) come from the poles of $f$.
Given a polynomial $p(x_1,\dots,x_n)$,
we similarly transform the variables and clear denominators to obtain the
polynomial
$\tilde{p}(x_1,\dots,x_n) \doteq p(x_1^{-1},\dots,x_n^{-1}) \cdot x^{g}$,
where $g$ is any degree pattern with domain $[n]$ for which
$g_i$ is at least the degree of \(x_i\) in \(p\) for every $i \in [n]$.
We apply the previous case of the \hyperref[lem.multi-zsub-deriv]{Zoom Lemma} to $\tilde{p}$ and obtain the new
case of the \hyperref[lem.multi-zsub-deriv]{Zoom Lemma} for $p$.
Note that a monomial with degree pattern $\tilde{d}$ in $\tilde{p}$
corresponds to a monomial with degree pattern $d = g - \tilde{d}$ in
$p$.
It follows that $\widetilde{d^*}$ is minimal in $\tilde{p}$ iff $d^*$ is
maximal in $p$,
which is exactly the $(K,L)$-extremality requirement of
the \hyperref[lem.multi-zsub-deriv]{Zoom Lemma} in the
case where $K=\emptyset$.

The above arguments for the special cases $l=0$ and $k=0$ carry through for arbitrary polynomials $p$ under the assumption that $p_{d^*}$ is nonzero upon the substitution \eqref{eq.multi-zsub-deriv.sub} with $z=1$, \ie, that the univariate polynomial $q(z)$ obtained by substituting \eqref{eq.multi-zsub-deriv.sub} in $p_{d^*}$ is nonzero at $z=1$. The homogeneity of $p$ was only used to conclude that if $q$ is nonzero, then $q$ is nonzero at $z=1$. To handle polynomials $p$ where $q$ may be nonzero but zero at $z=1$, we run the above argument with an arbitrary value of $z \in \FF$ where $q$ is nonzero. We can do so by including an additional factor of $z$ on the right-hand sides of \eqref{eq.zoom.seed.special} and \eqref{eq.zoom.seed.K}, \ie, by considering
$f(\alpha) \doteq z \cdot \prod_{j\in K}(\alpha - a_j)$ and 
$\hat{f}(\alpha) \doteq z \cdot \prod_{j \in K} (\alpha - a_j + \xi_j)$, respectively. Both expressions correspond to valid seeds for $\RFEparam{k}{0}$ in the roots parametrization. 

The case for general $k$ and $l$ follows in a similar fashion, introducing parameters for the zeroes as well as the poles of the seed $f$, considering the monomial in those parameters with degree pattern determined by \(d^*\), and clearing denominators. 

\begin{proof}[Proof of \texorpdfstring{\hyperref[lem.multi-zsub-deriv]{Zoom Lemma}}{Zoom Lemma}]
  Let \(K\), \(L\), \(p\), and \(d^*\) be as in the lemma statement. Fix $z$ to a value in $\FF$ such that $p_{d^*}$ is nonzero upon the substitution \eqref{eq.multi-zsub-deriv.sub}. Such a value exists by the hypothesis of the lemma (for large enough $\FF$). Since the range of $\RFEkl$ is contained in the range of $\RFEparam{k'}{l'}$ for $k'\ge k$ and $l'\ge l$, 
  it suffices to show that $\RFEkl$ hits $p$ for \(k = |K|\) and \(l = |L|\).
  Let \(\xi_j\) for each \(j \in K\)
  and \(\eta_j\) for each \(j \in L\)
  be fresh indeterminates.
  We denote by \(\FFext\) the field of rational functions in those
  indeterminates with coefficients in $\FF$, and by
  $V$ the subset of elements that, when written in lowest terms,
  have denominators with nonzero constant terms.
  Let \(\Phi : V \to \FF\) map each element of \(V\) to the result of
  substituting \(\xi_j \gets 0\) for each \(j \in K\) and \(\eta_j \gets 0\)
  for each \(j \in L\).
  The result is always well-defined.

  Define \(\hat{f} \in \FFext(\alpha)\) as follows:
  \begin{equation*}
    \hat{f}(\alpha)
    \doteq
    z\cdot \frac{\prod_{j \in K}(\alpha - a_j + \xi_j)}%
                {\prod_{j \in L}(\alpha - a_j + \eta_j)}.
  \end{equation*}
  The substitution \(\RFE(\hat{f})\) effects
  \(x_i \gets \hat{f}(a_i) \in \FFext\) for each $i \in [n]$.
  We claim that \(p(\RFE(\hat{f}))\) is nonzero.
  This suffices to conclude that \(\RFEkl\) hits \(p\),
  because substituting \(\xi_j\) and \(\eta_j\) by a random scalar from
  \(\FF\) transforms \(\hat{f}\) into a seed $f'$
  such that,
  with high probability,
  $f'$ is a valid seed for \(\RFEkl\)
  and \(p(\RFE(f')) \ne 0\).
  Henceforth we show that \(p(\RFE(\hat{f})) \ne 0\).

  For each \(i \in [n]\),
  there exists \(\hat{c}_i \in V\)
  with \(\Phi(\hat{c}_i) \ne 0\)
  such that
  \begin{equation} \label{eq.multi-zsub-deriv.pf.seed}
    \hat{f}(a_i) = \begin{cases}
      \hat{c}_i\cdot \frac{\xi_i}{\eta_i} & i \in K\cap L \\
      \hat{c}_i\cdot \xi_i & i \in K\setminus L \\
      \hat{c}_i\cdot \frac{1}{\eta_i} & i \in L \setminus K \\
      \hat{c}_i & i \not\in K \cup L
    \end{cases},
  \end{equation}
  namely
  \[
    \hat{c}_i =  z\cdot
      \frac{\prod_{j \in K\setminus\{i\}}(a_i - a_j + \xi_j)}%
           {\prod_{j \in L\setminus\{i\}}(a_i - a_j + \eta_j)}.
  \]
  For \(i \not\in K \cup L\),
  \(\Phi(\hat{c}_i)\) is moreover the value substituted into \(x_i\) by
  \eqref{eq.multi-zsub-deriv.sub}.

  Let $D$ denote the set of all degree patterns $d \in \N^{K \cup L}$ that
  match a monomial in $p$.
  We have that
  \begin{equation} \label{eq.multi-zsub-deriv.pf.p-expand-1}
    p = \sum_{d\in D} p_d \cdot x^d.
  \end{equation}
  For $d \in D$, define \(\hat{q}_d\) to be the result of substituting
  \(x_i \gets \hat{c}_i\) into $p_d$ for each $i \in [n]$.

  Combining \eqref{eq.multi-zsub-deriv.pf.seed} and
  \eqref{eq.multi-zsub-deriv.pf.p-expand-1},
  we obtain
  \begin{equation}\label{eq.multi-zsub-deriv.pf.p-eval-1}
    p(\RFE(\hat{f}))
    =
    \sum_{d\in D} \hat{q}_d \cdot \hat{c}^d \cdot
      \frac{\xi^{d|_K}}{\eta^{d|_L}},
  \end{equation}
  where \(d|_K\) and \(d|_L\) respectively are the restrictions of \(d\) onto the domains \(K\) and \(L\) respectively.
  Fix any function \(\psi : [k+l] \to K\cup L\) such that
  \(\psi\) establishes a bijection between \(\{1,\ldots,k\}\) and \(K\)
  and establishes a bijection between \(\{k+1,\ldots,k+l\}\) and \(L\).
  For $j \in \{1,\dots,k\}$, let \(\zeta_j\) be an alias for \(\xi_{\psi(j)}\),
  and for $j \in \{k+1,\dots,k+l\}$, let \(\zeta_j\) be an alias for
  \(\eta_{\psi(j)}\).
  For each \(d \in \N^{K\cup L}\),
  define a corresponding \(\delta \in \Z^{k+l}\) given by
  \(\delta_j = d_{\psi(j)}\) for $j \in \{1,\dots,k\}$ and
  \(\delta_j = -d_{\psi(j)}\) for $j \in \{k+1,\dots,k+l\}$.
  Let \(\Delta \subseteq \Z^{k+l}\) consist of the \(\delta\) corresponding
  to each \(d \in D\).
  Finally, for each \(d \in D\) with corresponding \(\delta \in \Delta\),
  define
  $\hat{c}_\delta \doteq \hat{q}_d \cdot \hat{c}^d$,
  capturing the first two factors in the \(d\)-th term of
  \eqref{eq.multi-zsub-deriv.pf.p-eval-1}.
  Rewritten in this notation,
  \eqref{eq.multi-zsub-deriv.pf.p-eval-1} becomes
  \begin{equation}
    \sum_{\delta\in \Delta}
      \hat{c}_\delta \cdot
      \prod_{j=1}^{k+l} \zeta_j^{\delta_j}.
  \end{equation}

  Our hypothesis that $d^*$ is \((K,L)\)-extremal in \(p\)
  says that the only \(d \in D\) such that $d_j \le d^*_j$ for every $j \in K$
  and $d_j \ge d^*_j$ for every $j \in L$, is $d = d^*$.
  Translated into a condition on the element $\delta^* \in \Delta$
  corresponding to $d^*$,
  the hypothesis says that $\delta^*$ is minimal in $\Delta$.
  Our other hypothesis states that $p_{d^*}$ does not
  vanish upon substituting \eqref{eq.multi-zsub-deriv.sub}.
  As \eqref{eq.multi-zsub-deriv.sub} equates to substituting
  \(x_i \gets \Phi(\hat{c}_i)\) for \(i \not\in K \cup L\),
  this hypothesis equivalently states that \(\Phi(\hat{q}_{d^*})\) is nonzero.
  Since for each \(j \in K \cup L\) we have \(\Phi(\hat{c}_j) \ne 0\),
  we conclude that \(\Phi(\hat{c}_{\delta^*}) \ne 0\).
  That \(p(\RFE(\hat{f}))\) is nonzero now follows from the next proposition.
\end{proof}

\begin{proposition} \label{prop.multi-zsub-deriv.helper}
  Let \(\FFext = \FF(\zeta_1,\ldots,\zeta_r)\) be the field of rational
  functions in indeterminates \(\zeta_1, \ldots, \zeta_r\),
  let \(V \subseteq \FFext\) consist of the rational functions whose
  denominator has nonzero constant term,
  and let \(\Phi : V \to \FF\) be the function that maps each rational
  function in \(V\) to its value after substituting \(\zeta_j \gets 0\)
  for all \(j \in [r]\).
  Let
  \[
    s = \sum_{\delta\in \Delta} \hat{c}_\delta \cdot \prod_{j=1}^r \zeta_j^{\delta_j}
  \]
  where \(\Delta \subseteq \Z^r\) is some finite set,
  and
  we have \(\hat{c}_\delta \in V\)
  for every \(\delta \in \Delta\).
  If there exists \(\delta^* \in \Delta\)
  that is minimal in $\Delta$
  and
  for which \(\Phi(\hat{c}_{\delta^*}) \ne 0\),
  then $s \ne 0$.
\end{proposition}
\begin{proof}
  By clearing denominators,
  we may assume without loss of generality
  that,
    for every \(\delta \in \Delta\) and every $j \in [r]$,
    \(\delta_j \ge 0\),
  and that,
    for every \(\delta \in \Delta\),
    \(\hat{c}_\delta\) is a \emph{polynomial} in \(\zeta_1, \ldots, \zeta_r\).
  In this case, all quantities in the sum for \(s\) are polynomials in
  \(\zeta_1,\ldots,\zeta_r\).
  The minimality hypothesis on \(\delta^*\) implies that
  the coefficient of \(\prod_{j=1}^r \zeta_j^{\smash{\delta^*_j}}\) in the monomial
  expansion of \(s\) is precisely the constant coefficient of
  \(\hat{c}_{\delta^*}\),
  and the hypothesis \(\Phi(\hat{c}_{\delta^*}) \ne 0\) asserts that this
  coefficient is nonzero.
\end{proof}

\section{Membership Test}
\label{s.test}

In this section we develop the structured membership test for the vanishing ideal $\VanIdeal{\RFEkl}$ given in \expref{Theorem}{thm.ideal.membership}.
We begin with some basic results regarding membership to $\VanIdeal{\RFEkl}$ and then develop a criterion for multilinear polynomials.

\paragraph{Basic properties.}

It is well-known that $\VanIdeal{\SV^l}$ does not contain any polynomial with a monomial of support at most $l$, \ie, a monomial involving at most $l$ variables. We generalize the lower bound on the support to $\VanIdeal{\RFEkl}$ and also establish an upper bound in the case of multilinear polynomials. Note that for multilinear monomials support conditions translate into degree conditions.
\begin{proposition}
    \label{prop.ideal.membership.degree-bound}
    If a polynomial $p$ contains a monomial of support at most $l$, then $\RFEkl$ hits $p$. 
    If a multilinear polynomial $p$ in the variables $x_1,\dots,x_n$ contains a monomial of support at least $n-k$, then $\RFEkl$ hits $p$.
\end{proposition}
The known proofs of the first part for $\SV^l$ make use of partial derivatives. We establish the generalization for $\RFEkl$ using our generating set in \expref{Theorem}{thm.ideal.generators}, whose analysis hinges on the \hyperref[lem.multi-zsub-deriv]{Zoom Lemma}. A similar argument works for the second part, but we opt to establish it via a black-box reduction to the first part for multilinear polynomials. The approach illustrates the utility of our generalization of $\SVl$ to $\RFEkl$ since even for $\SVl$ we need to consider settings of the parameters $k$ and $l$ other than $k=l-1$. 
\begin{proof}
    For the first part, by \expref{Proposition}{prop.vangen.basic} none of the polynomials $\VanGenkl$ contain a monomial of support $l$ or less. The same holds for the nonzero polynomials in the ideal generated by these polynomials, which by \expref{Theorem}{thm.ideal.generators} equals $\VanIdeal{\RFEkl}$. Thus, every polynomial that has a monomial of support at most $l$, is hit by $\RFEkl$. 

    For the second part, consider $q(x_1,\dots,x_n) \doteq x_1\cdot \dots \cdot x_n\cdot p(1/x_1,\dots,1/x_n)$. Note that if $p$ is a multilinear polynomial in the variables $x_1, \dots, x_n$, then so is $q$. If a multilinear $p$ has a monomial of support at least $n-k$, then $q$ has a monomial of support at most $k$. By the first part, $\RFEparam{l}{k}$ hits $q$. Since the mapping $x_i\gets 1/x_i$ transforms $\RFEparam{l}{k}$ into $\RFEkl$, we conclude that $\RFEkl$ hits $p$.
\end{proof}

Another feature of $\SV$ that generalizes to $\RFE$ is that the generator separates the homogeneous components of a given polynomial $p$. The feature allows us to reduce the general case of testing membership in $\VanIdeal{\RFEkl}$ to the homogeneous case, as was already effectively used in the proof of the \hyperref[lem.multi-zsub-deriv]{Zoom Lemma}.
\begin{proposition}
  \label{prop.rfe-homog}
  For any polynomial \(p\),
  \(p\) vanishes upon substituting \(\RFE\)
  if and only if 
  every homogeneous component of \(p\) vanishes upon substituting \(\RFE\).
\end{proposition}
\begin{proof}
  For any seed \(f\) for \(\RFE\)
  and any scalar \(z\),
  the rescaled substitution \(z \cdot \RFE(f)\)
  is in the range of \(\RFE\),
  namely as \(\RFE(z\cdot f)\).
  It follows (provided that \(\FF\) is sufficiently large)
  that \(p(\RFE)\) vanishes
  if and only if
  \(p(\zeta\cdot\RFE)\) vanishes,
  where \(\zeta\) is a fresh indeterminate.
  We now consider the expansion of \(p(\zeta\cdot\RFE)\)
  as a polynomial in \(\zeta\).
  With \(p^{(d)}\) as the degree-\(d\) homogeneous component of \(p\),
  we have
  \begin{equation*}
    p(\zeta \cdot \RFE)
      = \sum_d p^{(d)}(\zeta \cdot \RFE)
      = \sum_d \zeta^d \cdot p^{(d)}(\RFE).
  \end{equation*}
  The coefficient of \(\zeta^d\), \(p^{(d)}(\RFE)\),
  has no dependence on \(\zeta\).
  We deduce that
  \(p(\zeta\cdot\RFE)\) is the zero polynomial
  if and only if
  \(p^{(d)}(\RFE)\) vanishes for every \(d\).
\end{proof}

\paragraph{Criterion for multilinear polynomials.}

We now develop the full membership test for multilinear polynomials given in \expref{Theorem}{thm.ideal.membership}. \expref{Condition}{thm.ideal.membership.cond2} in \expref{Theorem}{thm.ideal.membership} is closely related to the \hyperref[lem.multi-zsub-deriv]{Zoom Lemma}. Note that for multilinear polynomials and disjoint \(K\) and \(L\), \(\PartialZero{L}{K}{p}\) coincides with the coefficient \(p_{d^*}\) where \(d^*\) is the degree pattern with domain \(K \sqcup L\), \(0\) in the positions of \(K\), and \(1\) in the positions of \(L\).
Moreover, since \(p\) is multilinear,
the condition that \(d^*\) be \((K,L)\)-extremal in \(p\) is
automatically satisfied:
The only multilinear monomial \({m}\) with support in 
\(K\sqcup L\) with
\(\deg_{x_i}({m}) \le d^*_i = 0\) for all \(i\in K\)
and
\(\deg_{x_i}({m}) \ge d^*_i = 1\) for all \(i\in L\)
is \({m} = x^{d^*}\).
This leads to the following specialization of the \hyperref[lem.multi-zsub-deriv]{Zoom Lemma} for multilinear
polynomials with disjoint \(K\) and \(L\).

\begin{lemma}[Zoom Lemma for multilinear polynomials]\label{lem.multi-zsub-deriv.mlin}
  Let \(K,L \subseteq [n]\) be disjoint,
  and
  let \(p\in\FF[x_1,\ldots,x_n]\) be a multilinear polynomial.
  If
  \(\PartialZero{L}{K}{p}\)
  is nonzero upon the substitution
  \begin{equation}
    \label{eq.multi-zsub-deriv.mlin.sub}
    x_i \gets
    z \cdot
      \frac{
        \prod_{j\in K} ( a_i - a_{j} )
      }{
        \prod_{j\in L} ( a_i - a_{j} )
      }
      \qquad \forall i\in [n]\setminus (K\sqcup L),
  \end{equation}
  where $z$ is a fresh variable,
  then \(\RFEkl\) hits \(p\) for any \(k\ge|K|\) and \(l\ge|L|\).
\end{lemma}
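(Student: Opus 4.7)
The plan is to apply \cref{lem.multi-zsub-deriv} directly with the degree pattern \(d^*\) supported on \(K \cup L\), taking value \(0\) on \(K\) and value \(1\) on \(L\). The preparatory observations just made have already aligned every hypothesis of the general lemma with the present statement, so the proof essentially consists of this single invocation.

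First I would spell out the translation. Because \(p\) is multilinear and \(K \cap L = \emptyset\), the coefficient \(p_{d^*}\) coincides with \(\PartialZero{L}{K}{p}\), so the nonvanishing hypothesis here matches the one required by \cref{lem.multi-zsub-deriv}. Second, \(d^*\) is automatically \((K,L)\)-extremal in \(p\), because (as already noted) the only multilinear monomial supported in \(K\cup L\) whose exponents satisfy the inequalities defining extremality with respect to \(d^*\) is \(x^{d^*}\) itself. Finally, the substitution in \cref{eq.multi-zsub-deriv.mlin.sub} is precisely the specialization to this \(d^*\) of the general substitution used by \cref{lem.multi-zsub-deriv}: the products \(\prod_{j\in K}(a_i-a_j)\) and \(\prod_{j\in L}(a_i-a_j)\) are exactly what one obtains by plugging in \(d^*_j=0\) on \(K\) and \(d^*_j=1\) on \(L\) into the general formula.

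Having matched both the hypothesis and the substitution, the conclusion of \cref{lem.multi-zsub-deriv}---namely that \(\RFEkl\) hits \(p\) with \(k=|K|\) and \(l=|L|\)---transfers verbatim. There is no real obstacle to overcome here: all of the substantive work has been done in the two paragraphs preceding the lemma, in identifying the correct \(d^*\) and in verifying its automatic extremality, and the present statement is a direct corollary of \cref{lem.multi-zsub-deriv}.
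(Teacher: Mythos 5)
Your proposal is correct and matches the paper exactly: the paper introduces \cref{lem.multi-zsub-deriv.mlin} precisely as a specialization of \cref{lem.multi-zsub-deriv} to the degree pattern $d^*$ that is $0$ on $K$ and $1$ on $L$, after noting in the preceding paragraph that $\PartialZero{L}{K}{p}$ equals $p_{d^*}$ and that $(K,L)$-extremality of $d^*$ is automatic for multilinear $p$. Your write-up just makes the single invocation explicit, which is what the paper intends.
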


Observe that the substitution \eqref{eq.multi-zsub-deriv.mlin.sub} in \expref{Lemma}{lem.multi-zsub-deriv.mlin} coincides with \eqref{eq.thm.ideal.membership.sub} in \expref{Theorem}{thm.ideal.membership}. Thus, for a multilinear polynomial $p$, \expref{condition}{thm.ideal.membership.cond2} in \expref{Theorem}{thm.ideal.membership} expresses that there is no way to show that $\RFEkl$ hits $p$ via an application of \expref{Lemma}{lem.multi-zsub-deriv.mlin}. The necessity of this condition for membership of $p$ in $\VanIdeal{\RFEkl}$ is clear. 
The necessity of \expref{Condition}{thm.ideal.membership.cond1} in \expref{Theorem}{thm.ideal.membership} is just the special case of \expref{Proposition}{prop.ideal.membership.degree-bound} for multilinear polynomials. 

What remains to argue is that the combination of \expref{condition}{thm.ideal.membership.cond1} and \expref{condition}{thm.ideal.membership.cond2} is sufficient for membership of $p$ in $\VanIdeal{\RFEkl}$. Equivalently, it remains to argue the following property for every multilinear polynomial $p \in \FF[x_1,\dots,x_n]$ that only contain monomials of degrees between $l+1$ and $n-k-1$: If $p$ is hit by $\RFEkl$ then there is an application of \expref{Lemma}{lem.multi-zsub-deriv.mlin} that exhibits this fact. We actually prove the property for every multilinear $p$ of degree $d$ with $l \le d \le n-k$. We do so with a two-step strategy similar to one we used for the part of \expref{Theorem}{thm.ideal.generators} that $\RFEkl$ hits every polynomial outside of the ideal generated by instantiations of $\VanGenkl$:
\begin{enumerate}
  \item
    Modulo the ideal $I$ generated by a certain subset of the instantiations of $\VanGenkl$, 
    $p$ is equal
    to a cored polynomial $r$ with certain parameters. Since $p \not\in \VanIdeal{\RFEkl}$ and $I \subseteq \VanIdeal{\RFEkl}$, $r$ needs to be nonzero. 
  \item
    For every such cored polynomial $r$ that is nonzero, we can apply \expref{Lemma}{lem.multi-zsub-deriv.mlin} to prove that $\RFEkl$ hits $r$, \ie, \expref{condition}{thm.ideal.membership.cond2} fails for $r$.
\end{enumerate}
By linearity and the necessity of \expref{condition}{thm.ideal.membership.cond2} for multilinear polynomials in $\VanIdeal{\RFEkl}$), we conclude that the condition fails for $p$, as well.

The crux for the first step in the context of \expref{Theorem}{thm.ideal.generators} is the transformation in \expref{Proposition}{prop.lem.vangen-coredform}, which gradually gets closer to a cored polynomial with the desired parameters. In general, the transformation in \expref{Proposition}{prop.lem.vangen-coredform} does not maintain multilinearity. We show how to tweak the transformation and preserve multilinearity at the expense of an increase in the size of the core.

\begin{proposition}
\label{prop.membership-coredform}
Let \(k,l,n,d \in\N\), let \(C\) be a \((k+d-l)\)-subset of \([n]\), and let $I$ denote the ideal generated by the polynomials $\VanGenkl[K \sqcup L]$ where $K$ ranges over all $(k+1)$-subsets of $C$ and $L$ ranges over all $(l+1)$-subsets of $[n] \setminus K$. Consider a multilinear monomial $m \in \FF[x_1,\dots,x_n]$ of degree at most $d$ such that $|\supp(m) \setminus C| > l$. Modulo $I$, $m$ is equal to a linear combination of multilinear monomials of the same degree as $m$ but whose non-$C$-parts have lower degree than the non-$C$-part of $m$.
\end{proposition}
\begin{proof}
Consider the subset $L \subseteq \supp(m) \setminus C$ of size $l+1$ in the proof of \expref{Proposition}{prop.lem.vangen-coredform}, and $x^L \doteq \prod_{i \in L} x_i$. Since $m$ is multilinear, so is $m' \doteq m/x^L$, and $|\supp(m')| \le d-|L| = d-l-1$. Provided $|C| \ge (k+1) + (d-l-1) = k+d-l$, there exists a subset $K \subseteq C$ of size $k+1$ that is disjoint from $\supp(m')$. We substitute $\VanGenkl[C \sqcup L]$ by $\VanGenkl[K \sqcup L]$ in the proof. $\VanGenkl[K \sqcup L]$ is homogeneous and multilinear. By construction $K \sqcup L$ is disjoint from $\supp(m')$, so $\VanGenkl[K \sqcup L]$ does not depend on any variables that $m'$ depends on. It follows that $m' \cdot \VanGenkl[K \sqcup L]$ is multilinear and homogeneous of the same degree as $m$, and so is $r$ in the proof.
\end{proof}

Applying \expref{Proposition}{prop.membership-coredform} repeatedly in a similar way as \expref{Proposition}{prop.lem.vangen-coredform} in the proof of \expref{Lemma}{lem.vangen-coredform} yields the following formalization of the first step in the setting of \expref{Theorem}{thm.ideal.membership}.
\begin{lemma}
  \label{lem.membership-coredform}
  Let \(k,l,n,d \in\N\), let \(C\) be a \((k+d-l)\)-subset of \([n]\), and let $I$ denote the ideal generated by the polynomials $\VanGenkl[K \sqcup L]$ where $K$ ranges over all $(k+1)$-subsets of $C$ and $L$ ranges over all $(l+1)$-subsets of $[n] \setminus K$.
  Modulo $I$,
  every multilinear polynomial $p $ of degree at most $d$ in $\FF[x_1,\dots,x_n]$ is equal to a \((k+d-l,l)\)-cored multilinear polynomial with core \(C\) that is either zero or else has the same degree as $p$.
\end{lemma}

The following refinement of \expref{Lemma}{lem.rfe-hits-cored} from the context of \expref{Section}{s.gens} represents the corresponding second step in the context of \expref{Theorem}{thm.ideal.membership}. This is where the degree constraint comes into play.

\begin{lemma}\label{lemma.rfe-hits-multiaffine-coreform}
Let \(k,l,n,d \in\N\) with $l \le d \le n-k$. 
Let \(r\) be a nonzero multilinear polynomial of degree $d$ in $\FF[x_1,\dots,x_n]$ that is \((d+k-l, l)\)-cored.
There are disjoint sets \(K,L\subseteq [n]\) with \(|K|=k\) and
\(|L|=l\) so that \(\PartialZero{L}{K}{r}\) is nonzero upon the substitution \eqref{eq.multi-zsub-deriv.mlin.sub}.
\end{lemma}
\begin{proof}
  Let $C$ denote the core of size at most $d+k-l$, and let $m$ be a monomial of $r$ of degree $d$ that maximizes $|\supp(m) \setminus C|$.
  Let $K$ be a subset of $[n]\setminus \supp(m)$ of size $k$ that 
  contains all of $C \setminus \supp(m)$.
  Such a set $K$ exists because $|C \setminus \supp(m)| = |C| - |\supp(m) \cap C| \le |C| - (d-l) \le k$, and $|[n] \setminus \supp(m)| = n-d \ge k$. 
  Let $L$ be a subset of $\supp(m)$ of size $l$ that contains all of $\supp(m) \setminus C$.
  Such a set $L$ exists because $|\supp(m) \setminus C| \le l$ and $|\supp(m)| = d \ge l$. Note that $K$ and $L$ are disjoint. 
  
  The monomial $m$ has a nonzero contribution to $\PartialZero{L}{K}{r}$. In general, a monomial $m'$ has a nonzero contribution to $\PartialZero{L}{K}{r}$ if and only if $\supp(m')$ is disjoint from $K$ and contains $L$. The disjointness requirement implies that $\supp(m') \cap C \subseteq C \setminus K = \supp(m) \cap C$, where the  equality follows from the choice of $K$. The inclusion requirement implies that $\supp(m) \setminus C = L \setminus C \subseteq \supp(m') \setminus C$, where the equality follows from the choice of $L$. In combination with the maximality of $|\supp(m) \setminus C|$ among the monomials of $r$ of degree $d$, this means that either $m'$ does  not have degree $d$ or else $\supp(m')\setminus C = \supp(m) \setminus C$. It follows that the only monomials $m'$ of $r$ of degree $d$ that contribute to $\PartialZero{L}{K}{r}$ satisfy $\supp(m') \subseteq \supp(m)$. As $r$ only contains multilinear monomials, $r$ has exactly one monomial of degree $d$ that has a nonzero contribution to $\PartialZero{L}{K}{r}$, namely the monomial $m$. We conclude that the polynomial $q(z)$ that results from  substituting \eqref{eq.multi-zsub-deriv.mlin.sub} into 
  \(\PartialZero{L}{K}{r}\) has a nonzero term of degree $d-l$.
\end{proof}

We now have all ingredients to establish 
\expref{Theorem}{thm.ideal.membership}. 

\begin{proof}[Proof of \expref{Theorem}{thm.ideal.membership}]
The necessity of \expref{condition}{thm.ideal.membership.cond1} and \expref{condition}{thm.ideal.membership.cond2} for the membership of a 
 multilinear polynomial $p \in \FF[x_1,\dots,x_n]$ in $\VanIdeal{\RFEkl}$ immediately follows from \expref{Proposition}{prop.ideal.membership.degree-bound} and \expref{Lemma}{lem.multi-zsub-deriv.mlin}, respectively. For sufficiency, we need to show that if $p$ only contains monomials of degrees between $l+1$ and $n-k-1$ and is hit by $\RFEkl$, then there exist disjoint sets \(K,L\subseteq [n]\) with \(|K|=k\) and \(|L|=l\) so that \(\PartialZero{L}{K}{p}\) is nonzero upon the substitution \eqref{eq.multi-zsub-deriv.mlin.sub}.
 
 By \expref{Lemma}{lem.membership-coredform}, we can write $p$ as $p=q+r$, where $q \in \VanIdeal{\RFEkl}$ and $r$ is a multilinear \((k+d-l,l)\)-cored polynomial that is either zero or else has the same degree as $p$. Since $p \not\in \VanIdeal{\RFEkl}$, the case of zero $r$ is ruled out. Thus $r$ is a multilinear \((k+d-l,l)\)-cored polynomial of degree $d$, where $l+1 \le d \le n-k-1$. \expref{Lemma}{lemma.rfe-hits-multiaffine-coreform} then yields disjoint sets \(K,L\subseteq [n]\) with \(|K|=k\) and \(|L|=l\) so that \(\PartialZero{L}{K}{r}\) is nonzero upon the substitution \eqref{eq.multi-zsub-deriv.mlin.sub}. As both $p$ and $r$ are multilinear, so is $q = p-r$. The contrapositive of \expref{Lemma}{lem.multi-zsub-deriv.mlin} implies that \(\PartialZero{L}{K}{q}\) is zero upon the substitution \eqref{eq.multi-zsub-deriv.mlin.sub}. It follows that \(\PartialZero{L}{K}{p} = \PartialZero{L}{K}{q} + \PartialZero{L}{K}{r}\) is nonzero upon the substitution \eqref{eq.multi-zsub-deriv.mlin.sub}.
\end{proof}

We conclude this section by detailing the connection between
\expref{Theorem}{thm.ideal.membership} and some prior applications of the SV generator.

\paragraph{Application to read-once formulas.}
We start with the theorem that \(\SVparam{1}\) hits read-once formulas.
The original proof in \cite{MinahanVolkovich2017} goes by induction on the
depth of $F$. The critical part is the inductive step for the case where the top gate is an addition, say $F = F_1 + F_2$. The argument in \cite{MinahanVolkovich2017} involves a clever analysis that uses the variable-disjointness of $F_1$ and $F_2$ to show that $F_1(\SVparam{1})$ and $F_2(\SVparam{1})$ cannot cancel each other out. We present an alternate proof that has a similar inductive outline but follows a more structured, principled approach based on \expref{Theorem}{thm.ideal.membership} for the critical part.

\begin{theorem}[\cite{MinahanVolkovich2017}]
    \label{thm.rof}
    $\SV^1$ hits read-once formulas.
\end{theorem}

\begin{proof}[Alternate proof]
We show by induction on the depth the formula $F$ that if $F$ is nonconstant, then so is $F(\SVparam{1})$. This suffices because it implies that nonconstant formulas are hit by $\SVparam{1}$, and nonzero constant formulas are hit as the range of $\SVparam{1}$ is nonempty.

The inductive step consists of two cases, depending on whether
the top gate is a multiplication gate or an addition gate. The case of a multiplication gate follows from the general property that the product of a nonconstant polynomial with any nonzero polynomial is nonconstant. It remains to consider the case of an addition gate.

For a nonconstant formula $F$, $F(\SVparam{1})$ is nonconstant iff $\SVparam{1}$ hits the variable part of $F$ (which is a nonzero polynomial). By \expref{Theorem}{thm.ideal.membership} with $k=0$ and $l=1$, the latter is the case iff at least one of the following two conditions hold:
\begin{enumerate}
\item $F$ has a homogeneous component of degree 1 or at least $n$.
\item For some $L = \{i\} \subseteq [n]$, the derivative $\partial_{x_i} F$ is nonzero upon the substitution \eqref{eq.thm.ideal.membership.sub}.
\end{enumerate}
Consider a read-once formula $F$ with an addition gate on top: $F = F_1 + F_2$. The variable-disjointness of $F_1$ and $F_2$ implies that if condition 1 holds for at least one of $F_1$ or $F_2$, then it holds for $F$. The same is true for condition 2. The inductive step in the case of an addition gate at the top follows. 
\end{proof}

The case of an addition gate in the above proof has a clean geometric interpretation along the lines of the alternating algebra representation that we discussed in \expref{Section}{s.intro} for polynomials that are multilinear (which polynomials computed by read-once formulas are).  Recall that we can think of the variables as vertices, and multilinear monomials as simplices made from those vertices.\footnote{In this setting the orientation of the simplices does not matter.} A multilinear polynomial is a weighted collection of such simplices with weights from \(\FF\).
In this view, \expref{Theorem}{thm.ideal.membership} translates to the following characterization: a weighted collection of simplices corresponds to a polynomial in the vanishing ideal of \(\RFEparam{0}{1}\) iff there are no simplices of zero, one, or all vertices (\expref{condition}{thm.ideal.membership.cond1}), and the remaining weights satisfy a certain system of linear equations (\expref{condition}{thm.ideal.membership.cond2}). 
Crucially, for each equation in the system, there is a vertex such that the equation only involves weights of the simplices \emph{that contain that vertex}, namely the vertex corresponding to the variable $x_i$ where $L = \{i\}$. 
Meanwhile, the sum of two variable-disjoint polynomials corresponds to taking the vertex-disjoint union of two weighted collections of simplices.
It follows directly that if either of the two polynomials violates a requirement besides the ``no simplex of zero vertices'' requirement, then their sum violates the same requirement. The ``no simplex of zero vertices'' requirement holds automatically when considering the variable parts, and maps to the special handling of the constant term in the formal proof. 

\paragraph{Zero-substitutions and partial derivatives.}
As mentioned in the overview,
several prior papers demonstrated the utility of partial derivatives and zero
substitutions in the context of derandomizing PIT using the SV generator,
especially for syntactically multilinear models.
By judiciously choosing variables for those operations,
these papers managed to simplify $p$ and reduce PIT for $p$ to PIT for simpler
instances, resulting in an efficient recursive algorithm.
Such recursive arguments can be wrapped into a general framework, similar to the one presented in \cite{MediniShpilka2021} for generic $l$-independent generators. Whereas the power of the framework in the generic setting remains open, thanks to \expref{Theorem}{thm.ideal.membership}, we can prove that our framework captures the full power of the specific $l$-independent generator $\SVl$. More generally, we exhibit a natural reformulation within the framework of any argument that $\RFE$ hits a certain class of multilinear polynomials, such as those computable with some bounded complexity in some syntactic model.

For the argument, we assume that we can break up the class in the following way.
\begin{definition}[grading hypothesis]
A class $\Class = \bigcup_{k,l \in \N} \Class_{k,l}$ of polynomials satisfies the grading hypothesis if for every \(k,l \in \N\) and \(p\in \Class_{k,l}\),
at least one of the following holds:
\begin{itemize}
  \item \(k=l=0\) and \(p\) is nonzero.
  \item
    \(k > 0\) and there is a zero substitution such that the result is in
    \(\Class_{k-1,l}\).
  \item
    \(l > 0\) and there is a partial derivative such that the result is in
    \(\Class_{k,l-1}\).
\end{itemize}
\end{definition}
Under the additional mild assumption of closure under variable rescaling, we obtain a parameter-efficient framework through
direct applications of \expref{Theorem}{thm.ideal.membership}.
\begin{proposition}\label{prop.zsub-deriv-power}
Let $\Class = \bigcup_{k,l \in \N} \Class_{k,l}$ be a class of polynomials that satisfies the grading hypothesis and such that each $\Class_{k,l}$ is closed under variable rescaling. If $\RFEparam{0}{0}$ hits $\Class_{0,0}$ then 
\(\RFEkl\) hits \(\Class_{k,l}\) for every \(k,l \in \N\).
\end{proposition}
\begin{proof}
  The proof is by induction on \(k\) and \(l\).
  The base case is \(k=l=0\), where the claim is immediate.
  When \(k>0\) or \(l > 0\), our hypotheses are such that $p \in \Class_{k,l}$ either simplifies under a zero substitution
  or a partial derivative. In either case, we show how a violation of the conditions in \expref{Theorem}{thm.ideal.membership} for a simpler polynomial $p' \in \FF[x'_1,\dots,x'_n]$ translates into a corresponding violation of the conditions for $p \in \FF[x_1,\dots,x_n]$, where each variable $x'_i$ is a rescaling of $x_i$. More specifically, by \expref{condition}{thm.ideal.membership.cond1} of
  \expref{Theorem}{thm.ideal.membership}, we may assume that $p$ only has homogeneous components with degrees in the range
  $l+1,\ldots,n-k-1$. We argue in both cases that $p'$ similarly satisfies \expref{condition}{thm.ideal.membership.cond1} of
  \expref{Theorem}{thm.ideal.membership}. By the induction hypothesis and closure under variable rescaling, it follows that $\PartialZeroPrime{L'}{K'}{p'}$ (where the prime in $\partial'$ indicates that the partial derivatives are with respect to the primed variables $x'_i$) is nonzero for some $K'$ and $L'$ under a particular substitution. Out of $K'$ and $L'$ we then construct $K$ and $L$ such that $\PartialZero{L}{K}{p}$ is nonzero upon the substitution in \expref{condition}{thm.ideal.membership.cond2} of
  \expref{Theorem}{thm.ideal.membership}, where variable rescaling between $x'_i$ and $x_i$ enables us to match the substitutions for $\PartialZeroPrime{L'}{K'}{p'}$ and $\PartialZero{L}{K}{p}$. We provide the remaining details for each case separately.
  \begin{itemize}
    \item
      If $p$ simplifies under a zero substitution $x_{j^*}\gets 0$, then
      write $p$ as $p = q x_{j^*} + r$ where $q$ and $r$ are
      polynomials that do not depend on $x_{j^*}$,
      and set $p'(\dots,x'_i,\dots)=r(\dots,x_i,\dots)$ with $x_i=x'_i \cdot (a_i-a_{j^*})$.
      By closure under rescaling, $p'\in \Class_{k-1,l}$, so by induction $p'$ is hit by \(\RFEparam{k-1}{l}\).
      We apply \expref{Theorem}{thm.ideal.membership} to $p'$ with respect to the set of variables \(\{x'_1,\ldots,x'_{j^*-1},x'_{j^*+1},\ldots,x'_n\}\) and $k$ replaced by $k-1$.
      As $p$ only has homogeneous components with degrees in the range $l+1,\ldots,n-k-1$, so does $p'$,
      and \expref{condition}{thm.ideal.membership.cond1} of
      \expref{Theorem}{thm.ideal.membership} holds for $p'$.
      This means that \expref{condition}{thm.ideal.membership.cond2} does not hold for $p'$.
      Thus, there must be disjoint $K',L' \subseteq [n]\setminus\{j^*\}$
      with $|K'|=k-1$ and $|L'|=l$ so that $\PartialZeroPrime{L'}{K'}{p'}$ is nonzero upon the substitution
      \begin{equation}\label{eq-framework-sub}
        x'_i \gets z\cdot \frac%
          { \prod_{j \in K'} (a_i - a_{j}) }%
          { \prod_{j \in L'} (a_i - a_{j}) }.
      \end{equation}
      Setting $K = K' \cup \{j^*\}$ and $L=L'$, we have
      \[ \PartialZero{L}{K}{p} = \PartialZero{L'}{K'}{r} = \PartialZeroPrime{L'}{K'}{p'} \Big/ \prod_{i \in L'} (a_i-a_{j^*}) \]
      and the substitution \eqref{eq-framework-sub} induces the substitution \eqref{eq.thm.ideal.membership.sub}.
      \item
      If $p$ simplifies under a partial derivative
      \(\partial_{x_{j^*}}\),
      then write $p$ as $p = q x_{j^*} + r$ where $q$ and $r$ are polynomials that do not depend on $x_{j^*}$,
      and set
      $p'(\dots, x'_i, \dots) \doteq q(\dots, x_i, \dots)$ with $x_i=x'_i / (a_i-a_{j^*})$. By closure under rescaling, $p'\in \Class_{k,l-1}$, so by induction $p'$ is hit by \(\RFEparam{k}{l-1}\).
      We apply \expref{Theorem}{thm.ideal.membership} to $p'$ with respect to the set of
      variables \(\{x'_1,\ldots,x'_{j^*-1},x'_{j^*+1},\ldots,x'_n\}\) and $l$ replaced by $l-1$.
      As $p'$ has homogeneous components of degrees one less than $p$ does,
      \expref{condition}{thm.ideal.membership.cond1} of
      \expref{Theorem}{thm.ideal.membership} holds for $p'$, so \expref{condition}{thm.ideal.membership.cond2} must fail. Thus, 
      there are disjoint $K',L' \subseteq [n]\setminus\{j^*\}$ with
      $|K'|=k$ and $|L'|=l-1$ so that $\PartialZeroPrime{L'}{K'}{p'}$ is nonzero upon the substitution
      \eqref{eq-framework-sub}.
      Setting $K = K'$ and $L=L' \cup \{j^*\}$, we have 
      \[ \PartialZero{L}{K}{p} = \PartialZero{L'}{K'}{q} = \PartialZeroPrime{L'}{K'}{p'} \cdot \prod_{i \in L'} (a_i-a_{j^*}) \]
      and the substitution \eqref{eq-framework-sub} induces the substitution \eqref{eq.thm.ideal.membership.sub}. 
  \end{itemize}
  In both cases we conclude that $\PartialZero{L}{K}{p}$ is nonzero upon the substitution \eqref{eq.thm.ideal.membership.sub}, which is the sought violation of \expref{condition}{thm.ideal.membership.cond2} of \expref{Theorem}{thm.ideal.membership}.
\end{proof}

We remark that the mild requirement of closure under variable rescaling in \expref{Proposition}{prop.zsub-deriv-power} can be dropped completely at the cost of reduced efficiency in parameters.\footnote{This is a setting where we exploit the possibility of the sets $K$ and $L$ in the \hyperref[lem.multi-zsub-deriv]{Zoom Lemma} to overlap.}
\begin{proposition}\label{prop.zsub-deriv-power-variant}
Let $\Class = \bigcup_{k,l \in \N} \Class_{k,l}$ be a class of polynomials that satisfies the grading hypothesis. If $\RFEparam{0}{0}$ hits $\Class_{0,0}$ then 
\(\RFEparam{k+l}{k+l}\) hits \(\Class_{k,l}\) for every \(k,l \in \N\).
\end{proposition}
\begin{proof}[Proof sketch]
The strategy is the same as in the proof of \expref{Proposition}{prop.zsub-deriv-power}, but in the inductive step the index $i^*$ is added to both $K'$ and $L'$ instead of just one of the two sets. This obviates the need for rescaling to ensure that the substitutions match. Note that the resulting sets $K$ and $L$ are no longer disjoint, but the general Zoom Lemma accommodates overlapping sets $K$ and $L$.
\end{proof}

\expref{Theorem}{thm.ideal.membership} tells us that derivatives and zero substitutions
suffice to witness when a multilinear polynomial \(p\) is hit by \(\SV\) or
\(\RFE\).
One can ask,
if we know more information about \(p\),
can we infer \emph{which} derivatives and zero substitutions form a witness?
In some cases we know.
For example,
if \(p\) has a low-support monomial \(x_1\cdots x_l\),
then it suffices to take derivatives with respect to each of
\(x_1,\ldots,x_l\).
On the other hand,
consider that whenever two polynomials \(p\) and \(q\) are hit by \(\SV\),
then so is their product \(pq\).
Given explicit witnesses for \(p\) and \(q\),
we do not know how to obtain an explicit witness for the product \(pq\).

\section{Sparseness}
\label{s.sparse}

By \expref{Proposition}{prop.vangen.basic},
the generators $\VanGenkl$ contain exactly $\binom{k+l+2}{l+1}$ monomials.
The following result shows that no nonzero polynomial in the vanishing ideal
of $\RFEkl$ has fewer monomials.
\expref{Corollary}{cor.rfe.sparseness} follows.

\begin{lemma}
  \label{lem.rfe-hits-sparse}
  Suppose \(p \in \FF[x_1,\ldots,x_n]\)
  is nonzero and has only \(s\) monomials with nonzero coefficients.
  Then, for any \(k,l\) such that \(\binom{k+l+2}{l+1} > s\),
  \(\RFEkl\) hits \(p\).
\end{lemma}
\noindent
The tactic here is to show that,
if \(p\) has too few monomials appearing in it,
then there is a way to instantiate the \hyperref[lem.multi-zsub-deriv]{Zoom Lemma}
wherein \(p_{d^*}\) is a single monomial
and therefore is nonzero upon the substitution \eqref{eq.multi-zsub-deriv.sub}.
\begin{proof}
  For \(i\in [n]\), we define two operations,
  \(\downarrow_i\) and \(\uparrow_i\),
  on nonempty sets of monomials.
  Applying $\downarrow_i$ to such a set \({M}\) yields the subset of \({M}\) consisting of the monomials in which
  \(x_i\) appears with its least degree among all the monomials in
  \({M}\).
  We define \(\uparrow_i\) similarly,
  except we select the monomials in which \(x_i\) appears with its highest
  degree.
  We make the following claim:

  \begin{claim}
    \label{claim.lem.rfe-hits-sparse}
    For any nonempty set of monomials with fewer than
    \(\binom{k+l+2}{l+1}\) monomials,
    there is a sequence of \(\downarrow\) and \(\uparrow\) operations,
    with at most \(k\) \(\downarrow\) operations and at most \(l\)
    \(\uparrow\) operations,
    such that the resulting set of monomials has exactly one element.
  \end{claim}

  The claim implies the lemma as follows.
  Let \({M}\) be the set of monomials with nonzero coefficient in
  \(p\).
  Apply the claim to \({M}\) to get a sequence of \(\downarrow\)
  and \(\uparrow\) operations resulting in a single monomial
  \({m}_0\).
  Let \(K\) denote the indices used for the \(\downarrow\)
  operations and \(L\) the indices used for the \(\uparrow\)
  operations.
  Let \(d^*\) be the degree pattern with domain \(K \cup L\)
  that matches \({m}_0\).
  By how the operators are defined,
  every monomial \({m}\) in \({M}\) satisfies either
  \begin{itemize}
    \item
      \(\deg_{x_i}({m}) > d^*_i\) for some \(i\in K\)
      (\({m}\) was removed by \(\downarrow_i\)),
    \item
      \(\deg_{x_i}({m}) < d^*_i\) for some \(i\in L\)
      (\({m}\) was removed by \(\uparrow_i\)),
      or
    \item \(\deg_{x_i}({m}) = d^*_i\) for every \(i\in
      K\cup L\), in which case \({m} = {m}_0\).
  \end{itemize}
  Accordingly,
  $d^*$ is $(K,L)$-extremal in $p$ and the \hyperref[lem.multi-zsub-deriv]{Zoom Lemma} applies.
  As \(p_{d^*}\) is a single monomial,
  it is nonzero upon the substitution \eqref{eq.multi-zsub-deriv.sub}.
  As $|K|\le k$ and $|L|\le l$, we conclude that \(p\) is hit by \(\RFEkl\).

  \medskip

  It remains to prove \expref{Claim}{claim.lem.rfe-hits-sparse}.
  We do this by induction on \(|{M}|\).
  In the base case, \(|{M}| = 1\), in which case the empty
  sequence suffices.
  Otherwise, \(|{M}| > 1\), in which case there is a variable
  \(x_i\) that appears with at least two distinct degrees among monomials in
  \({M}\).
  The sets \({\downarrow_i}({M})\) and \({\uparrow_i}({M})\) are
  nonempty and disjoint.
  Since \({M}\) has size less than
  \(\binom{k+l+2}{l+1} = \binom{k+l+1}{l+1} + \binom{k+l+1}{l}\),
  either \({\downarrow_i}({M})\) has size less than
  \(\binom{k+l+1}{l+1}\),
  or \({\uparrow_i}({M})\) has size less than \(\binom{k+l+1}{l}\).
  Whichever is the case, the claim follows by applying the inductive
  hypothesis to it.
\end{proof}

\section{Set-Multilinearity}
\label{s.set-multiaffine}

Although the generators $\VanGenkl$ provided by
\expref{Theorem}{thm.ideal.generators} are not set-multilinear,
the vanishing ideal of $\RFEkl$ does contain set-multilinear polynomials.
In this section,
we construct some of degree $l+1$ with partition classes of size $k+2$.
In fact, we argue that all set-multilinear polynomials in $\VanIdeal{\RFEkl}$
of degree $l+1$ are in the linear span of the ones we construct.

Our construction is a modification of the one for $\VanGenkl$.
\begin{definition}
  \label{def.smvan1}
  Let \(k,l,n \in \N\),
  and let \(S_1,\ldots,S_{l+1}\subseteq [n]\) be \(l+1\) disjoint
  subsets of \(k+2\) indices each.
  The polynomial \(\SMVanOnekl\) is an \((l+1)\times(l+1)\) determinant
  where each entry is itself a \((k+2)\times(k+2)\) determinant.
  We index the rows in the outer determinant by \(i=1,\ldots,l+1\),
  and the columns by \(d=l,\ldots,0\).
  In each \((i,d)\)-th inner matrix, there is one row per \(j\in S_i\); it is
  \[
    \begin{bmatrix}
        a_j^k & a_j^{k-1} & \cdots & a_j^1 & a_j^0 & a_j^d x_j
    \end{bmatrix}.
  \]
\end{definition}
\noindent
The name ``\(\SMVanOne\)'' is a shorthand for ``Elementary Set-Multilinear
Vandermonde Circulation''.
Similar to \(\VanGen\), the precise instantiation of \(\SMVanOne\)
requires one to pick an order for the sets \(S_1,\ldots,S_{l+1}\) and an order within each set. 

\begin{example}
  \label{ex.smvan1.k1l2}
  When \(k=1\) and \(l=2\), \(\SMVanOne\) uses three sets of three
  variables each.
  To help convey the structure of the determinant,
  we name the variable-sets
  \(S_1 = \{x_1,x_2,x_3\}\),
  \(S_2 = \{y_1,y_2,y_3\}\), and
  \(S_3 = \{z_1,z_2,z_3\}\),
  and denote the abscissa of \(x_i\) by \(a_i\),
  the abscissa of \(y_i\) by \(b_i\),
  and the abscissa of \(z_i\) by \(c_i\).
  With this notation and using the index ordering, \(\SMVanOne\) is the following:
  \begingroup\renewcommand{\arraystretch}{1.5}
  \[
    \begin{vmatrix*}
      \begin{vmatrix*}
        a_1^1 & a_1^0 & a_1^2 x_1 \\
        a_2^1 & a_2^0 & a_2^2 x_2 \\
        a_3^1 & a_3^0 & a_3^2 x_3
      \end{vmatrix*}
      &
      \begin{vmatrix*}
        a_1^1 & a_1^0 & a_1^1 x_1 \\
        a_2^1 & a_2^0 & a_2^1 x_2 \\
        a_3^1 & a_3^0 & a_3^1 x_3 
      \end{vmatrix*}
      &
      \begin{vmatrix*}
        a_1^1 & a_1^0 & a_1^0 x_1 \\
        a_2^1 & a_2^0 & a_2^0 x_2 \\
        a_3^1 & a_3^0 & a_3^0 x_3 
      \end{vmatrix*}
      \\
      \begin{vmatrix*}
        b_1^1 & b_1^0 & b_1^2 y_1 \\
        b_2^1 & b_2^0 & b_2^2 y_2 \\
        b_3^1 & b_3^0 & b_3^2 y_3 
      \end{vmatrix*}
      &
      \begin{vmatrix*}
        b_1^1 & b_1^0 & b_1^1 y_1 \\
        b_2^1 & b_2^0 & b_2^1 y_2 \\
        b_3^1 & b_3^0 & b_3^1 y_3 
      \end{vmatrix*}
      &
      \begin{vmatrix*}
        b_1^1 & b_1^0 & b_1^0 y_1 \\
        b_2^1 & b_2^0 & b_2^0 y_2 \\
        b_3^1 & b_3^0 & b_3^0 y_3 
      \end{vmatrix*}
      \\
      \begin{vmatrix*}
        c_1^1 & c_1^0 & c_1^2 z_1 \\
        c_2^1 & c_2^0 & c_2^2 z_2 \\
        c_3^1 & c_3^0 & c_3^2 z_3 
      \end{vmatrix*}
      &
      \begin{vmatrix*}
        c_1^1 & c_1^0 & c_1^1 z_1 \\
        c_2^1 & c_2^0 & c_2^1 z_2 \\
        c_3^1 & c_3^0 & c_3^1 z_3 
      \end{vmatrix*}
      &
      \begin{vmatrix*}
        c_1^1 & c_1^0 & c_1^0 z_1 \\
        c_2^1 & c_2^0 & c_2^0 z_2 \\
        c_3^1 & c_3^0 & c_3^0 z_3 
      \end{vmatrix*}
    \end{vmatrix*}.
  \]
  \endgroup
\end{example}

The elementary properties of $\VanGenkl$ from \expref{Proposition}{prop.vangen.basic} extend as follows to $\SMVanOnekl$.
\begin{proposition}
  \label{prop.smvan1-basic}
  For any \(k,l\in \N\) and index sets \(S_1,\ldots,S_{l+1}\) as in \expref{Definition}{def.smvan1}, 
  \(\SMVanOnekl\) is skew-symmetric with respect to the order of the sets
  \(S_1,\ldots,S_{l+1}\), and the choice of order within each set, in that any permutation thereof changes the construction by merely multiplying by the sign of the permutation. For any order,
  \(\SMVanOnekl\) is nonzero, homogeneous of degree
  \(l+1\), and set-multilinear with respect to the partition \(S_1, \dots, S_{l+1}\), and
  every monomial consistent with the partitions appears with a nonzero coefficient.
  When the sets are ordered as \(S_1,\ldots,S_{l+1}\)
  and the variables associated with $S_i$ are labeled and ordered as
  \((x_{i,1},\ldots,x_{i,k+2})\) for \(i=1,\ldots,l+1\),
  the coefficient of \(x_{1,1}\cdot\cdots\cdot x_{l+1,1}\)
  equals
  \begin{equation}\label{eq.smvan.coef} 
  (-1)^{(k+1)(l+1)} \cdot 
    \begin{vmatrix*}
      a_{1,1}^l & \cdots & a_{1,1}^0 \\
      \vdots & \ddots & \vdots \\
      a_{l+1,1}^l & \cdots & a_{l+1,1}^0
    \end{vmatrix*}
    \;\cdot\;
    \prod_{i=1}^{l+1}
      \begin{vmatrix*}
        a_{i,2}^k & \cdots & a_{i,2}^0 \\
        \vdots & \ddots & \vdots \\
        a_{i,k+2}^k & \cdots & a_{i,k+2}^0
      \end{vmatrix*}.
  \end{equation}
\end{proposition}
\begin{proof}
  All assertions to be proved follow from elementary properties of
  determinants, that Vandermonde determinants are nonzero unless they
  have duplicate rows, and the following computation for the coefficient of \(x_{1,1}\cdot\cdots\cdot x_{l+1,1}\):
  Plug 1 into \(x_{i,1}\) for \(i=1,\ldots,l+1\) and 0 into the
  remaining variables, and minor expand along the last column each of the inner determinants. Due to the minor expansions, the elements in the $i$-th row of the outer determinant have a common factor of $(-1)^{k+1}$ times the $(k+1) \times (k+1)$ determinant for that value of $i$ in the product on the right-hand side of \eqref{eq.smvan.coef}. After removing those common factors from all $l+1$ rows, the remaining $(l+1) \times (l+1)$ outer determinant equals the determinant in the middle of \eqref{eq.smvan.coef}.
\end{proof}

The following theorem formalizes the role \(\SMVanOne\) plays among the degree-\((l+1)\) polynomials with respect to \(\VanIdeal{\RFEkl}\).
\begin{theorem}
  \label{thm.smvan1}
  Let \(k,l \in \N\)
  and let \(X_1, \ldots, X_{l+1}\) be \(l+1\) disjoint sets of
  indices (of any size).
  The linear span of $\SMVanOnekl[S_1,\dots,S_{l+1}]$, over all choices of
  $S_i\subseteq X_i$ with $|S_i|=k+2$, equals the set-multilinear 
  polynomials in $\VanIdeal{\RFEkl}$ with variable partition $(X_1, \dots, X_{l+1})$.
\end{theorem}
\expref{Theorem}{thm.smvan1} and \expref{Proposition}{prop.smvan1-basic} imply \expref{Corollary}{cor.rfe.partition-class-size} that there are no set-multilinear polynomials of degree $l+1$ in $\VanIdeal{\RFEkl}$ that have at least one partition $X_i$ of size less than $k+2$.

The proof of \expref{Theorem}{thm.smvan1} follows the same outline as the one of \expref{Theorem}{thm.ideal.generators} in \expref{Section}{s.gens}. We start by showing that all instantiations of $\SMVanOnekl$ are contained in $\VanIdeal{\RFEkl}$ using a similar argument as that for $\VanGenkl$.

\begin{lemma}
  \label{lemma.smvanone-vanishes}
  For every \(k,l \in \N\) and every choice of \(l+1\) disjoint sets
  \(S_1,\dots,S_{l+1}\) of \(k+2\) indices each,
  \(\SMVanOnekl[S_1,\dots,S_{l+1}]\) vanishes at \(\RFEkl\).
\end{lemma}
\begin{proof}
  Let \(g/h\) be a seed for \(\RFEkl\).
  Let \(A\) be the \((l+1)\times(l+1)\) outer matrix defining
  \(\SMVanOne\), so that \(\SMVanOne \doteq \det(A)\).
  Recall that the columns of \(A\) are indexed by \(d=l,\ldots,0\).
  Let \(\vec{h}\in\FF^{l+1}\) be the column vector where the row indexed by
  \(d\) is the coefficient of \(\alpha^d\) in \(h(\alpha)\).
  We show that, after substituting \(\RFEkl(g/h)\),
  the matrix-vector product \(A\vec{h}\in \FF^{l+1}\) yields the zero
  vector.
  It follows that evaluating \(\SMVanOne\) at \(\RFEkl(g/h)\) vanishes,
  as it is the determinant of a singular matrix.

  Fix \(i\in\{1,\ldots,l+1\}\), and focus on the \(i\)-th coordinate of
  \(A\vec{h}\).
  The \((i,d)\) entry of \(A\) is a determinant;
  let \(B_{i,d}\) be the inner matrix as in \expref{Definition}{def.smvan1}.
  As \(d\) varies, only the last column of \(B_{i,d}\) changes.
  Thus, by multilinearity of the determinant,
  the \(i\)-th entry of \(A\vec{h}\) is itself a determinant.
  Recalling that the rows of \(B_{i,l},\ldots,B_{i,0}\) are indexed by
  \(j\in X_i\),
  the \(j\)-th row of this determinant is
  \[
    \begin{bmatrix}
      a_j^k & \cdots & a_j^0 & h(a_j) x_j 
    \end{bmatrix}.
  \]
  After substituting \(\RFEkl(g/h)\), it becomes
  \[
    \begin{bmatrix}
      a_j^k & \cdots & a_j^0 & g(a_j) 
    \end{bmatrix}.
  \]
  Since \(g\) is a degree-\(k\) polynomial,
  the columns of $B_{i,d}$ are linearly dependent,
  so the determinant is zero.
\end{proof}

Next, we argue that every polynomial in $\VanIdeal{\RFEkl}$ that is set-multilinear with respect to the variable partition $(X_1, \dots, X_{l+1})$ is in the ideal $I$ generated by the instantiations of $\SMVanOnekl$ in the statement of \expref{Theorem}{thm.smvan1}. We use a similar two-step approach as for 
\expref{Theorem}{thm.ideal.generators} in \expref{Section}{s.gens}.
\begin{enumerate}
  \item Modulo the ideal $I$, every polynomial $p$ is equal to a polynomial $r$ (depending on $p$) with a certain structure (\expref{Lemma}{lemma.smvanone-multicoredform}).
  \item Every nonzero polynomial $r$ that has the structure and is is set-multilinear with respect to the variable partition $(X_1, \dots, X_{l+1})$ is hit by $\RFEkl$ (\expref{Lemma}{lemma.rfe-hits-multicored}).
\end{enumerate}

For step~1, we need a suitable replacement for being \((c,t)\)-cored.
The following adaptation to the set-multilinear setting
suffices.
\begin{definition}
  \label{def.smcored}
  Let \(X_1, \dots, X_d \subseteq [n]\) be disjoint sets of
  indices.
  A polynomial that is
  set-multilinear with respect to the partition $(X_1, \dots, X_d)$ is
  \emph{\((c,t)\)-multi-cored} if
  there exists a set $C\doteq C_1\sqcup\dots\sqcup C_d$, with $C_i\subseteq X_i$, $|C_i|\le c$,
  such that every monomial $m$ of the polynomial satisfies $|\supp(M)\setminus C|\le t$.
\end{definition}

We refer to the set $C$ in \expref{Definition}{def.smcored} as a multi-core.

\begin{lemma}
    \label{lemma.smvanone-multicoredform}
    Let $k,l\in\N$ and let $X_1, \dots, X_{l+1}\subseteq [n]$
    be disjoint sets of indices. Suppose $C\doteq C_1\sqcup\cdots\sqcup C_{l+1}$ is a set of indices
    such that $C_i\subseteq X_i$ and $|C_i| = k+1$. Let $I$ be the ideal generated by
    the polynomials $\SMVanOnekl[C_1\sqcup \{j_1\},\dots,C_{l+1}\sqcup \{j_{l+1}\}]$, 
    where $j_i\in X_i\setminus C_i$. Modulo $I$, every set-multilinear polynomial with respect to the variable partition
    \(X_1, \dots,X_{l+1}\) equals a \((k+1,l)\)-multi-cored polynomial with multi-core $C$.
\end{lemma}

\begin{proof}
    By linearity it suffices to establish the result for any 
    monomial $m$ that is set-multilinear with respect to the partition $(X_1, \dots, X_{l+1})$. 
    If $\supp(m) \cap C$ is nonempty, then $m$ is already $(k+1,l)$-multi-cored with multi-core $C$ because $m$ only has $l+1$ variables in its support. Otherwise, let $m=x_{j_1}\cdots x_{j_{l+1}}$. By \expref{Proposition}{prop.smvan1-basic}, the polynomial $\SMVanOnekl[C_1\sqcup\{j_1\},\dots, C_{l+1}\sqcup\{j_{l+1}\}]$ can be written as $c \cdot m + r$ where $c \in \FF$ is nonzero and $r$ is a linear combination of monomials $m'$ that are set-multilinear with respect to the partition $(X_1, \dots, X_{l+1})$ and such that $\supp(m') \cap C$ is nonempty. The result for $m$ follows by writing $m \equiv -c^{-1} \cdot r \bmod I$.
\end{proof}

Step 2 is another application of the Zoom Lemma. We make use of the version geared towards multilinear polynomials, namely \expref{Lemma}{lem.multi-zsub-deriv.mlin}.

\begin{lemma}
  \label{lemma.rfe-hits-multicored}
  Let \(k,l\in \N\) 
  and let \(X_1, \dots, X_{l+1} \subseteq [n]\) be disjoint sets of indices.
  Every nonzero polynomial that is set-multilinear with respect to
  the partition $(X_1,\dots,X_{l+1})$ and that is \((k+1,l)\)-multi-cored
  is hit by \(\RFEkl\).
\end{lemma}

\begin{proof}
  Let $r$ satisfy the hypotheses of the lemma with multi-core $C$. Let $m^*$ be a monomial in $r$ for which $\supp(m^*) \setminus C$ is maximal with respect to inclusion. Such a monomial exists because $r$ is nonzero. Let $j^* \in \supp(m^*) \cap C$. Such an index exists since $|\supp(m^*)|=l+1$ and $|\supp(m^*) \setminus C| \le l$ by the multi-core property. Let $i^* \in [l+1]$ be such that $j^* \in X_{i^*}$. Set $K \doteq C \cap X_{i^*} \setminus \{j^*\}$ and $L \doteq \supp(m^*) \setminus \{j^*\}$. Note that $|K| \le (k+1)-1=k$ and $|L| \le (l+1)-1=l$. By set-multilinearity, monomials $m$ in $r$ for which \(\PartialZero{L}{K}{m}\) is nonzero need to have the form $x_j \cdot x^L$ where $j \in X_{i^*} \setminus K$. The monomial $m^*$ is of the form with $j=j^*$. By the maximality of $m^*$, any monomial in $r$ of the form has to have $j \in C$. Since $C \cap (X_{i^*} \setminus K) = \{j^*\}$, it follows that $m^*$ is the only monomial in $r$ that contributes to $\PartialZero{L}{K}{r}$. Since $\PartialZero{L}{K}{m^*} = x_{j^*}$, it follows that 
  $\PartialZero{L}{K}{r}$ is nonzero upon the substitution
  \eqref{eq.multi-zsub-deriv.mlin.sub}. We conclude that \(\RFEkl\) hits $r$ by \expref{Lemma}{lem.multi-zsub-deriv.mlin}.
\end{proof}

We now have all ingredients to establish \expref{Theorem}{thm.smvan1}. 

\begin{proof}[Proof of \expref{Theorem}{thm.smvan1}]
    Let $\mathcal{S} \doteq (S_1, \dots, S_{l+1})$ range as in the statement. The linear span of the polynomials $\SMVanOnekl[\mathcal{S}]$ is set-multilinear with respect to the variable partition $(X_1, \dots, X_{l+1})$ by \expref{Proposition}{prop.smvan1-basic}, and in \VanIdeal{\RFEkl} by 
    \expref{Lemma}{lemma.smvanone-vanishes}. 
    In the other direction, the combination of \expref{Lemma}{lemma.smvanone-multicoredform} and \expref{Lemma}{lemma.rfe-hits-multicored} imply that every polynomial $p \in \VanIdeal{\RFEkl}$ that is set-multilinear with respect to the variable partition $(X_1, \dots, X_{l+1})$ falls inside the ideal $I$ generated by the polynomials $\SMVanOnekl[\mathcal{S}]$, \ie, $p = \sum_{\mathcal{S}} q_{\mathcal{S}} \SMVanOnekl[\mathcal{S}]$ for some polynomials $q_{\mathcal{S}}$.
    As all polynomials $\SMVanOnekl[\mathcal{S}]$ as well as $p$ are homogeneous of degree $l+1$, it follows that each $q_{\mathcal{S}}$ can be replaced by its constant term. 
\end{proof}

\section{Read-Once Oblivious Algebraic Branching Programs}
\label{s.constant-width-roabp}

In this section we provide some background on ROABPs and establish
\expref{Theorem}{thm.roabp}.

\subsection{Background}
\label{s.roabp.background}

Algebraic branching programs are a syntactic model for algebraic
computation.
One forms a directed graph with a designated source and sink.
Each edge is labeled by a polynomial that depends on at most one variable
among \(x_1,\ldots,x_n\).
The branching program computes a polynomial in \(\FF[x_1,\ldots,x_n]\) by
summing,
over all source-to-sink paths,
the product of the labels on the edges of each path.

A special subclass of algebraic branching programs are read-once oblivious
algebraic branching programs (ROABPs).
In this model, the vertices of the branching program are organized in layers.
The layers are totally ordered,
and edges exist only from one layer to the next.
For each variable,
there is at most one consecutive pair of layers between which that variable
appears,
and for each pair of consecutive layers,
there is at most one variable that appears between them.
In this way, every source-to-sink path reads each variable at most once
(the branching program is \emph{read-once}),
and the order in which the variables are read is common to all paths (the
branching program is \emph{oblivious}).
We can always assume that the number of layers equals one plus the number of
variables under consideration.

The number of vertices comprising a layer is called its \emph{width}.
The width of an ROABP is the largest width of its layers.
The minimum width of an ROABP computing a given polynomial can be
characterized in terms of the rank of coefficient matrices constructed as
follows.
\begin{definition}
  \label{def.var-split-poly-matrix}
  Let \(U\sqcup V = [n]\) be a partition of the variable indices,
  and let \({M}_U\) and \({M}_V\) be the sets of monomials
  that are supported on variables indexed by \(U\) and \(V\), respectively.
  For any polynomial \(p\in\FF[x_1,\ldots,x_n]\) define the matrix
  \[
    \CMat{U}{V}(p)
    \in
    \FF^{{M}_U \times {M}_V}
  \]
  by setting the \(({m}_U,{m}_V)\) entry to equal
  the coefficient of \({m}_U{m}_V\) in \(p\).
\end{definition}
\(\CMat{U}{V}(p)\) is formally an infinite matrix,
but it has only finitely many nonzero entries.
When \(p\) has degree at most \(d\),
one can just as well truncate \(\CMat{U}{V}(p)\) to include only rows and
columns indexed by monomials of degree at most \(d\).

\begin{lemma}[\cite{Nisan1991}]
  \label{lem.roabp-char}
  Let \(p\in\FF[x_1,\ldots,x_n]\) be any polynomial.
  There is an ROABP of width \(w\) computing \(p\) in the variable order
  \(x_1,\ldots,x_n\) if and only if,
  for every \(s \in \{0,\ldots,n\}\),
  with respect to the partition
  \(U = \{1,\ldots,s\}\) and \(V = \{{s+1},\ldots,n\}\),
  we have
  \[
    \rank( \CMat{U}{V}(p) ) \le w.
  \]
\end{lemma}

We group the monomials in $M_U$ and $M_V$ by their degrees
and order the groups by increasing degree.
This induces a block structure on $\CMat{U}{V}(p)$ with one block for every
choice of $r, c \in \N$;
the \((r,c)\) block is the submatrix with rows indexed by degree-\(r\)
monomials in \(M_U\) and columns indexed by degree-\(c\) monomials in \(M_V\).
In the case where $p$ is homogeneous,
the only nonzero blocks occur for $r+c$ equal to the degree of $p$.
In this case the rank of \(\CMat{U}{V}(p)\) is the sum of the ranks of its
blocks.

In general,
the rank of $\CMat{U}{V}(p)$ is at least the rank of
$\CMat{U}{V}(p^{(\min)})$,
where $p^{(\min)}$ denotes the homogeneous component of $p$ of the lowest degree,
$d_{\min}$.
This follows because the submatrix of $\CMat{U}{V}(p)$ consisting of the rows
and columns indexed by monomials of degree at most $d_{\min}$ has a block
structure that is triangular with the blocks of $\CMat{U}{V}(p^{(\min)})$ on
the hypotenuse.
The observation yields the following folklore consequence of
\expref{Lemma}{lem.roabp-char}.
\begin{proposition}
  \label{prop.roabp-homog}
  \def\pdn{p^{(\min)}}
  Let \(p\in\FF[x_1,\ldots,x_n]\) be any nonzero polynomial,
  and let \(\pdn\) be the nonzero homogeneous component of \(p\) of least degree.
  If \(p\) can be computed by an ROABP of width \(w\),
  then so can \(\pdn\).
\end{proposition}

\subsection{Hitting property / lower bound}
\label{s.roabp.hitting-lower}

We now prove the ROABP hitting property of $\SV$ given in \expref{Theorem}{thm.roabp} and the equivalent ROABP lower bound given in \expref{Theorem}{thm.roabp.lb}. Both theorems follow from the next statement in a standard way. 
\begin{theorem}
  \label{thm.application.roabp.hardpart}
  For any integer $l \ge 1$,
  every nonzero multilinear homogeneous polynomial of degree $l+1$ in the vanishing ideal of $\SVl$ requires ROABP width at least $(l/3)+1$.
\end{theorem} 
For completeness, before proving \expref{Theorem}{thm.application.roabp.hardpart}, we argue how our ROABP hitting property and lower bound follow.
\begin{proof}[Proof of \expref{Theorem}{thm.roabp} and \expref{Theorem}{thm.roabp.lb}]
  The theorems are equivalent by complementation. We explain how \expref{Theorem}{thm.roabp} follows from \expref{Theorem}{thm.application.roabp.hardpart}.

  \def\pdn{p^{(\min)}}
  Fix \(p\) satisfying the hypotheses of \expref{Theorem}{thm.roabp}.
  We show that \(\RFElml\) hits \(p\);
  this implies \(\SVl\) hits \(p\) because \(\RFElml\) and \(\SVl\) are
  equivalent up to variable rescaling, and rescaling variables does not affect
  ROABP width.

  If $p$ contains a monomial depending on at most $l$ variables, then \expref{Proposition}{prop.ideal.membership.degree-bound} implies that $\RFElml$ hits $p$. The remaining case is when the homogeneous component $\pdn$ of the least degree is multilinear of degree $l+1$. By \expref{Proposition}{prop.roabp-homog}, $\pdn$ has ROABP width less than $(l/3)+1$. By \expref{Theorem}{thm.application.roabp.hardpart}, $\pdn$ is hit by $\RFElml$, and by \expref{Proposition}{prop.rfe-homog} so is $p$.
\end{proof}

In the remainder of this section we establish
\expref{Theorem}{thm.application.roabp.hardpart}. We do not try to optimize the dependence of the bound on $l$.

Fix a positive integer \(l\),
and fix an arbitrary variable order, say \(x_1,\ldots,x_n\).
We show that, for every polynomial \(p\) that is nonzero, multilinear, and homogeneous of
degree \(l+1\),
and that belongs to the vanishing ideal of \(\RFElml\),
there exists some \(s\in\{0,\ldots,n\}\)
so that,
with respect to the partition
\(U=\{1,\ldots,s\}\), \(V = \{{s+1},\ldots,n\}\),
it holds that \(\rank(\CMat{U}{V}(p)) \ge (l/3)+1\).
\expref{Theorem}{thm.application.roabp.hardpart} then follows by
\expref{Lemma}{lem.roabp-char}.

Let $C \doteq \CMat{U}{V}(p)$.
As $p$ is homogeneous of degree $l+1$,
\(C\) is block diagonal,
with a block \(C_d\) for each \(d \in \{0,\ldots,l+1\}\)
consisting of the rows indexed by monomials of degree \(d\)
and the columns indexed by monomials of degree \(l+1-d\).
The block diagonal structure implies $\rank(C) = \sum_{d=0}^{l+1} \rank(C_d)$.

Via \expref{condition}{thm.ideal.membership.cond2} of \expref{Theorem}{thm.ideal.membership}, the hypothesis that \(p\) belongs to \(\VanIdeal{\RFElml}\) induces linear equations on the entries in the blocks \(C_d\). For homogeneous polynomials like $p$, the condition stipulates that for all disjoint subsets $K,L \subseteq [n]$ with $|K| = k = l-1$ and $|L| = l$, \(\PartialZero{L}{K}{p}\) vanishes at the point \eqref{eq.thm.ideal.membership.sub} with $z=1$. This is a linear equation in the coefficients of \(\PartialZero{L}{K}{p}\), which are entries in the blocks $C_d$ of $C$. In fact, each of these equations only reads entries from two \emph{adjacent} blocks, \ie, blocks $C_d$ and $C_{d'}$ with \(|d - d'| = 1\). This is because \(L\) has size \(l\), one less than the degree of \(p\), so the only monomials that contribute to $\PartialZero{L}{K}{p}$ are those that are one variable \(x_i\) times the product of the variables indexed by \(L\). It follows that the corresponding linear equation on \(C\) reads only entries that reside in the blocks $C_{|L \cap U|+1}$ (for $i \in U$) and $C_{|L\cap U|}$ (for $i \in V$).

We exploit the structure of these equations and argue that, for an appropriate choice of the partition index $s$, $\rank(C)$ is high. 

\paragraph{Ingredients.}
Our analysis has four ingredients.
The first ingredient is the fact that $\rank(C)$ is at least the number of nonzero blocks $C_d$. This is because a nonzero block has rank at least 1, and $\rank(C)$ is the sum of the ranks of the blocks. This simple observation means we can focus on situations where relatively few of the blocks are nonzero.

The second ingredient establishes an alternative lower bound on \(\rank(C)\) in terms of the minimum distance between a nonzero block \(C_d\) and either extreme (\(d=0\) or \(d=l+1\)). Another way to think about this distance is as the maximum \(\minsplitdeg\) such that every monomial in \(p\) depends on at least \(\minsplitdeg\) variables indexed by \(U\) and at least \(\minsplitdeg\) variables indexed by \(V\).

\begin{lemma}
  \label{lem.application.roabp.core}
  Let \(p\in\VanIdeal{\RFEparam{l-1}{l}}\) be nonzero, multilinear, and
  homogeneous of degree \(l+1\),
  let \(U\sqcup V\) be a partition of \([n]\),
  and let \(C \doteq \CMat{U}{V}(p)\).
  If every monomial in p depends on at least $\minsplitdeg$ variables indexed
  by $U$ and at least $\minsplitdeg$ variables indexed by $V$,
  then \(\rank(C) \ge \minsplitdeg+1\).
\end{lemma}

The proof involves revisiting the equations from \expref{condition}{thm.ideal.membership.cond2} of \expref{Theorem}{thm.ideal.membership} and modifying%
\footnote{This is a setting where we exploit the possibility of the sets $K$ and $L$ in the \hyperref[lem.multi-zsub-deriv]{Zoom Lemma} to overlap.}
the underlying instantiations of \expref{Lemma}{lem.multi-zsub-deriv} to obtain a system of linear equations with a simple enough structure that we can analyze.

The remaining ingredients allow us to reduce to situations where either the
first or second ingredient applies.
The third ingredient lets us fix any two zero blocks
and zero out all the blocks that are not between them.

\begin{proposition}\label{prop:ROABP:zero}
  Let $p\in\VanIdeal{\RFEparam{l-1}{l}}$ be multilinear
  and homogeneous of degree $l+1$.
  Let $U\sqcup V$ be a partition of $[n]$,
  and let $C \doteq \CMat{U}{V}(p)$.
  Suppose that for some $d_1,d_2 \in \{-1,\dots,l+2\}$ with $d_1 \le d_2$,
  we have $C_{d_1} = 0$ and $C_{d_2} = 0$,
  where \(C_{-1} \doteq 0\) and \(C_{l+2} \doteq 0\).
  Let $p'$ be the polynomial obtained from $p$ by zeroing out the blocks
  $C_d$ with $d < d_1$ or $d > d_2$.
  Then $p'$ belongs to $\VanIdeal{\RFElml}$.
\end{proposition}

As zeroing out blocks does not increase the rank of \(C\),
our lower bound for \(\rank(C)\) reduces to the same lower bound for the rank
of \(\CMat{U}{V}(p')\).
This effectively extends the scope of the second ingredient:
Alone, the second ingredient requires that \emph{all} nonzero blocks of \(C\)
be far from the extremes;
with the third ingredient,
it suffices that there exists a subinterval of nonzero blocks that is surrounded by zero blocks and that is far from the extremes.
The proof hinges on the adjacent-block property of the equations from \expref{condition}{thm.ideal.membership.cond2} of \expref{Theorem}{thm.ideal.membership}.

The ingredients thus far suffice provided there exists a nonzero block far from the extremes:
Such a block belongs to some subinterval of nonzero blocks that is surrounded by zero blocks, say $C_{d_1}$ to the left and $C_{d_2}$ to the right, and the subinterval either is large and therefore has many nonzero blocks such that the first ingredient applies, or else it is small and therefore stays far from the extremes such that the combination of the second
and third ingredients applies. See \expref{Figure}{fig.roabp} for an illustration.
The fourth and final ingredient lets us ensure there is a nonzero block far from the extremes by setting the partition index $s$ appropriately.
In fact, it lets us guarantee a zero-to-nonzero transition at a position of our choosing.

\begin{proposition} \label{prop:ROABP:dselect}
  For every $d \in \{-1,\dots,l\}$, there is $s \in \{0,\dots,n\}$ such that $C_d=0$ and $C_{d+1} \ne 0$
  with respect to the partition \(U=\{1,\ldots,s\}\),
  \(V=\{{s+1},\ldots,n\}\), where $C_{-1} \doteq 0$.
\end{proposition}

\paragraph{Combining ingredients.}
Let us find out what lower bound on $\rank(C)$ the prior ingredients give us as a function of the position $d = d_1$ in the interval where we have a guaranteed zero-to-nonzero transition as in \expref{Proposition}{prop:ROABP:dselect}. Starting from position $d_1$, keep increasing the position index until we hit the next zero block, say at position $d_2$, where we use $C_{l+2} \doteq 0$ as a sentinel. See \expref{Figure}{fig.roabp}.
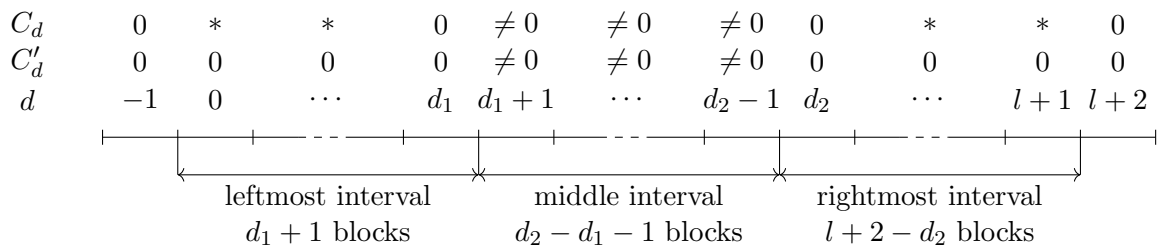
\begin{figure}
\begin{center}
\begin{tikzpicture}[scale=1]
  \draw (-1,-1.5) -- (1.6,-1.5);
  \draw[dashed] (1.6,-1.5) -- (2.25,-1.5);
  \draw (2.25,-1.5) -- (5.6,-1.5);
  \draw[dashed] (5.6,-1.5) -- (6.25,-1.5);
  \draw (6.25,-1.5) -- (9.6,-1.5);
  \draw[dashed] (9.6,-1.5) -- (10.25,-1.5);
  \draw (10.25,-1.5) -- (13,-1.5);  
  \draw (-1,-1.6) -- +(0,.2);
  \draw (0,-2) -- +(0,.6);
  \draw (1,-1.6) -- +(0,.2);
  \draw (3,-1.6) -- +(0,.2);
  \draw (4,-2) -- +(0,.6);
  \draw (5,-1.6) -- +(0,.2);
  \draw (7,-1.6) -- +(0,.2);
  \draw (8,-2) -- +(0,.6);
  \draw (9,-1.6) -- +(0,.2);
  \draw (11,-1.6) -- +(0,.2);
  \draw (12,-2) -- +(0,.6);
  \draw (13,-1.6) -- +(0,.2);
  \node at ($(-2, 0.0) + (0,-1)$) {$d$};
  \node at ($(-0.5, 0.0) + (0,-1)$) {$-1$};
  \node at ($(0.5, 0.0) + (0,-1)$) {$0$};
  \node at ($(2.0, 0.0) + (0,-1)$) {$\cdots$};
  \node at ($(3.5, 0.0) + (0,-1)$) {$d_1$};
  \node at ($(4.5, 0.0) + (0,-1)$) {$d_1+1$};
  \node at ($(6.0, 0.0) + (0,-1)$) {$\cdots$};
  \node at ($(7.5, 0.0) + (0,-1)$) {$d_2-1$};
  \node at ($(8.5, 0.0) + (0,-1)$) {$d_2$};
  \node at ($(10.0, 0.0) + (0,-1)$) {$\cdots$};
  \node at ($(11.5, 0.0) + (0,-1)$) {$l+1$};
  \node at ($(12.5, 0.0) + (0,-1)$) {$l+2$};
  \node at ($(-2, 0.0) + (0,0)$) {$C_d$};
  \node at ($(-0.5, 0.0) + (0,0)$) {$0$};
  \node at ($(0.5, 0.0) + (0,0)$) {$*$};
  \node at ($(2.0, 0.0) + (0,0)$) {$*$};
  \node at ($(3.5, 0.0) + (0,0)$) {$0$};
  \node at ($(4.5, 0.0) + (0,0)$) {$\ne 0$};
  \node at ($(6.0, 0.0) + (0,0)$) {$\ne 0$};
  \node at ($(7.5, 0.0) + (0,0)$) {$\ne 0$};
  \node at ($(8.5, 0.0) + (0,0)$) {$0$};
  \node at ($(10.0, 0.0) + (0,0)$) {$*$};
  \node at ($(11.5, 0.0) + (0,0)$) {$*$};
  \node at ($(12.5, 0.0) + (0,0)$) {$0$};
  \node at ($(-2, 0.0) + (0,-.5)$) {$C'_d$};
  \node at ($(-0.5, 0.0) + (0,-.5)$) {$0$};
  \node at ($(0.5, 0.0) + (0,-.5)$) {$0$};
  \node at ($(2.0, 0.0) + (0,-.5)$) {$0$};
  \node at ($(3.5, 0.0) + (0,-.5)$) {$0$};
  \node at ($(4.5, 0.0) + (0,-.5)$) {$\ne 0$};
  \node at ($(6.0, 0.0) + (0,-.5)$) {$\ne 0$};
  \node at ($(7.5, 0.0) + (0,-.5)$) {$\ne 0$};
  \node at ($(8.5, 0.0) + (0,-.5)$) {$0$};
  \node at ($(10.0, 0.0) + (0,-.5)$) {$0$};
  \node at ($(11.5, 0.0) + (0,-.5)$) {$0$};
  \node at ($(12.5, 0.0) + (0,-.5)$) {$0$};
  \draw[<->] (0,-2) -- node[auto,swap,align=center] {leftmost interval \\ $d_1+1$ blocks} (4.0,-2);
  \draw[<->] (4.0,-2) -- node[auto,swap,align=center] {middle interval \\ $d_2-d_1-1$ blocks} (8.0,-2);
  \draw[<->] (8.0,-2) -- node[auto,swap,align=center] {rightmost interval  \\ $l+2-d_2$ blocks} (12.0,-2);
\end{tikzpicture}
\end{center}
\caption{Rank lower bound analysis in terms of the blocks $C_d$ of $p$ and $C'_d$ of $p'$ (\expref{Proposition}{prop:ROABP:zero})}\label{fig.roabp}
\end{figure}
\begin{enumerate}
  \item By the first ingredient, since the middle interval consists of nonzero blocks only, $\rank(C) \ge d_2-d_1-1$.
  \item By the combination of the second and the third ingredient, we have that $\rank(C) \ge \minsplitdeg+1$
    where $\minsplitdeg = \min(d_1+1,l+2-d_2)$ is the minimum length of the leftmost and rightmost intervals.
    Indeed, let $p'$ be the polynomial obtained from $p$ by zeroing out the blocks $C_d$ with $d < d_1$ or $d > d_2$. By \expref{Proposition}{prop:ROABP:zero} $p' \in \VanIdeal{\RFEparam{l-1}{l}}$. The polynomial $p'$ is nonzero as it contains the original block $C_{d+1}$, which is nonzero. It is homogeneous of degree $l+1$ and multilinear as all of its monomials also occur in the homogeneous multilinear polynomial $p$ of degree $l+1$. By construction, every monomial in $p'$ contains at least $d_1+1$ variables indexed by $U$, and at least $l+2-d_2$ variables indexed by $V$. As such, $p'$ satisfies the conditions of \expref{Lemma}{lem.application.roabp.core} with $\minsplitdeg = \min(d_1+1,l+2-d_2)$. It follows that $\rank(C) \ge \rank(\CMat{U}{V}(p')) \ge \minsplitdeg+1$.
\end{enumerate}
If the rightmost interval has length at least the leftmost interval ($l+2-d_2 \ge d_1+1$),
then item 2 yields $\rank(C) \ge d_1+2$.
Otherwise, the rightmost interval is strictly shorter than the leftmost interval
($d_1+1 > l+2-d_2$);
this implies that the middle interval has length at least \(l - 2d_1 + 1\),
which by item 1 yields $\rank(C) \ge l-2d_1+1$.
In any case, the bound $\rank(C) \ge \min(d_1+2,l-2d_1+1)$ holds.
Taking $d_1= \floor{\frac{l-1}{3}}$ optimizes this expression,
achieving $\rank(C) \ge \floor{\frac{l-1}{3}} + 2 \ge (l/3)+1$.
This completes the proof of \expref{Theorem}{thm.application.roabp.hardpart} modulo the
proofs of ingredients two through four.

\paragraph{Proofs.}
We conclude by proving ingredients two through four.
We start with the one that requires the least specificity (ingredient 4, \expref{Proposition}{prop:ROABP:dselect}), then do ingredient 3 (\expref{Proposition}{prop:ROABP:zero}), and end with the one that involves the most structure (ingredient 2, \expref{Lemma}{lem.application.roabp.core}).

\begin{proof}[Proof of \expref{Proposition}{prop:ROABP:dselect}]
  When $s=0$, $C_0$ contains all entries. 
  As $s$ increases by $1$,
  some entries move from their current block $C_{d'}$ to the next block
  $C_{d'+1}$.
  Finally, when $s=n$, $C_{l+1}$ contains all entries.
  For $d \ge 0$, it follows that every nonzero entry moves from $C_d$ to $C_{d+1}$ at some time. If we stop increasing $s$ right after the last nonzero entry of $C$ moves
  out of \(C_d\), we have $C_d=0$ and $C_{d+1} \ne 0$. For $d=-1$, we can pick $s=0$ as $C_{-1}=0$ and $C_0=C \ne 0$. 
\end{proof}

\begin{proof}[Proof of \expref{Proposition}{prop:ROABP:zero}]
  It suffices to show that whenever $p$ satisfies the two conditions in
  \expref{Theorem}{thm.ideal.membership}, then so does $p'$. Both $p$ and $p'$ are homogeneous. 
  \expref{Condition}{thm.ideal.membership.cond1} holds for $p'$ as
  \(p'\) either is zero or else has the same degree as \(p\).
  Regarding \expref{condition}{thm.ideal.membership.cond2}, as
  mentioned, the condition is equivalent to a system of homogeneous
  linear equations on \(C' \doteq \CMat{U}{V}(p')\), each involving
  only an adjacent pair of blocks in $C'$.
  Those that involve only blocks $C'_d$ with $d \le d_1$ are met as the
  equations are homogeneous and the involved blocks are all zero.
  The same holds for the equations that involve only blocks $C'_d$
  with $d \ge d_2$.
  The remaining equations involve only blocks $C'_d$ with
  $d \in \{d_1, \dots, d_2\}$,
  on which $p$ and $p'$ agree.
  As the equations hold for \(C\), they also hold for $C'$.
\end{proof}

It remains to argue \expref{Lemma}{lem.application.roabp.core}. Our proof makes use of linear equations that are closely related to those given by \expref{Theorem}{thm.ideal.membership}, which in turn come from the \hyperref[lem.multi-zsub-deriv]{Zoom Lemma}. We revisit the application of the \hyperref[lem.multi-zsub-deriv]{Zoom Lemma} so as to obtain a simpler coefficient matrix---ultimately a Cauchy matrix---that enables a deeper analysis. To facilitate the discussion, we utilize the following notation. As $p$ is multilinear, we only need to consider rows indexed by monomials of the form $\prod_{i \in I} x_i$ for $I \subseteq U$ and columns indexed by monomials of the form $\prod_{j \in J} x_j$ for $J \subseteq V$. This allows us to index rows by subsets $I \subseteq U$ and columns by subsets $J \subseteq V$. For $I \subseteq U$ and $J \subseteq V$ we denote by $C(I,J)$ the corresponding entry of $C$. The following proposition describes the linear equations we use.
\begin{proposition}
  \label{prop.application.roabp.coef-eqns}
  Let \(p\in\VanIdeal{\RFEparam{l-1}{l}}\) be multilinear,
  and homogeneous of degree \(l+1\),
  let \(U\sqcup V\) be a partition of \([n]\),
  and let $C \doteq \CMat{U}{V}(p)$.
  For every \(I\subseteq U\) and \(J\subseteq V\)
  with \(|I|+|J|=l\),
  and for every \(j^*\in I\cup J\),
  \begin{equation}
    \label{eq.application.roabp.coef-eqns}
    \sum_{i\in U\setminus I}
      \frac{C(\{i\}\cup I, J)}%
        {a_i - a_{j^*}}
    +
    \sum_{i\in V\setminus J}
      \frac{C(I,\{i\}\cup J)}%
        {a_i - a_{j^*}}
    =0.
  \end{equation}
\end{proposition}
\begin{proof}
  Set
  $L \doteq I \cup J$ and
  $K \doteq L \setminus \{j^*\}$,
  and note that \(K \subseteq L\).
  Let \(d^* \in \N^L\) be the all-1 degree pattern with domain \(L\),
  and let \({m}^* \doteq \prod_{i\in L} x_i\)
  be the monomial supported on \(L\) that matches \(d^*\).
  As $p$ is multilinear,
  $d^*$ is $(K,L)$-extremal in $p$.
  Since \(p\) is in \(\VanIdeal{\RFEparam{l-1}{l}}\),
  the contrapositive of the \hyperref[lem.multi-zsub-deriv]{Zoom Lemma} tells us that the
  coefficient \(p_{d^*}\) of \(p\) vanishes at the point
  \eqref{eq.multi-zsub-deriv.sub} with $z=1$.

  The multilinear monomials $m$ of degree $l+1$ that match
  \(d^*\) have the form $m = x_i \cdot {m}^*$,
  where $i \in [n] \setminus L$.
  Thus, we can write the coefficient \(p_{d^*}\) as
  \begin{align}\label{eq:application:projection}
    p_{d^*}
    & =
    \sum_{i \in U\setminus I} C(\{i\} \cup I,J) \cdot x_i +
    \sum_{i \in V\setminus J} C(I,\{i\} \cup J) \cdot x_i.
  \end{align}
  For each \(i \in [n]\setminus L\),
  \eqref{eq.multi-zsub-deriv.sub} with $z=1$ substitutes $1/(a_i - a_{j^*})$
  into \(x_i\).
  Plugging this into \eqref{eq:application:projection} yields
  \eqref{eq.application.roabp.coef-eqns}.
\end{proof}

\begin{proof}[Proof of \expref{Lemma}{lem.application.roabp.core}]
  The proof goes by induction on \(\minsplitdeg\).
  The base case is \(\minsplitdeg = 0\), where the lemma holds because the rank of a nonzero matrix is always at least 1.
  For the inductive step, where \(\minsplitdeg \ge 1\), we zoom in on the contributions of the monomials that contain a particular variable. More precisely, for $j^* \in [n]$,
  let $p_{j^*}$ denote the partial derivative $p_{j^*} \doteq \partial_{x_{j^*}} p$.
  Consider any $j^* \in [n]$ such that $p_{j^*}$ is nonzero.
  As $p$ is multilinear and homogeneous of degree $l+1$,
  \(p_{j^*}\) is multilinear and homogeneous of degree \(l\). As every monomial in $p$ depends on at least $\minsplitdeg$ variables indexed by \(U\) and at least \(\minsplitdeg\) variables indexed by \(V\), every monomial in \(p_{j^*}\) depends on at least \(\minsplitdeg-1\) variables indexed by \(U\) and at least \(\minsplitdeg-1\) variables indexed by \(V\).
  In a moment, we argue that for every \(j^* \in [n]\),
  \(p_{j^*}\in\VanIdeal{\RFEparam{l-2}{l-1}}\).
  Then we will show the following:
  \begin{claim}\label{claim.existence}
    There exists $j^* \in [n]$ such that $p_{j^*} \ne 0$ and
    \begin{equation} \label{eq.claim.application.roabp.core}
      \rank(\CMat{U}{V}(p)) \ge \rank(\CMat{U}{V}(p_{j^*})) + 1.
    \end{equation}
  \end{claim}
  Given \(j^*\) as in \expref{Claim}{claim.existence},
  we conclude by induction that
  \[
    \rank(\CMat{U}{V}(p))
    \ge \rank(\CMat{U}{V}(p_{j^*})) + 1
    \ge (\minsplitdeg-1) + 1 + 1
    = \minsplitdeg + 1.
  \]

  To see that \(p_{j^*}\) belongs to the vanishing ideal of
  \(\RFEparam{l-2}{l-1}\),
  we use \expref{Theorem}{thm.ideal.membership}. Note that \(p_{j^*}\) is homogeneous, just like $p$.
  \expref{Condition}{thm.ideal.membership.cond1} of
  \expref{Theorem}{thm.ideal.membership} is satisfied by \(p_{j^*}\)
  since it it is satisfied by \(p\),
  and all of $k$, $l$, $n$, and the degree of $p_{j^*}$ are one less.
  Given \(K\) and \(L\) as in
  \expref{condition}{thm.ideal.membership.cond2} of
  \expref{Theorem}{thm.ideal.membership},
  we have
  \begin{equation}\label{eq.application.roabp.core.induct}
    \PartialZero{L}{K}{p_{j^*}}
    =
    p_{d^*}
  \end{equation}
  where \(d^*\) is the degree pattern with domain \(K \cup L \cup \{j^*\}\)
  that has \(d^*_j = 1\) for \(j \in L\cup\{j^*\}\)
  and \(d^*_j = 0\) for \(j \in K\).
  Since \(p \in \VanIdeal{\RFElml}\),
  the contrapositive of the \hyperref[lem.multi-zsub-deriv]{Zoom Lemma} applied to \(p\) with
  \(K'=K\cup\{j^*\}\),
  \(L'=L\cup\{j^*\}\),
  \(d^*\),
  says that \eqref{eq.application.roabp.core.induct} is zero upon the substitution 
  \eqref{eq.thm.ideal.membership.sub}.
  So \(p_{j^*}\in\VanIdeal{\RFEparam{l-2}{l-1}}\) by
  \expref{Theorem}{thm.ideal.membership}.
  This concludes the proof of \expref{Lemma}{lem.application.roabp.core} modulo the proof of \expref{Claim}{claim.existence}.
  \end{proof}

  \begin{proof}[Proof of \expref{Claim}{claim.existence}]
  Let \(U'\subseteq U\) be the indices of variables \(x_i\) such that \(p\)
  depends on \(x_i\),
  and similarly define \(V'\subseteq V\).
  We first consider the possibility that
  \eqref{eq.claim.application.roabp.core} fails for every
  \(j^* \in V'\).
  We show that this can only happen when \(|V'| < |U'|\).
  A symmetric argument shows that if \eqref{eq.claim.application.roabp.core}
  fails for all \(j^* \in U'\),
  then it must be that \(|U'| < |V'|\).
  As both inequalities cannot simultaneously occur,
  this guarantees the existence of the desired \(j^*\).

  Suppose that \eqref{eq.claim.application.roabp.core} fails
  for each \(j^* \in V'\).
  Observe that the column of \(\CMat{U}{V}(p_{j^*})\) corresponding to a
  monomial $m$ equals the column of \(\CMat{U}{V}(x_{j^*} p_{j^*})\)
  corresponding to the monomial $x_{j^*} m$;
  all other columns of \(\CMat{U}{V}(x_{j^*} p_{j^*})\) are zero.
  The matrix \(\CMat{U}{V}(x_{j^*} p_{j^*})\) can also be formed from
  \(\CMat{U}{V}(p)\) by zeroing out all the columns indexed by subsets that do
  not contain $j^*$
  (corresponding to multilinear monomials not involving \(x_{j^*}\)).
  The failure of \eqref{eq.claim.application.roabp.core} for \(j^*\) implies
  that \(\CMat{U}{V}(p_{j^*})\) has the same rank as \(\CMat{U}{V}(p)\),
  which is to say that the columns of \(\CMat{U}{V}(p)\) indexed by subsets
  that contain $j^*$ span \emph{all} the columns of \(\CMat{U}{V}(p)\).
  Going block by block,
  this implies that for every block \(C_d\) of \(C = \CMat{U}{V}(p)\),
  the columns within \(C_d\) that are indexed by subsets containing $j^*$ span
  all the columns of \(C_d\).
  This goes for every \(j^* \in V'\),
  as we are assuming that \eqref{eq.claim.application.roabp.core} fails for
  all of them.

  Let \(d\) be minimal such that \(C_d\ne 0\),
  \ie, such that \(p\) has a monomial depending on exactly \(d\) variables
  indexed by \(U\).
  We have \(d \ge \minsplitdeg \ge 1\) and \(C_{d-1} = 0\).
  The entries of \(C_d\) appear in the linear equations
  \eqref{eq.application.roabp.coef-eqns} given in
  \expref{Proposition}{prop.application.roabp.coef-eqns},
  either with entries from \(C_{d-1}\) or from \(C_{d+1}\).
  Since \(C_{d-1}\) is zero,
  the equations involving \(C_{d-1}\) and \(C_d\) simplify to equations on
  \(C_d\) only.
  Namely, for every \(I\subseteq U\) with \(|I|=d-1\),
  every \(J\subseteq V\) with \(|J|=l-(d-1)\),
  and every \(j^* \in I\cup J\),
  equation \eqref{eq.application.roabp.coef-eqns} simplifies to
  \begin{equation}\label{eq.application.roabp.coef-eqn-spec-pre}
    \sum_{i\in U\setminus I} \frac{C_d(\{i\}\cup I, J)}{a_i - a_{j^*}} = 0.
  \end{equation}
  For any fixed $i \in U \setminus U'$,
  all entries of the form $C_d(\{i\}\cup I, J)$ are zero.
  Thus, we can restrict the range of $i$ in
  \eqref{eq.application.roabp.coef-eqn-spec-pre}
  from $U \setminus I$ to $U' \setminus I$:
  \begin{equation}\label{eq.application.roabp.coef-eqn-spec-mid}
    \sum_{i\in U'\setminus I} \frac{C_d(\{i\}\cup I, J)}{a_i - a_{j^*}} = 0.
  \end{equation}

  Since $C_d \ne 0$,
  there is at least one fixed $I$ for which not all entries of the form
  $C_d(\{i\} \cup I, J)$ are zero as \(i\) and \(J\) vary.
  Let $I^*$ be such an $I$,
  and let $C^*_d$ denote the submatrix of $C_d$
  that consists of all entries of the form $C_d(\{i\}\cup I^*,J)$
  as \(i\) and \(J\) vary.
  For every $J \subseteq V$ with $|J|=l-(d-1)$
  and every $j^* \in I^* \cup J$,
  we have
  \begin{equation}\label{eq.application.roabp.coef-eqn-spec}
    \sum_{i\in U'\setminus I^*}
      \frac{C^*_d(\{i\}\cup I^*, J)}{a_i - a_{j^*}}
    = 0.
  \end{equation}
  For each $j^* \in V'$,
  consider the equations \eqref{eq.application.roabp.coef-eqn-spec}
  where $J$ ranges over all subsets of $V$ of size $|J| = l-(d-1)$
  that contain $j^*$.
  Observe that the coefficients $\frac{1}{a_i - a_{j^*}}$ in
  \eqref{eq.application.roabp.coef-eqn-spec} are independent of the choice of
  $J$.
  We argued that the columns of $C_d$ indexed by subsets $J$ that contain
  $j^*$ span all columns of $C_d$.
  The same holds for $C^*_d$,
  as $C^*_d$ is obtained from $C_d$ by removing rows.
  It follows that \eqref{eq.application.roabp.coef-eqn-spec} holds for
  \emph{every} subset $J$ of $V$ of size $l-(d-1)$
  (not just the ones containing $j^*$).

  In particular, consider any one nonzero column of \(C^*_d\).
  The column represents a nontrivial solution to the homogeneous system
  \eqref{eq.application.roabp.coef-eqn-spec} of $|V'|$ linear equations
  (one for each choice of $j^* \in V'$) in $|U' \setminus I^*|$ unknowns
  (one for each $i \in U' \setminus I^*$).
  The coefficient matrix $[ \frac{1}{a_i - a_{j^*}} ]$ is a Cauchy matrix,
  which is well-known to have full rank.
  In order for there to be a nontrivial solution,
  the number of equations must be strictly less than the number of unknowns.
  In other words,
  we have \(|V'| < |U'\setminus I^*| \le |U'|\),
  as desired.
\end{proof}

\section{Alternating Algebra Representation}
\label{s.simplicial-repr}

In this section we present in greater detail the alternating
algebra-based representation of (multilinear) polynomials suited to studying the vanishing ideal of \(\RFE\).
\expref{Subsection}{s.simplicial-repr.base-case} expands the informal discussion from the overview,
describing the representation and characterization for
the setting when \(l=1\), \(k=0\), and degree $d=2$.
\expref{Subsection}{s.simplicial-repr.alternating-algebra} provides a brief
introduction to alternating algebra suited to our purpose. \expref{Subsection}{s.simplicial-repr.gen-case}
formalizes the discussion from \expref{Subsection}{s.simplicial-repr.base-case} and
extends it to the case of multilinear polynomials for general \(k\), \(l\), and \(d\).

\subsection{Basic case}
\label{s.simplicial-repr.base-case}

For the purposes of this subsection, we fix the parameters \(k=0\), \(l=1\), and \(d=2\).
That is to say, we are studying which degree-2 polynomials  belong to the vanishing ideal for \(\RFEparam{0}{1}\). 

In \expref{Theorem}{thm.ideal.generators},
we proved that the polynomials \(\VanGenparam{0}{1}[i_1, i_2, i_3]\) as
\(i_1,i_2,i_3\) range over \([n]\) generate \(\VanIdeal{\RFEparam{0}{1}}\).
As these generators are all homogeneous degree-2 polynomials,
a degree-2 polynomial $p$ is in the ideal if and only if it is a linear
combination of instantiations of \(\VanGenparam{0}{1}\). 

Consider the generator when expanded as a linear combination of monomials:
\begin{align*}
  \VanGenparam{0}{1}[i_1, i_2, i_3]
  &=
  \begin{vmatrix}
    a_{i_1} & 1 \\
    a_{i_2} & 1
  \end{vmatrix}x_{i_1} x_{i_2}
  +
  \begin{vmatrix}
    a_{i_3} & 1 \\
    a_{i_1} & 1
  \end{vmatrix}x_{i_3} x_{i_1}
  +
  \begin{vmatrix}
    a_{i_2} & 1 \\
    a_{i_3} & 1
  \end{vmatrix}x_{i_2} x_{i_3}.
\end{align*}
We represent it graphically by creating a vertex $v_i \in \Vertices$ for each variable $x_i$, an
undirected edge for each monomial, and assigning to each edge a weight
equal to the coefficient of that monomial:
\begin{center}
  \begin{tikzpicture}
    \node[draw=black,rectangle] (x) at (0,0) {\footnotesize $v_{i_1}$};
    \node[draw=black,rectangle] (y) at (1,1.733) {\footnotesize $v_{i_2}$};
    \node[draw=black,rectangle] (z) at (2,0) {\footnotesize $v_{i_3}$};
    \draw[] (x) --
      node[auto]
        {\(\begin{vmatrix} a_{i_1} & 1 \\ a_{i_2} & 1\end{vmatrix}\)}
      (y);
    \draw[] (y) --
      node[auto]
        {\(\begin{vmatrix} a_{i_2} & 1 \\ a_{i_3} & 1\end{vmatrix}\)}
      (z);
    \draw[] (z) --
      node[auto]
        {\(\begin{vmatrix} a_{i_3} & 1 \\ a_{i_1} & 1\end{vmatrix}\)}
      (x);
  \end{tikzpicture}
\end{center}
Observe that the coefficient of \(x_{i_1}x_{i_2}\) has no dependence on
\(a_{i_3}\).
In particular, as \(i_3\) varies,
the coefficient of \(x_{i_1}x_{i_2}\) in \(\VanGenparam{0}{1}[i_1, i_2, i_3]\)
does not change.
In any other instantiation of \(\VanGenparam{0}{1}\) involving both
\(i_1\) and \(i_2\), the coefficient is either the same, or else differs by
a sign,
according to whether \(i_1\) or \(i_2\) precedes the other in the determinant.
A similar pattern holds with respect to all other monomials.
This suggests we can modify the graphical representation by rescaling the
weights on edges and suppress the dependence on the abscissas.
To capture the signs, we use oriented edges.
More precisely,
for each edge \(\{v_{i_1},v_{i_2}\}\),
we consider either of its two orientations,
say \(v_{i_1}\to v_{i_2}\),
and then divide its coefficient by
\(\begin{vmatrix}a_{i_1} & 1\\a_{i_2} & 1\end{vmatrix}\).
Note that considering the opposite orientation coincides with flipping the
sign of the scaling factor.
With these changes, \(\VanGenparam01[i_1, i_2, i_3]\) may be drawn
in any of the following ways
(among others).
\begin{center}
  \begin{tikzpicture}
    \node[draw=black,rectangle] (x) at (0,0) {\footnotesize $v_{i_1}$};
    \node[draw=black,rectangle] (y) at (1,1.733) {\footnotesize $v_{i_2}$};
    \node[draw=black,rectangle] (z) at (2,0) {\footnotesize $v_{i_3}$};
    \draw[mid arrow] (x) -- node[auto] {\footnotesize $1$} (y);
    \draw[mid arrow] (y) -- node[auto] {\footnotesize $1$} (z);
    \draw[mid arrow] (z) -- node[auto] {\footnotesize $1$} (x);
  \end{tikzpicture}
  \hskip 1cm
  \begin{tikzpicture}
    \node[draw=black,rectangle] (x) at (0,0) {\footnotesize $v_{i_1}$};
    \node[draw=black,rectangle] (y) at (1,1.733) {\footnotesize $v_{i_2}$};
    \node[draw=black,rectangle] (z) at (2,0) {\footnotesize $v_{i_3}$};
    \draw[mid arrow] (y) -- node[auto,swap] {\footnotesize $-1$} (x);
    \draw[mid arrow] (y) -- node[auto] {\footnotesize $1$} (z);
    \draw[mid arrow] (z) -- node[auto] {\footnotesize $1$} (x);
  \end{tikzpicture}
  \hskip 1cm
  \begin{tikzpicture}
    \node[draw=black,rectangle] (x) at (0,0) {\footnotesize $v_{i_1}$};
    \node[draw=black,rectangle] (y) at (1,1.733) {\footnotesize $v_{i_2}$};
    \node[draw=black,rectangle] (z) at (2,0) {\footnotesize $v_{i_3}$};
    \draw[mid arrow] (y) -- node[auto,swap] {\footnotesize $-1$} (x);
    \draw[mid arrow] (z) -- node[auto,swap] {\footnotesize $-1$} (y);
    \draw[mid arrow] (z) -- node[auto] {\footnotesize $1$} (x);
  \end{tikzpicture}
  \hskip 1cm
  \begin{tikzpicture}
    \node[draw=black,rectangle] (x) at (0,0) {\footnotesize $v_{i_1}$};
    \node[draw=black,rectangle] (y) at (1,1.733) {\footnotesize $v_{i_2}$};
    \node[draw=black,rectangle] (z) at (2,0) {\footnotesize $v_{i_3}$};
    \draw[mid arrow] (y) -- node[auto,swap] {\footnotesize $-1$} (x);
    \draw[mid arrow] (z) -- node[auto,swap] {\footnotesize $-1$} (y);
    \draw[mid arrow] (x) -- node[auto,swap] {\footnotesize $-1$} (z);
  \end{tikzpicture}
\end{center}
While different choices of edge orientations lead to different illustrations,
any one illustration can be transformed into any other by
considering edges in opposite orientations as needed,
and flipping the sign of each associated coefficient.
By identifying each edge in one orientation with the negative of itself in the
opposite orientation,
we can view all the illustrations as renditions of the same underlying object.

In general, we can represent any degree-2 homogeneous multilinear
polynomial \(p\in\FF[x_1,\ldots,x_n]\) in a similar way:
For each monomial $x_{i_1}x_{i_2}$ create an oriented edge $v_{i_1}\to v_{i_2}$ and set the weight of the edge
to be the coefficient of \(x_{i_1}x_{i_2}\) in \(p\) divided by
\(\begin{vmatrix}a_{i_1} & 1\\ a_{i_2} & 1\end{vmatrix}\).
The representation determines the polynomial:
simply undo the scaling on each edge,
and read off a linear combination of monomials.
Note moreover that this graphical representation is linear in the polynomial:
adding or rescaling polynomials coincides with adding or rescaling
coefficients on the edges.

Observe that,
in every graphical representation of \(\VanGenparam01[i_1, i_2, i_3]\),
at every vertex, the sum of the coefficients on edges oriented out of
that vertex equals the sum of the coefficients on edges oriented in to
that vertex.
Indeed, we can interpret \(\VanGenparam01[i_1, i_2, i_3]\) as a
\emph{circulation} in which one unit of flow travels around a simple
3-cycle \(v_{i_1} \to v_{i_2}\to v_{i_3}\to v_{i_1}\).
The coefficient on an oriented edge \(v_1\to v_2\) measures how much
flow is traveling in the direction \(v_1\to v_2\),
with negatives representing flow in the opposite direction.
That the sum of coefficients on outgoing edges equals the sum of coefficients on incoming edges reflects the defining property of a circulation, namely that the
\emph{conservation law} holds at every vertex:
the total flow in equals the total flow out.

Conservation is maintained under linear combinations. 
Since every degree-2 polynomial $p$ in \(\VanIdeal{\RFEparam01}\) is a
linear combination of instantiations of \(\VanGenparam01\), the representation of $p$ also satisfies the conservation law at every vertex, \ie, the representation of $p$ is a circulation. 
Thus, conservation is a necessary condition for membership in
\(\VanIdeal{\RFEparam01}\).

Conservation is \emph{sufficient} for ideal membership, as well. By definition, conservation at every vertex means that the representation is a circulation. By the well-known flow decomposition theorem (see, \eg, \cite[p.~80-81]{AMO93}), every circulation can be decomposed into a superposition of circulations around simple cycles. A unit circulation
around a simple cycle can be decomposed into a sum of unit circulations around 3-cycles; this is depicted for a 5-cycle below, where each edge indicates unit flow:
\[\vcenter{\hbox{\begin{tikzpicture}
    \coordinate (x1) at (  0:1.5);
    \coordinate (x2) at ( 72:1.5);
    \coordinate (x3) at (144:1.5);
    \coordinate (x4) at (216:1.5);
    \coordinate (x5) at (288:1.5);
    \draw[mid arrow] (x1) -- (x2);
    \draw[mid arrow] (x2) -- (x3);
    \draw[mid arrow] (x3) -- (x4);
    \draw[mid arrow] (x4) -- (x5);
    \draw[mid arrow] (x5) -- (x1);
    \node[draw=black,fill=white] at (x1) {\footnotesize \(v_1\)};
    \node[draw=black,fill=white] at (x2) {\footnotesize \(v_2\)};
    \node[draw=black,fill=white] at (x3) {\footnotesize \(v_3\)};
    \node[draw=black,fill=white] at (x4) {\footnotesize \(v_4\)};
    \node[draw=black,fill=white] at (x5) {\footnotesize \(v_5\)};
  \end{tikzpicture}}}
  \quad=\quad
  \vcenter{\hbox{\begin{tikzpicture}
    \coordinate (x1) at (  0:1.5);
    \coordinate (x2) at ( 72:1.5);
    \coordinate (x3) at (144:1.5);
    \coordinate (x4) at (216:1.5);
    \coordinate (x5) at (288:1.5);
    \fill[pattern=crosshatch,pattern color=black!30] (x1) to[out=187.2,in=330] (x3) to[out=0,in=165.6] (x1);
    \fill[pattern=crosshatch,pattern color=black!30] (x1) to[out=230.4,in=0] (x4) to[out=45,in=208.8] (x1);
    \draw[mid arrow] (x1) -- (x2);
    \draw[mid arrow] (x2) -- (x3);
    \draw[mid arrow] (x3) -- (x4);
    \draw[mid arrow] (x4) -- (x5);
    \draw[mid arrow] (x5) -- (x1);
    \draw (x1) edge [mid arrow,out=187.2,in=330] (x3);
    \draw (x3) edge [mid arrow,in=165.6,out=0] (x1);
    \draw (x1) edge [mid arrow,out=230.4,in=0] (x4);
    \draw (x4) edge [mid arrow,in=208.8,out=45] (x1);
    \node[draw=black,fill=white] at (x1) {\footnotesize \(v_1\)};
    \node[draw=black,fill=white] at (x2) {\footnotesize \(v_2\)};
    \node[draw=black,fill=white] at (x3) {\footnotesize \(v_3\)};
    \node[draw=black,fill=white] at (x4) {\footnotesize \(v_4\)};
    \node[draw=black,fill=white] at (x5) {\footnotesize \(v_5\)};
  \end{tikzpicture}}}
  \quad=\quad
  \vcenter{\hbox{\begin{tikzpicture}
    \coordinate (x1) at (  0:1.5);
    \coordinate (x2) at ( 72:1.5);
    \coordinate (x3) at (144:1.5);
    \coordinate (x4) at (216:1.5);
    \coordinate (x5) at (288:1.5);
    \fill[pattern=crosshatch,pattern color=black!30] (x1) to (x2) to (x3) to[out=0,in=165.6] (x1);
    \fill[pattern=crosshatch,pattern color=black!30] (x1) to[out=187.2,in=330] (x3) to (x4) to[out=45,in=208.8] (x1);
    \fill[pattern=crosshatch,pattern color=black!30] (x1) to[out=230.4,in=0] (x4) to (x5) to (x1);
    \draw[mid arrow] (x1) -- (x2);
    \draw[mid arrow] (x2) -- (x3);
    \draw[mid arrow] (x3) -- (x4);
    \draw[mid arrow] (x4) -- (x5);
    \draw[mid arrow] (x5) -- (x1);
    \draw (x1) edge [mid arrow,out=187.2,in=330] (x3);
    \draw (x3) edge [mid arrow,in=165.6,out=0] (x1);
    \draw (x1) edge [mid arrow,out=230.4,in=0] (x4);
    \draw (x4) edge [mid arrow,in=208.8,out=45] (x1);
    \node[draw=black,fill=white] at (x1) {\footnotesize \(v_1\)};
    \node[draw=black,fill=white] at (x2) {\footnotesize \(v_2\)};
    \node[draw=black,fill=white] at (x3) {\footnotesize \(v_3\)};
    \node[draw=black,fill=white] at (x4) {\footnotesize \(v_4\)};
    \node[draw=black,fill=white] at (x5) {\footnotesize \(v_5\)};
  \end{tikzpicture}}}
\]
The basis of the first equality in the above figure is that a unit flow \(v_1 \to v_3\)
cancels with a unit flow \(v_3\to v_1\), and similar for $v_4$ in lieu of $v_3$. Thus, conservation implies that we have a linear combination of unit circulations on 3-cycles, \ie, a linear combination of instantiations of \(\VanGenparam{0}{1}\). 

In summary, a multilinear homogeneous degree-2 polynomial is in
\(\VanIdeal{\RFEparam01}\) \emph{if and only if} its graphical representation satisfies the conservation law at every vertex. This is the representation and ideal membership characterization in the basic setting with \(k=0\), \(l=1\), and \(d=2\) for multilinear homogeneous polynomials. Note that, in this basic setting, the multilinear homogeneous degree-2 case represents the core of the problem. The remaining cases contain a univariate monomial, and are outside of $\RFEparam{0}{1}$ by \expref{Proposition}{prop.ideal.membership.degree-bound}.

\subsection{Alternating algebra}
\label{s.simplicial-repr.alternating-algebra}

In order to generalize \expref{Subsection}{s.simplicial-repr.base-case},
we need to be able to discuss higher-dimensional analogues of ``flow''
and ``circulation'',
as well as appropriately-generalized notions of ``conservation.''
Suited to this purpose is the language of \emph{alternating algebra}.
Alternating algebra was introduced in the 1800s by Hermann Grassmann
\cite{Grassmann1844,GrassmannKannenberg2000} and is the formalism
underlying differential geometry and its applications to physics.
We give a brief introduction to alternating algebra here,
tailored toward our purposes.

For each \(i\in [n]\), we create a fresh vertex $v_i \in \Vertices$, 
which corresponds to the variable $x_i$.
The alternating algebra provides a multiplication, denoted \(\wedge\),
that can be thought of as a constructor to make \emph{oriented simplices}
out of these vertices.
For example, the \(\wedge\)-product of \(v_1\) with \(v_2\),
written \(v_1\wedgec v_2\), encodes the simplex with vertices
\(v_1\) and \(v_2\) in a particular orientation; \(v_2\wedgec v_1\) encodes the same simplex with the opposite orientation. 
When \(v_1=v_2\), \(v_1\wedgec v_2\) is defined to be zero.
\(\wedge\)-multiplication is associative.
Rather than being commutative,
the \(\wedge\)-product is \emph{anti-commutative} in the sense that
\(v_1\wedgec v_2=-v_2\wedgec v_1\).
In this way the order of the vertices in the product encodes
an orientation.
There are only ever two orientations.
In a larger product such as
\(v_1\wedgec v_2\wedgec v_3\),
we have
\[\begin{array}{crcl}
      &\;   v_1\wedgec v_2\wedgec v_3
  \;&=&\; - v_1\wedgec v_3\wedgec v_2 \\
  \; =&\;   v_3\wedgec v_1\wedgec v_2
  \;&=&\; - v_3\wedgec v_2\wedgec v_1 \\
  \; =&\;   v_2\wedgec v_3\wedgec v_1
  \;&=&\; - v_2\wedgec v_1\wedgec v_3.
\end{array}\]
In general, permuting the vertices in a \(\wedge\)-product by an
even permutation has no effect,
while permuting by an odd permutation flips the sign.
Any \(\wedge\)-product that uses the same vertex more than once is
zero.

We can formally extend $\wedge$-multiplication to linear combinations of 
vertices in $\Vertices$. Denote $\SpanVertices$ to be the $\FF$-vector space
with basis $\Vertices$. The $\wedge$-multiplication extends to $\SpanVertices$ by
being \emph{distributive}. Overall, $\wedge$-multiplication 
has the following defining properties, for any $u_1,u_2,u_3\in \SpanVertices$:
\begin{itemize}
    \item \emph{Associativity:} $u_1\wedge (u_2\wedge u_3) = (u_1\wedge u_2)\wedge u_3$.
    \item \emph{Distributivity:} $u_1\wedge (u_2+u_3) = u_1\wedge u_2 + u_1\wedge u_3$.
    \item \emph{Alternation:} $u_1 \wedge u_1 = 0$.
\end{itemize}
The alternation property implies anti-commutativity\footnote{Alternation and anti-commutativity are equivalent provided the characteristic of the field differs from 2.}: $u_1\wedge u_2=-u_2\wedge u_1$.
The alternating algebra consists of all formal linear combinations of \(\wedge\)-products of vertices from $\Vertices$, or equivalently, of elements from $\SpanVertices$. 
We denote the underlying universe as follows.
\begin{definition}[space of oriented simplices]
  \label{def.simplex-space}
  For each \(t\in\N\),
  we let
   \[
    \Simplicest
    \doteq
    \linspan (u_1 \wedgecdots u_t : u_1,\ldots,u_t \in \SpanVertices)
  \]
  denote the space of linear combinations of \(t\)-vertex oriented simplices. For a set of indices $T$, we write $u^T \doteq \bigwedge_{i\in T} u_i$, with the convention that the indices are listed in increasing order. 
\end{definition}
The distributivity and anti-commutativity properties of $\wedge$ imply that 
\[\Simplicest = \linspan (u^{[t]} : u_1,\dots,u_t \text{ are distinct elements in } \Vertices),\] 
which justifies the reference to $t$-vertex simplices. The properties also imply that changing the order of the vertices in the wedge product yields the same element up to a sign, namely the sign of the underlying permutation. This justifies the reference to orientation, where there are two possible orientations. To emphasize, the \(t\) in \(\Simplicest\) counts the number of vertices in the simplices; this is one more than the usual notion of dimension of a simplex. For $t=0$, we have a distinct simplex corresponding to the empty product, denoted \(1\), which is an identity for \(\wedge\). Note that not every element in $\Simplicest$ can be expressed in the form $u_1\wedgecdots u_t$. 

To connect this with \expref{Subsection}{s.simplicial-repr.base-case},
recall the graphical depiction of
\(\VanGenparam01[{i_1}, {i_2}, {i_3}]\):
\begin{center}
  \begin{tikzpicture}
    \node[draw=black,rectangle] (x) at (0,0) {\footnotesize $v_{i_1}$};
    \node[draw=black,rectangle] (y) at (1,1.733) {\footnotesize $v_{i_2}$};
    \node[draw=black,rectangle] (z) at (2,0) {\footnotesize $v_{i_3}$};
    \draw[mid arrow] (x) -- node[auto] {\footnotesize $1$} (y);
    \draw[mid arrow] (y) -- node[auto] {\footnotesize $1$} (z);
    \draw[mid arrow] (z) -- node[auto] {\footnotesize $1$} (x);
  \end{tikzpicture}
\end{center}
Adopting the convention that an arrow \(v_1\to v_2\) is
\(v_1\wedgec v_2\)
(and so an arrow \(v_2\to v_1\) is
\(v_2\wedgec v_1 = -v_1\wedgec v_2\)),
we can alternatively express the above as
\[
  v_{i_1}\wedgec v_{i_2} \;+\; v_{i_2}\wedgec v_{i_3} \;+\; v_{i_3}\wedgec v_{i_1}.
\]
In general, the graphical representation of a homogeneous degree-2 multilinear
polynomial is some linear combination of 2-vertex oriented simplices.
When we go to higher-degree polynomials,
we make use of oriented simplices with more vertices.

To express conservation, we introduce \emph{boundary maps}, which are parametrized by a linear weight function \(w : \SpanVertices \to \FF\). The boundary map \(\bdry_w\) is a linear map that sends each simplex to a linear combination of its boundary faces (and the empty simplex to zero) according to a formula reminiscent of the minor expansion of a determinant along a column consisting of the values of $w$.
\begin{definition}[boundary map]
  \label{def.polybdry}
  For any linear function \(w : \SpanVertices \to\FF\),
  the boundary map with weight function \(w\) is the linear map 
  \(\bdry_w : \bigoplus_{t=0}^{n}\Simplicest \to \bigoplus_{t=0}^{n}\Simplicest\)
  realizing
  \begin{equation}\label{eq.def.boundary}
    u_1 \wedgecdots u_t
    \mapsto
      \sum_{i=1}^t (-1)^{i+1} w(u_i) 
      (u_1 \wedgecdots u_{i-1}\wedgec u_{i+1}\wedgecdots u_t)
  \end{equation}
  for all $u_1, \dots, u_t \in \SpanVertices$.
\end{definition}
The boundary map $\bdry_w$ is well-defined. To see this, note that the sign factor $(-1)^{i+1}$ in \eqref{eq.def.boundary} ensures well-definedness of the restriction to vertices, \ie, for $u_1, \dots, u_t \in \Vertices$. This is because changing the order of the vertices on the left-hand side results in the correct sign change on the right-hand side. The linearity of $w$ then guarantees that the linear extension of the restriction to vertices coincides with \eqref{eq.def.boundary}. 
For each \(t \ge 1\),
\(\bdry_w(\Simplicest) \subseteq \Simplicesparam{t-1}\),
while \(\bdry_w(\Simplicesparam{0}) = \{0\}\).

In the simplest case, \(w\) is the function that is 1 on every $v \in V$.
In this case, the boundary of some 2-vertex simplex is given by
\[
  \bdry_1( v_1 \wedgec v_2 ) = v_2 - v_1.
\]
In particular, \(v_1 \wedgec v_2\) contributes \(-1\) toward \(v_1\)
and \(+1\) toward \(v_2\).
This coincides with the contribution of the edge \(v_1\to v_2\) toward the
net flow into the vertices \(v_1\) and \(v_2\).
In exactly this way, conservation is identified with having a
\emph{vanishing boundary}. Note also that for this choice of weight function
\[ \bdry_1(v_1 \wedgec v_2 \wedgec v_3)
= {v_2}\wedgec {v_3} \;-\; {v_1}\wedgec {v_3} \;+\; {v_1}\wedgec {v_2} = {v_1}\wedgec {v_2} \;+\; {v_2}\wedgec {v_3} \;+\; {v_3}\wedgec {v_1}.\]
Thus, unit circulations on 3-cycles are in one-to-one and onto correspondence with the images under $\bdry_1$ of oriented 3-simplices on the vertices. By the decomposition discussed in the  
\expref{Subsection}{s.simplicial-repr.base-case}, it follows that circulations are in one-to-one and onto correspondence with the elements of $\bdry_1(\Simplicesparam{3})$. This means that $\bdry_1(\Simplicesparam{3}) = \ker(\bdry_1) \cap \Simplicesparam{2}$.

In general, for every linear $w: \SpanVertices \to \FF$
\begin{equation}\label{eq.key}
\im(\bdry_w) = \ker(\bdry_w),
\end{equation}
or equivalently, 
$\bdry_w(\Simplicesparam{t}) = \ker(\bdry_w) \cap \Simplicesparam{t-1}$ for every $t \in [n]$.
This key relationship implies that taking the same boundary multiple times always vanishes.
That is, for any \(w\), \(\bdry_w \circ \bdry_w = 0\), often written as $\bdry_w^2 = 0$.
Another property is that for any \(w, w'\) and \(\beta,\beta'\in\FF\),
\(\bdry_{\beta w + \beta'w'} = \beta\bdry_w + \beta'\bdry_{w'}\),
which is to say that the boundary maps themselves are linear in \(w\).
It follows from these that, for any \(w,w'\), \(\bdry_w \circ \bdry_{w'} =
-\bdry_{w'}\circ \bdry_w\).
This means that the boundary maps themselves behave like an
alternating algebra, with \(\circ\) as the multiplication rather than
\(\wedge\).
For any \(w_1,\ldots,w_{k+1}\), write \(\omega = w_1\wedgecdots w_{k+1}\),
and define
\(\bdry_\omega = \bdry_{w_{k+1}} \circ \cdots \circ \bdry_{w_1}\).
That is, \(w_1\wedgecdots w_{k+1}\) means apply \(\bdry_{w_1}\),
then \(\bdry_{w_2}\), and so on, up to \(\bdry_{w_{k+1}}\).
The result is well-defined, and we borrow the shorthand notation introduced in \expref{Definition}{def.simplex-space}: $w^T \doteq \bigwedge_{j\in T} w_j$, where $T \subseteq [k+1]$ and the indices in the wedge product are taken in increasing order.

The image-kernel relationship \eqref{eq.key} extends as follows: For any linearly independent $w_1, \dots, w_{k+1}$,
\begin{equation}
  \label{eq.bdry-kernel-image}
\im(\bdry_{w_1 \wedgecdots w_{k+1}}) = 
\bigcap_{r=1}^{k+1} \ker(\bdry_{w_r}).
\end{equation}
If $w_1, \dots w_{k+1}$ are linearly dependent, then $\bdry_{w_1 \wedgecdots w_{k+1}}$ vanishes. In fact, a further generalization holds and will be useful. We include a proof for completeness. \eqref{eq.bdry-kernel-image} corresponds to the special case $\Delta=0$.
\begin{proposition}[generalized image-kernel relationship]
    \label{prop.bdry-multiple-kernel-image}
    For $k, \Delta \in \N$ and any linearly independent linear functions $w_1,\dots,w_{k+\Delta+1}: \SpanVertices \to \FF$
    \begin{equation}\label{eq.bdry-kernel-image.general}
    \linspan_{\substack{B \subseteq [k+\Delta+1]\\ |B|=k+1}}\im(\bdry_{w^B}) = 
    \bigcap_{\substack{B \subseteq [k+\Delta+1] \\ |B|=\Delta+1}}
    \ker(\bdry_{w^B}). 
    \end{equation}
\end{proposition}
\begin{proof}
    Extend $w_1,\dots,w_{k+\Delta+1}$ to a basis $w_1,\dots,w_n$ of all linear functions $\SpanVertices\to \FF$. We can interpret $w_1,\dots,w_n$ as a basis of the dual space $\SpanVertices^*$, and the mapping $(w,u) \mapsto w(u)$ as a bilinear form $\SpanVertices^*\times \SpanVertices\to\FF$. This means we can construct a dual basis $u_1,\dots,u_n\in \SpanVertices$ such that for $i,j\in [n]$, $w_j(u_i)$ is 1 if $i=j$ and 0 if $i\neq j$.

    In this particular basis $u_1,\dots,u_n$, the boundary maps with weight functions $w_j$ take a very simple form: The only term in \eqref{eq.def.boundary} that remains for $w=w_j$ is the one with $i=j$. More generally,  
    for $B, T \subseteq [n]$,
    \begin{equation}\label{eq.bdry.special} 
    \bdry_{w^B} (u^T) =
    \begin{cases}
        \pm u^{T \setminus B} & \text{if $B\subseteq T$}\\
        0 & \text{otherwise.}
    \end{cases}
    \end{equation}
    With this characterization, we can see that both the span of the images and the intersection of the kernels 
    coincide with
    \[\linspan(u^S: S \subseteq [n] \text{ with } |S \cap [k+\Delta+1]| \le \Delta). \]
    
    We obtain $\pm u^S$ as $\bdry_{w^B} (u^T)$ if and only if $T = B \sqcup S$. Such a choice of $T$ and $B$ with $B \subseteq [k+\Delta+1]$ and $|B|=k+1$ exists if and only if there are at least $k+1$ elements in $[k+\Delta+1] \setminus S$, or equivalently, $|S \cap [k+\Delta+1]| \le (k+\Delta+1)-(k+1) = \Delta$. This proves the equality for the span of the images.
    
    On the other hand, $u^S$ falls within $\ker(\bdry_{w^B})$ if and only if $B \not\subseteq S$. This is case for every $B \subseteq [k+\Delta+1]$ with $|B|=\Delta+1$ if and only if $S$ contains at most $\Delta$ elements in $[k+\Delta+1]$. This proves the equality of the intersection of the kernels.
\end{proof}

In the following subsection, we will need an explicit formula for computing $\bdry_{\omega}(u^T)$ for generic $u_1, \dots, u_n \in \SpanVertices$ and $T \subseteq [n]$. From a concrete perspective, the effect of a single boundary map in \expref{Definition}{def.polybdry} resembles one level of determinant minor expansion, so composing boundary maps should produce a partially expanded determinant. We formalize that intuition with the following proposition, which characterizes the boundary of a $t$-simplex after applying $k$ weighted boundaries as a linear combination of $(t-k)$-simplices. Each $(t-k)$-simplex is indexed by a subset $J$ of $T$.

\begin{proposition}[composed boundary maps]
    \label{prop.composed-bdrys}
    Let $w_1,\dots, w_{k+1}:\SpanVertices\to\FF$ be linear functions, $T$ a set of indices, and $u_i \in \SpanVertices$ for $i \in T$.
    \begin{equation}\label{eq.prop.composed-bdrys}
    \bdry_{w^{[k+1]}}(u^T) = \sum_{\substack{I\sqcup J=T \\ |I|=k+1}} 
    (-1)^{\XInv(I,J)} \cdot 
    \det \begin{bmatrix}
        w_{r}(u_i)
    \end{bmatrix}_{i\in I}^{r\in [k+1]}
    \cdot u^J, 
    \end{equation}
    where in the determinant, the rows from top to bottom and the columns from left to right are in increasing order of index $i$ and $r$, respectively.
\end{proposition}
\begin{proof}
    Observe that $\bdry_{w^{[k+1]}}(u^T)\in \Simplicesparam{t-k-1}$ and, by \expref{Definition}{def.polybdry}, can be written as a linear combination of $u^J$ over all $J\subseteq T$ with $|J|=|T|-k-1$. It suffices to show that the coefficients of each $u^J$ match the ones given above.

    Without loss of generality, let $T = [t]$, as this does not change the relative order of any determinants or
    $\wedge$-products. Consider the terms formed by iteratively expanding $\bdry_{w^{[k+1]}}(u^T) \doteq \bdry_{w_{1}\wedgecdots w_{k+1}}(u^T)$ by \expref{Definition}{def.polybdry}. Each term is in one-to-one correspondence with the choices of $i$ we make in the expansions of \expref{Definition}{def.polybdry}. In particular, the terms that yield $u^J$ correspond to the bijections $\sigma:[k+1]\to I$, where $I=T \setminus J$. For a given $\sigma$, the corresponding coefficient is equal to
    \[ (-1)^{(\sum_{i\in I} i) + k+1 - |\{r,r'\in[k+1]: r'<r,\sigma(r')<\sigma(r) \}|} \prod_{r\in [k+1]}w_{r}(u_{\sigma(r)}). \]
    The $|\{r'<r,\sigma(r')<\sigma(r) \}|$ term accounts for the fact that, when each $r$ is selected, some terms of $T$ may have been previously removed, shifting the relative rank of $r$. Since for any distinct $r,r'\in [k+1]$, either $\sigma(r')>\sigma(r)$ or $\sigma(r')<\sigma(r)$, we can rewrite $|\{r'<r,\sigma(r')<\sigma(r) \}| = \binom{k+1}{2} - |\{r'<r,\sigma(r')>\sigma(r) \}|$.
    
    As for the term $\sum_{i\in I}i$, writing $i$ as $i = |\{i' \in I: i' \le i\}| + |\{j \in J: j < i\}|$ and summing over all $i \in I$, we have that
    $\sum_{i \in I} i = \sum_{r=1}^{k+1} r + \XInv(I,J) 
    = \binom{k+2}{2} + \XInv(I,J)$. 
    
    As $\binom{k+2}{2} = \binom{k+1}{2} + k+1$, we get that the coefficient of $u^J$ equals
    \[ \sum_{\sigma:[k+1]\to I} (-1)^{\XInv(I,J)+|\{r'<r,\sigma(r')>\sigma(r) \}| + 2k+2} \prod_{r\in [k+1]} w_{r}(u_{\sigma(r)}), \]
    and by definition of the determinant and simplifying, this is equal to
\[ (-1)^{\XInv(I,J)}\det \begin{bmatrix}
    w_{r}(u_i)
      \end{bmatrix}_{i\in I}^{r\in [k+1]}\,.    \qedhere
\]
\end{proof}

In the next subsection we will apply \expref{Proposition}{prop.composed-bdrys} with $u_i=v_i$. For the choice of $u_i$ in the proof of \expref{Proposition}{prop.bdry-multiple-kernel-image}, the matrix in \eqref{eq.prop.composed-bdrys} is the identity matrix and thus has determinant $1$, which results in \eqref{eq.bdry.special}.

\subsection{General case}
\label{s.simplicial-repr.gen-case}

With the notation of alternating algebra in hand,
we turn now to generalizing the characterization of $\VanIdeal{\RFEkl}$ based on the representation of polynomials that we introduced in \expref{Subsection}{s.simplicial-repr.base-case}, henceforth the \emph{simplicial representation}. We focus on multilinear polynomials, but the parameters \(k,l\in\N\) may be arbitrary. For starters, we still restrict to degree $d=l+1$. We then generalize to multilinear polynomials of arbitrary degree and present an alternate proof to \expref{Theorem}{thm.ideal.membership}. We end with some thoughts about the non-multilinear case. 

As before, we associate each variable \(x_i\) with a distinct vertex \(v_i \in V\), where $U \doteq \linspan(V)$ denotes an underlying vector space over $\FF$. We view a polynomial as a linear combination of monomials and represent each
degree-$t$ multilinear monomial as an oriented simplex with $t$ vertices. The representation makes use of the Vandermonde determinants $\Adet{T}$ for $T \subseteq [n]$, where $\Amat{T}$ refers to the notation that we introduced in \eqref{eq.vandermonde-rep} for the Vandermonde matrix built from the abscissas $a_i$ for $i \in T$ in increasing order. The Vandermonde determinant $\Adet{T}$ can be written as the product of pairwise differences:
\begin{equation}\label{eq.vandermonde-rep.differences}
\Adet{T} = \prod_{i,j \in T, i<j} (a_i-a_j).
\end{equation}
In particular, as the abscissas are distinct, $\Adet{T}$ is always nonzero.

Let $v^T\doteq \bigwedge_{i\in T} v_i$, where the indices are listed in increasing order. We represent the monomial $x^T \doteq \prod_{i \in T} x_i$ for $T \subseteq [n]$ by the element $v^T/\Adet{T}$. Formally, we define the following ``decoder map,'' which maps a simplicial representation to the polynomial it represents.
\begin{definition}[representation]
  \label{def.Srepr-prim}
  \(\Srepr : \bigoplus_{t=0}^{n}\Simplicest \to \FF[x_1,\ldots,x_n]\) is the linear
  map extending
  \begin{equation}\label{eq.decoder-mapping}
    v^T
    \;\mapsto\;
    \Adet{T}\,\cdot\,x^T
  \end{equation}
  for every $T \subseteq [n]$.
\end{definition}
Note that \eqref{eq.decoder-mapping} holds irrespective of the order of the indices, as long as the same order is used for both $v^T$ and $\Adet{T}$. This is because exchanging any two indices changes the sign of both the left-hand side and the determinant on the right-hand side. The mapping \(\Srepr\) induces a vector space isomorphism between
\(\Simplicesparam{l+1}\) and the space of multilinear homogeneous degree-\((l+1)\) polynomials.

\medskip

The strategy for our membership test in \(\VanIdeal{\RFE}\) consists of two steps: First express \(\VanIdeal{\RFE}\) in terms of $\Srepr$ and the image of the boundary maps $\bdry_w$, and then apply the (generalized) image-kernel relationship from alternating algebra. In \expref{Definition}{def.polybdry}, $w$ is taken to be a linear function from $\SpanVertices$ to $\FF$. As a linear function, $w$ is completely defined by its values on the basis $\Vertices$. By Lagrange interpolation, every function $w$ from $\Vertices$ to $\FF$ can be viewed as a univariate polynomial of degree less than $n \doteq |\Vertices|$ restricted to the abscissas, namely the polynomial interpolating $a_i \mapsto w(v_i)$ for $i \in [n]$. 
\begin{definition}[degree of boundary map]
  \label{def.deg.polybdry}
  Let $w: \SpanVertices \to \FF$ be linear and $a_1, \dots, a_n$ be distinct elements of $\FF$. We say that $w$ is \emph{interpolated} by $q\in \FF[\alpha]$ if $w(v_i) = q(a_i)$ for $i\in [n]$. We say that $w$  is of degree $d$ if $w$ is interpolated by a degree-$d$ polynomial $q$.
\end{definition}

Furthermore, given fixed $a_1,\dots,a_n$, the correspondence between a weight function $w$ and its interpolating polynomial $q$ forms an isomorphism; if $w_1,w_2$ are interpolated by $q_1,q_2$, then $w_1+w_2$ is interpolated by $q_1+q_2$, and $cw_1$ is interpolated by $cq_1$. From now on, we directly refer to a weight function by the polynomial in $\FF[\alpha]$ that interpolates it. We will be interested in the boundaries that are weighted by low-degree polynomials.

\paragraph{Multilinear case for degree $d=l+1$.}
In the case of degree $d=l+1$, the first step of our approach boils down to finding a simplicial representation for the generators $\VanGenkl$. We do so using composed boundary maps of degree at most $k$.
\begin{lemma}
  \label{lem.simplicial-vangen}
  For any \(k,l \in \N\) and \(S\subseteq [n]\), $|S|=k+l+2$,
  \begin{equation} \label{eq.simplicial-vangen}
    \VanGenkl[S]
    =
    \Srepr \left( 
      \bdry_{\alpha^k\wedgecdots\alpha^0}\left(
        v^S
      \right)
     \right).
  \end{equation}
\end{lemma}
\noindent
That is, \(\VanGenkl\) is the polynomial formed from a given
\((k+l+2)\)-vertex simplex by iteratively applying to it the \(k+1\) boundaries weighted by \(\alpha^k, \alpha^{k-1}, \ldots, \alpha^0\) respectively, where \(\alpha^r\) stands for the weight function interpolated by the polynomial \(\alpha^r\).
\begin{proof}
Using our notation, the explicit expression \eqref{eq.EVC.expanded} in \expref{Proposition}{prop.vangen.basic} can be rewritten as
  \begin{equation}
    \label{eq.simplicial-vangen.expanded}
      \VanGenkl[S] = \sum_{\substack{K\sqcup L= S \\ |K|=k+1}}
    (-1)^{\XInv(K,L)}\cdot \Adet{K}\cdot \Adet{L}\cdot x^L.
  \end{equation}
  For the right-hand side, we use \expref{Proposition}{prop.composed-bdrys} to get:
  \begin{align*}
      \Srepr \left(
      \bdry_{\alpha^k\wedgecdots\alpha^0}\left(
        v^S
      \right)
    \right) 
    &= 
    \sum_{\substack{K\sqcup L= S \\ |K|=k+1}} 
    (-1)^{\XInv(K,L)} \cdot
    \Adet{K}\cdot
    \Srepr \left( v^L \right) \\
    &= 
    \sum_{\substack{K\sqcup L= S \\ |K|=k+1}} 
    (-1)^{\XInv(K,L)} \cdot
    \Adet{K}\cdot
    \Adet{L}\cdot
    x^L.
  \end{align*}
  The sum is identical to \eqref{eq.simplicial-vangen.expanded}.
\end{proof}

\expref{Lemma}{lem.simplicial-vangen} yields the following characterization of the part of $\VanIdeal{\RFEkl}$ of degree $l+1$. We state it in a format to which we can directly apply the image-kernel relationship \eqref{eq.bdry-kernel-image}.

\begin{corollary}
  \label{cor.simplicial-vangen}
  For any \(k,l \in \N\), the set of polynomials of degree $l+1$ in \(\VanIdeal{\RFEkl}\) is given by
  \[\Srepr(\bdry_{\alpha^k\wedgecdots\alpha^0}(\Simplicesparam{k+l+2})). \]
\end{corollary}
\begin{proof}
Since every degree-\((l+1)\) polynomial $p$ in \(\VanIdeal{\RFEkl}\) is a linear combination of instantiations of \(\VanGenkl\), \expref{Lemma}{lem.simplicial-vangen} allows to us to express the subset in \(\VanIdeal{\RFEkl}\) as
\[\linspan_{\substack{S \subseteq [n]\\ |S|=k+l+2}} \Srepr(\bdry_{\alpha^k\wedgecdots\alpha^0}(v^S)).\]
The result follows by linearity and the fact that $U = \linspan(V)$. 
\end{proof}

The image-kernel relationship \eqref{eq.bdry-kernel-image} then leads to the following membership test. Recall that $\Srepr$ induces an isomorphism from the space of $(l+1)$-vertex oriented simplices $\Simplicesparam{l+1}$ to the set of multilinear polynomials of degree $l+1$, so $\Srepr^{-1}$ is well-defined on multilinear polynomials.
\begin{theorem}
  \label{thm.vanideal-char.base}
  Let \(k,l \in \N\).
  For any multilinear polynomial \(p\in\FF[x_1,\ldots,x_n]\) of
  degree \(l+1\),
  \(p(\RFEkl) = 0\) if and only if $p$ is homogeneous of degree $l+1$ and
  \[
    \bdry_w( \Srepr^{-1}(p) ) = 0
  \]
  for every weight function \(w\) of degree at most \(k\).
\end{theorem}
\begin{proof}
The criterion in \expref{Corollary}{cor.simplicial-vangen} can be rewritten as 
\[ \Srepr^{-1}(p)\in \bdry_{\alpha^k\wedgecdots\alpha^0}(\Simplicesparam{k+l+2}). \]
By \expref{Proposition}{prop.bdry-multiple-kernel-image}, this is equivalent to
\[ \Srepr^{-1}(p)\in \left(\bigcap_{r=0}^{k} \ker(\bdry_{\alpha^r})\right)\cap \Simplicesparam{l+1}. 
\]
The intersection with $\Simplicesparam{l+1}$ means that $p$ is homogeneous of degree $l+1$. For such polynomials $p$, we have that \(p(\RFEkl) = 0\) if and only if  \( \bdry_{\alpha^r}(\Srepr^{-1}(p))=0 \) for $r=0,\dots,k$, which by linearity is equivalent to \( \bdry_{w}(\Srepr^{-1}(p))=0 \) for all weight functions \(w\) of degree at most \(k\).
\end{proof}
\expref{Theorem}{thm.vanideal-char.base} states that a multilinear polynomial $p$ of degree \(l+1\) is in the vanishing ideal of \(\RFEkl\) if and only if it is homogeneous of degree $l+1$ and the simplicial representation of $p$ satisfies conservation with respect to all degree-\(k\) boundaries. This is the representation and ideal membership characterization for such polynomials for general \(k\) and \(l\) in the special case of degree \(d = l+1\). As we will argue in \expref{Proposition}{prop.ideal.membership.altalg}, the characterization coincides with the membership test from \expref{Theorem}{thm.ideal.membership} for multilinear polynomials of degree \(l+1\). 

In~\expref{Section}{s.simplicial-repr.base-case} we considered the special case with $k=0$ and $l=1$. In that basic setting, the only weight functions of degree $k$ are the constant functions, and only $w \equiv 1$ needs to be considered in \expref{Theorem}{thm.vanideal-char.base}. The resulting criterion is exactly the conservation criterion that we developed in \expref{Section}{s.simplicial-repr.base-case}.

Note that the restriction in \expref{Theorem}{thm.vanideal-char.base} to \emph{multilinear} polynomials $p$ is just to ensure that $\Srepr^{-1}(p)$ is well-defined. For polynomials of degree $l+1$ that are not multilinear, one could interpret the non-existence of $\Srepr^{-1}(p)$ as not satisfying the criterion. This is consistent with \expref{Proposition}{prop.ideal.membership.degree-bound}, which implies that polynomials of degree $l+1$ that are not multilinear are automatically outside \(\VanIdeal{\RFEkl}\) since they necessarily have a monomial supported on $l$ or fewer variables.

Through \expref{Lemma}{lem.simplicial-vangen}, the property that $\bdry_w(\Srepr^{-1}(\VanGenkl)) = 0$ for every weight function $w$ of degree at most $k$ can be viewed as an application of $\bdry_{w \wedge \alpha^k \wedgecdots \alpha^0} = 0$ to $\Simplicest$ with $t = k+l+2$. The equations \eqref{eq.generator.dependencies} follow in a similar way from an application with $t = k+l+3$.

\paragraph{Multilinear case of arbitrary degree.}
The two-step approach underlying \expref{Theorem}{thm.vanideal-char.base} extends to multilinear polynomials of higher degrees. Whereas in the special case of degree $d=l+1$ we only needed simplicial representations for $\VanGenkl[S]$ in the first step, we now need them for polynomials of the more general form $\VanGenkl[S]\cdot x^M$ where $M \subseteq [n]$ is disjoint from $S$. We can handle the additional term $x^M$ in \expref{Lemma}{lem.simplicial-vangen} by including a multiplicative factor 
\begin{equation}\label{eq.mu}
\mu_M(\alpha) \doteq \prod_{j \in M}(\alpha - a_j)
\end{equation}
in each of the weight functions. The extra factor acts as a masking term and ensures that in the expansions of \eqref{eq.def.boundary} the terms with $i \in M$ vanish, so under $\Srepr$ the factor $x^M$ remains. 

\begin{lemma}
  \label{lem.simplicial-vangen-high-degree}
  For any \(k,l \in \N\), \(S\sqcup M \subseteq [n]\) with $|S|=k+l+2$, and $\mu_M(\alpha) \doteq \prod_{j \in M}(\alpha - a_j)$, 
  \begin{equation} \label{eq.simplicial-vangen-high-degree}
    \VanGenkl[S]\cdot x^M = 
    \frac{\Adet{S}}{\Adet{S \sqcup M}} \cdot \Srepr(\bdry_{\mu_M(\alpha)\alpha^k\wedgecdots \mu_M(\alpha)\alpha^0}(v^{S \sqcup M})).
  \end{equation}
\end{lemma}

\begin{proof}
Expand $\bdry_{\mu_M(\alpha)\alpha^k\wedgecdots \mu_M(\alpha)\alpha^0} (v^{S \sqcup M})$ by \expref{Proposition}{prop.composed-bdrys}. Notice that the only nonzero terms in the expansion correspond to subsets $J$ that contain $M$. Substituting $I \gets K$ and $J \gets L \sqcup M$, and factoring out the $\mu_M(a_i)$ terms from the determinant, we can write
\[ \bdry_{\mu_M(\alpha)\alpha^k\wedgecdots \mu_M(\alpha)\alpha^0}(v^{S \sqcup M}) = 
\sum_{\substack{K\sqcup L=S\\ |K|=k+1}} (-1)^{\XInv(K,L\sqcup M)} \cdot \left(\prod_{i\in K} \mu_M(a_{i})\right) \cdot \Adet{K} \cdot v^{L \sqcup M}. 
\]
Applying $\Srepr$ yields
\begin{equation}\label{eq.vanideal-char.multilinear.eq4}
\Srepr(\bdry_{\mu_M(\alpha)\alpha^k\wedgecdots \mu_M(\alpha)\alpha^0}(v^{S \sqcup M})) = 
\sum_{\substack{K\sqcup L=S\\ |K|=k+1}} (-1)^{\XInv(K,L\sqcup M)} \cdot \left(\prod_{i\in K} \mu_M(a_{i})\right) 
    \cdot \Adet{K} \cdot \Adet{L \sqcup M} \cdot x^{L \sqcup M}. 
\end{equation}
Applying \eqref{eq.vandermonde-rep.differences} to $T=L \sqcup M$, $T=L$, and $T=M$, rearranging terms, and remembering that $A_T$ takes rows in increasing index, we obtain
\begin{equation}
\label{eq.vandermonde-rep.two-sets}
\Adet{L \sqcup M} = (-1)^{\XInv(L,M)}\cdot \left( \prod_{i\in L,j\in M} (a_{i}-a_{j})\right) \cdot \Adet{L} \cdot \Adet{M}.
\end{equation}
We can expand $(-1)^{\XInv(K,L\sqcup M)}$ as the product $(-1)^{\XInv(K,L)} (-1)^{\XInv(K,M)}$ because
$\XInv(K,L \sqcup M)$ equals the sum $\XInv(K,L)+\XInv(K,M)$.
By the definition of $\mu_M$, we can expand $\prod_{i \in K} \mu_M(a_i)$ as $\prod_{i\in K, j \in M} (a_i-a_j)$. Those expansions and \eqref{eq.vandermonde-rep.two-sets} allow us to write the summand on the right-hand side of \eqref{eq.vanideal-char.multilinear.eq4} as
\[ (-1)^{\XInv(K,L)} (-1)^{\XInv(K,M)} (-1)^{\XInv(L,M)} \cdot \left( \prod_{i\in K \sqcup L, j \in M} (a_i-a_j) \right)  \cdot \Adet{K} \cdot \Adet{L} \cdot \Adet{M} \cdot x^{L \sqcup M} \]
Using the similar fact as above that  
$(-1)^{\XInv(K \sqcup L,M)} = (-1)^{\XInv(K,M)} (-1)^{\XInv(L,M)}$, recalling that $K \sqcup L = S$, and pulling out the terms independent of the choice of $K$, we obtain
\begin{align*}
&\Srepr(\bdry_{\mu_M(\alpha)\alpha^k\wedgecdots \mu_M(\alpha)\alpha^0}(v^{S \sqcup M})) \\
&= (-1)^{\XInv(S,M)} \cdot \left( \prod_{i\in S, j \in M} (a_i-a_j) \right)
\cdot \Adet{M} \cdot x^M \sum_{\substack{K\sqcup L=S\\ |K|=k+1}} (-1)^{\XInv(K,L)} \cdot \Adet{K} \cdot \Adet{L} \cdot x^L \\
&= \frac{\Adet{S \sqcup M}}{\Adet{S}} \cdot x^M \sum_{\substack{K\sqcup L=S\\ |K|=k+1}} (-1)^{\XInv(K,L)} \cdot \Adet{K} \cdot \Adet{L} \cdot x^L,
\end{align*}
where the last step applies \eqref{eq.vandermonde-rep.two-sets} with $L \gets S$. By \expref{Proposition}{prop.vangen.basic}, this establishes the result.
\end{proof}

The multilinear elements in \(\VanIdeal{\RFEkl}\) are exactly the linear combinations of terms of the form \eqref{eq.simplicial-vangen-high-degree} where $S \subseteq [n]$ ranges over subsets of size $k+l+2$ and $M \subseteq [n]$ over subsets disjoint with $S$. In order to obtain a simpler characterization of the same type, as well as one to which we can apply the generalized image-kernel relationship, we show that we can replace the weight functions on the right-hand side of \eqref{eq.simplicial-vangen-high-degree} by generic weight functions of the same degree or by Lagrange interpolants with respect to a subset of abscissas of size one more. 

\begin{proposition}\label{prop.simplicial-vangen-high-degree}
Let \(k+1,m,t \in \N\) with $t \ge k+1$, $\nu \in \Simplicest$, and $N \subseteq [n]$ with $|N|=k+m+1$. Let $L_{N,j}$ for $j \in N$ denote the Lagrange interpolants for the subset of abscissas $\{a_i\}_{i \in N}$, \ie, $L_{N,j}$ denotes the unique univariate polynomial of degree at most $|N|-1$ satisfying $L_{N,j}(a_i)=1$ for $i=j$ and $L_{N,j}(a_i) = 0$ for $i \in N \setminus \{j\}$.
For all weight functions $w_1, \dots, w_{k+1}$ of degree at most $k+m$, 
\begin{equation}\label{eq.prop.simplicial-vangen-high-degree}
\linspan_{\substack{M \subseteq N \\ |M|=m}} \bdry_{\mu_M(\alpha)\alpha^k\wedgecdots\mu_M(\alpha)\alpha^0}(\nu) \; \;
= 
\linspan_{\substack{w_1, \dots, w_{k+1} \in \FF[\alpha] \\ \deg(w_1), \dots \deg(w_{k+1}) \le k+m}}
\bdry_{w^{[k+1]}}(\nu) \;
=
\linspan_{\substack{B \subseteq N \\ |B|=k+1}} \bdry_{L_N^B}(\nu).
\end{equation}
\end{proposition}
Some explanation of the compact notation on the right-hand side of \eqref{eq.prop.simplicial-vangen-high-degree} is in order. First, we use $L_{N,j}$ to differentiate with the notation $L_j$ for Lagrange interpolants that we introduced in \expref{Definition}{def.sv}, where $L_j$ corresponds to $L_{[n],j}$. Second, for a subset $B \subseteq N$, we write $L_N^B$ as a shorthand for $\bigwedge_{j\in B} L_{N,j}$, where the indices in the wedge product are taken in increasing order. Finally, in the composed boundary operator $\bdry_{L_N^B}$, the Lagrange interpolant $L_{N,j}$ represents the weight function interpolated by $L_{N,j}$ as in 
\expref{Definition}{def.deg.polybdry}. 

\begin{proof}
The inclusion $\subseteq$ of the first equality in \eqref{eq.prop.simplicial-vangen-high-degree} follows because the weight functions $\mu_M(\alpha) \alpha^r$ for $r \in \{0,\dots,k\}$ have degree at most $k+|M|=k+m$. 

To argue the inclusion $\subseteq$ of the second equality in \eqref{eq.prop.simplicial-vangen-high-degree}, note that the Lagrange interpolants $L_{N,j}$ for $j \in N$ are linearly independent and that there are as many of them as the dimension of the space of polynomials of degree at most $|N|-1=k+m$, so they form a basis for that space. In particular, we can write all weight functions $w_1,\dots,w_{k+1}$ of degree at most $k+m$ as linear combinations of the Lagrange interpolants $L_{N,j}$, $j \in N$. By the distributivity and antisymmetry of the wedge product, this implies that
\begin{equation*}
\bdry_{w^{[k+1]}}(\nu)  \in 
\linspan_{\substack{B \subseteq N \\ |B|=k+1}} \bdry_{L_N^B}(\nu).
\end{equation*}

It remains to argue that the right-most side of \eqref{eq.prop.simplicial-vangen-high-degree} is included in the left-most side. Fix a subset $B \subseteq N$ of size $|B|=k+1$. Since the polynomials $L_{N,j}$ for $j \in B$ individually have roots in all but one element of $\{a_i\}_{i \in N}$, they collectively have common roots among exactly $|N|-|B|=m$ of these abscissas, which form a set $M \subseteq N$. Each $L_{N,j}$ can therefore be written as the product of $\mu_M$ and a polynomial of degree at most $k$, or equivalently, as a linear combination of $\mu_{M}(\alpha)\alpha^k,\dots,\mu_{M}(\alpha)\alpha^0$. Once again, by the distributivity and antisymmetry of the wedge product, we have that 
\begin{equation*}
\bdry_{L_N^B}(\nu) \in 
\linspan_{\substack{M \subseteq N \\ |M|=m}}
\bdry_{\mu_M(\alpha)\alpha^k\wedgecdots\mu_M(\alpha)\alpha^0}(\nu).
\qedhere
\end{equation*}
\end{proof}

The first equality in \eqref{eq.prop.simplicial-vangen-high-degree} connects weight functions as on the right-hand side of \eqref{eq.simplicial-vangen-high-degree} with generic ones of the same degree. This leads to the following simple characterization of the multilinear part of $\VanIdeal{\RFEkl}$ in terms of $\Srepr$ and the image of composed boundary maps. The characterization naturally decomposes into separate ones for the homogeneous components of the various degrees $d$. 

\begin{corollary}
  \label{cor.simplicial-vangen-high-degree}
  For any \(k,l \in \N\), the set of multilinear polynomials $p \in \FF[x_1,\dots,x_n]$ in \(\VanIdeal{\RFEkl}\) is given by the direct sum $\oplus_{d=0}^{n-k-1} H_d$ of homogeneous components of degree $d \in \{0,\dots,n-k-1\}$ given by 
  \begin{equation}\label{eq.cor.simplicial-vangen-high-degree}
  H_d \doteq \linspan \Srepr (\bdry_{w^{[k+1]}}(\Simplicesparam{k+d+1})), 
  \end{equation}
  where $w_1, \dots, w_{k+1}$ range over all weight functions of degree at most $k+d-l-1$.
\end{corollary}
For $d \le l$, the only possible choices for the weight functions $w_1, \dots w_{k+1}$ in \expref{Corollary}{cor.simplicial-vangen-high-degree} are linearly dependent, which implies that $\bdry_{w^{[k+1]}}$ vanishes and therefore $H_d$ only contains the zero polynomial. This is consistent with \expref{Proposition}{prop.ideal.membership.degree-bound}, as is the restriction $d \le n-k-1$.
\begin{proof}
By \expref{Theorem}{thm.ideal.generators} and the fact that all the instantiations $\VanGenkl$ are homogeneous of degree $l+1$, the multilinear elements in \(\VanIdeal{\RFEkl}\) are exactly the linear combinations of terms of the form \eqref{eq.simplicial-vangen-high-degree} where $S \subseteq [n]$ ranges over subsets of size $k+l+2$ and $M \subseteq [n]$ over subsets disjoint with $S$. The homogeneous component of degree $d$ equals the contributions of the combinations $(S,M)$ where $|M|=m \doteq d-l-1$. Since $S \sqcup M \subseteq [n]$ and $|S|+|M|=k+d+1$, it follows that $d \le n-k-1$.

Since the weight functions on the right-hand side of \eqref{eq.simplicial-vangen-high-degree} are of degree at most $|M|+k=k+d-l-1$, the homogeneous component of degree $d$ falls inside $H_d$. For the other inclusion, consider $\nu=v^T$ for $T \subseteq [n]$ with $|T|=t \doteq k+d+1$. The first equality in \eqref{eq.prop.simplicial-vangen-high-degree} applies for any $N \subseteq [n]$ with $|N|=k+m+1=t-l-1$. If we pick $N \subseteq T$, we have that $M \subseteq N \subseteq T$ and we can write $T$ as $T = S \sqcup M$ where $|S| = |T|-|M| = k+l+2$. Thus, each term on the left-most side of \eqref{eq.prop.simplicial-vangen-high-degree} is of the form of the boundary expression on the right-hand side of 
\eqref{eq.simplicial-vangen-high-degree}. By \expref{Lemma}{lem.simplicial-vangen-high-degree} and linearity, it follows that all of $H_d$ can be realized as homogeneous components of degree $d$ of polynomials in $\VanIdeal{\RFEkl}$.
\end{proof}
For $d=l+1$, up to constant factors, there is only one nontrivial composed boundary map $\bdry_{w^{[k+1]}}$ up to scalar multiplication, namely the map $\bdry_{\alpha^k\wedgecdots\alpha^0}$ from \expref{Corollary}{cor.simplicial-vangen}. Thus, \expref{Corollary}{cor.simplicial-vangen} represents the special case of \expref{Corollary}{cor.simplicial-vangen-high-degree} for degree $d=l+1$. 

The second equality in \eqref{eq.prop.simplicial-vangen-high-degree} from \expref{Proposition}{prop.simplicial-vangen-high-degree} leads to another characterization of the multilinear part of $\VanIdeal{\RFEkl}$ in terms of $\Srepr$ and composed boundary maps, one that is more technical but to which we can directly apply the generalized image-kernel relationship. This leads to the following test for membership of multilinear polynomials in $\VanIdeal{\RFEkl}$. Consistent with the characterization in \expref{Corollary}{cor.simplicial-vangen-high-degree} and with \expref{Proposition}{prop.rfe-homog}, the test decomposes into independent ones for each of the homogeneous components.

\begin{theorem}
  \label{thm.vanideal-char.multilinear}
  Let \(k,l \in \N\).
  For any multilinear polynomial
  \(p\in\FF[x_1,\ldots,x_n]\),
  \(p(\RFEkl) = 0\) if and only if the homogeneous components $p^{(d)}$ of $p$ for all degrees $d$ satisfy the following requirements:
  \begin{enumerate}
      \item $p^{(d)}=0$ if $d \le l$ or $d \ge n-k$. \label{cond.thm.vanideal-char.multilinear-1}
      \item For all $d = l+\Delta+1$ with $\Delta \in \{0, \dots, n-k-l-2\}$ and all weight functions \(w_1,\ldots,w_{\Delta+1}\) of degree at most \(k+\Delta\) \label{cond.thm.vanideal-char.multilinear-2},
      \begin{equation}\label{eq.thm.vanideal-char.multilinear}
      \bdry_{w_1\wedgecdots w_{\Delta+1}}( \Srepr^{-1}(p^{(d)})  ) = 0.
      \end{equation}
  \end{enumerate}
\end{theorem}

\begin{proof}
Consider the characterization \eqref{eq.cor.simplicial-vangen-high-degree} of the homogeneous components $H_d$ in \expref{Corollary}{cor.simplicial-vangen-high-degree}. We already argued that $H_d$ only contains the zero polynomial for $d \le l$ and that there are no terms for $d \ge n-k-1$. This gives us \expref{condition}{cond.thm.vanideal-char.multilinear-1}.

In the remainder of the proof we consider the requirements for $d \in \{l+1, \dots, n-k-1\}$. For multilinear $p$, $\Srepr^{-1}(p)$ is well-defined. Applying $\Srepr^{-1}$ and the second equality in \eqref{eq.prop.simplicial-vangen-high-degree}, we can alternately write \eqref{eq.cor.simplicial-vangen-high-degree} as
\begin{equation}\label{eq.thm.vanideal-char.multilinear-1}
\Srepr^{-1}(H_d) = \linspan_{\substack{B \subseteq N \\ |B|=k+1}} \bdry_{L_N^B}(\Simplicesparam{k+d+1}),
\end{equation}
where $N \subseteq [n]$ can be any fixed subset of size $|N|=k+d-l$, namely by setting $m=d-l-1$, which we know is non-negative. For easier notation, we pick $N = [k+d-l]$. By \expref{Proposition}{prop.bdry-multiple-kernel-image} with $w_j \doteq L_{N,j}$ and $\Delta \doteq d-l-1$, we can further rewrite the right-hand side of \eqref{eq.thm.vanideal-char.multilinear-1} as
\[
\Srepr^{-1}(H_d) = \bigcap_{\substack{B \subseteq [k+\Delta+1] \\ |B|=\Delta+1}} \ker(\bdry_{L_N^B}) \cap \Simplicesparam{d}.
\]
Thus $p^{(d)} \in H_d$ if and only if 
\begin{equation}\label{eq.thm.vanideal-char.multilinear-2}
(\forall B \subseteq [k+\Delta+1] \textrm{ with }|B|=\Delta+1) \; \;  
\bdry_{L_N^B}(\Srepr^{-1}(p^{(d)}))=0.
\end{equation}
Another application of the second part of \expref{Proposition}{prop.simplicial-vangen-high-degree}, this time with $k \gets \Delta$, $m \gets k$, $t \gets d$, $\nu=\Srepr^{-1}(p^{(d)})$, and $N=[k+\Delta+1]$, shows that if \eqref{eq.thm.vanideal-char.multilinear-2} holds for the particular choice of weight functions $w_j = L_{N,j}$, then \eqref{eq.thm.vanideal-char.multilinear-2} holds for all choices of weight functions $w_j$ of degree at most $k+\Delta$. The statement follows.
\end{proof}

As we will argue in more detail below, by another application of the first part of \expref{Proposition}{prop.simplicial-vangen-high-degree}, it suffices in \expref{condition}{cond.thm.vanideal-char.multilinear-2} of \expref{Theorem}{thm.vanideal-char.multilinear} to consider weight functions of the form $w_j(\alpha)=\mu_K(\alpha) \alpha^{\Delta-j+1}$ for $j \in [\Delta+1]$, where $K$ ranges over all subsets of size $k$ of some fixed $N \subseteq [n]$ with $|N|=k+\Delta+1$. In this case, \eqref{eq.thm.vanideal-char.multilinear} becomes
\begin{equation}\label{eq.thm.vanideal-char.multilinear-3}
\bdry_{\mu_K(\alpha)\alpha^{\Delta} \wedgecdots \mu_K(\alpha)\alpha^0}(  \Srepr^{-1}(p^{(d)}) ) = 0.
\end{equation}
The left-hand side of \eqref{eq.thm.vanideal-char.multilinear-3} lives in $\Simplicesparam{l}$, and the condition is equivalent to requiring that the coefficient of $v^L$ vanishes for every subset $L \subseteq [n]\setminus K$ of size $|L|=l$. Those coefficients can be expressed in terms of evaluations of $\PartialZero{L}{K}{p^{(d)}}$, where we take the partial derivative with respect to the variables $x_i$ for $i \in L$ and set the variables $x_i$ for $i \in K$ to zero. Intuitively, whereas in \expref{Lemma}{lem.simplicial-vangen-high-degree} the effect of the masking factors $\mu_M$ was to retain only contributions of monomials that contain $x_i$ for every $i \in M$, in this dual setting the effect of $\mu_K$ is to cancel the contributions of monomials that contain $x_i$ for at least one $i \in K$. 

\begin{proposition}
    \label{prop.ideal.membership.altalg}
    Let $p \in \FF[x_1,\dots,x_n]$ be a multilinear polynomial, let $K,L \subseteq [n]$ be disjoint subsets with $|K|=k$ and $|L|=l$, and $\Delta \in \N$. Let $c_{K,L}$ denote the coefficient of $v^L$ in $\bdry_{\mu_K(\alpha)\alpha^{\Delta} \wedgecdots \mu_K(\alpha)\alpha^0}(\Srepr^{-1}(p))$, and $e_{K,L}$ denote the value of $\PartialZero{L}{K}{p}$
    upon the substitution $x_i\gets \mu_K(a_i)/\mu_L(a_i)$ for $i \in [n]\setminus (K \sqcup L)$. Then $c_{K,L} = e_{K,L} / \Adet{L}$. 
\end{proposition}

\begin{proof}
    By linearity, it suffices to establish the result for monomials $p=x^T$ where $T \subseteq [n]$. In such a case $\Srepr^{-1}(p)=v^T/\Adet{T}$. 

    If $L \not\subseteq T$, then $c_{K,L}$ vanishes because boundary maps can only remove components from a wedge product, not insert new components (see \eqref{eq.bdry.special}). On the other hand,  $\PartialZero{L}{K}{x^T}$ is identically zero because we are taking a partial derivative with respect to a variable that does not appear, so $e_{K,L}$ vanishes and the equality holds. 
    
    If $L \subseteq T$, then by applying \expref{Proposition}{prop.composed-bdrys} to $v^T$ and scaling, 
    \[ c_{K,L} = (-1)^{\XInv(M,L)} \prod_{i\in M} \mu_{K}(a_i)\cdot \Adet{M} / \Adet{T}, \]
    where $M \doteq T \setminus L$. 
    Note that if $K\cap M \ne \emptyset$ then the term $\prod_{i\in M} \mu_{K}(a_i)$ vanishes, hence $c_L$ vanishes. On the other hand, $\PartialZero{L}{K}{x^T}$ is identically zero because we are setting a variable to zero that appears in the monomial $x^T$. So, $e_{K,L}$ vanishes and the equality holds. 
    
    The remaining cases are those where $L \subseteq T$ and $K\cap M = \emptyset$. By \eqref{eq.vandermonde-rep.two-sets} 
    \[ \Adet{T} = \Adet{M \sqcup L} = (-1)^{\XInv(M,L)} \cdot \left( \prod_{i\in M}\prod_{j\in L}(a_i-a_j) \right) \cdot \Adet{M}\cdot \Adet{L}. \]
    Combining this with the notation $\mu_L(a_i) \doteq  \prod_{j\in L}(a_i-a_j)$, we can rewrite the expression for $c_{K,L}$ as
    \begin{equation}\label{eq.prop.ideal.membership.altalg-1}
    c_{K,L} = \left( \prod_{i\in M} \frac{\mu_K(a_i)}{\mu_L(a_i)}\right) \cdot \frac{1}{\Adet{L}}.
    \end{equation}
    On the other hand, we have that $\PartialZero{L}{K}{x^T} = x^M$, and the value upon the substitution $x_i\gets \mu_K(a_i)/\mu_L(a_i)$ for $i \in [n]\setminus (K \sqcup L)$ equals
    \begin{equation}\label{eq.prop.ideal.membership.altalg-2} e_{K,L} = \prod_{i\in M} \frac{\mu_K(a_i)}{\mu_L(a_i)}. 
    \end{equation}
    The result follows by comparing \eqref{eq.prop.ideal.membership.altalg-1} and \eqref{eq.prop.ideal.membership.altalg-2}. 
\end{proof}

In combination with \expref{Theorem}{thm.vanideal-char.multilinear}, the connection in \expref{Proposition}{prop.ideal.membership.altalg} yields an alternate proof of \expref{Theorem}{thm.ideal.membership}. It provides a membership test for the ideal generated by the instantiations of $\VanGenkl$ that, beyond the machinery of alternating algebra developed in this section, only requires the elementary properties of $\VanGenkl$ stated in \expref{Proposition}{prop.vangen.basic}. In particular, it does not make use of the \hyperref[lem.multi-zsub-deriv]{Zoom Lemma}, which we developed as a tool to obviate the need for alternating algebra after we had obtained our results. Note that the alternate approach to \expref{Theorem}{thm.ideal.membership} still relies on the Zoom Lemma for the connection to $\RFE$, namely in the argument that the ideal generated by the instantiations of $\VanGenkl$ includes all of $\VanIdeal{\RFEkl}$.

\begin{proof}[Alternate proof of \expref{Theorem}{thm.ideal.membership}]
Consider the membership test given by \expref{Theorem}{thm.vanideal-char.multilinear}. \expref{Condition}{cond.thm.vanideal-char.multilinear-1} is equivalent to 
\expref{condition}{thm.ideal.membership.cond1} in \expref{Theorem}{thm.ideal.membership}. It remains to argue that \expref{condition}{cond.thm.vanideal-char.multilinear-2} is equivalent to \expref{condition}{thm.ideal.membership.cond2} in \expref{Theorem}{thm.ideal.membership}. 

Fix $\Delta \in \{0,\dots,n-k-l-2\}$ and consider  \expref{Proposition}{prop.simplicial-vangen-high-degree} with $k \gets \Delta$, $m \gets k$, $t \gets d \doteq l+\Delta+1$, and $\nu = \Srepr^{-1}(p^{(d)})$. Set $N \subseteq [n]$ to be an arbitrary subset of size $N=k+\Delta+1$ and rename the set $M$ as $K$. The application of the first equality in \expref{Proposition}{prop.simplicial-vangen-high-degree} tells us that the combined requirements \eqref{eq.thm.vanideal-char.multilinear} over all choices of weight functions \(w_1,\ldots,w_{\Delta+1}\) of degree at most \(k+\Delta\) are equivalent to the combined requirements
\eqref{eq.thm.vanideal-char.multilinear-3} over all subsets $K \subseteq N$ of size $k$, or, because of the arbitrariness of $N$, over all subsets $K \subseteq [n]$ of size $k$. The left-hand side of \eqref{eq.thm.vanideal-char.multilinear-3} is a linear combination of terms of the form $v^L$, where $L \subseteq [n]$ is a subset of size $|L|=d-\Delta-1=l$ and is disjoint from $K$ because of the masking factor $\mu_K$ in all weight functions. Thus, \eqref{eq.thm.vanideal-char.multilinear-3} holds if and only if the coefficient $c_{K,L,d}$ of $v^L$ on the left-hand side vanishes for every such $L$. By \expref{Proposition}{prop.ideal.membership.altalg}, $c_{K,L,d}=0$ is equivalent to $e_{K,L,d}=0$, where $e_{K,L,d}$ denotes the value of $\PartialZero{L}{K}{p^{(d)}}$ upon the substitution $x_i\gets \mu_K(a_i)/\mu_L(a_i)$ for $i \in [n]\setminus (K \sqcup L)$.

In summary, \expref{condition}{cond.thm.vanideal-char.multilinear-2} in
\expref{Theorem}{thm.vanideal-char.multilinear} stipulates that for all disjoint subsets $K,L \subseteq [n]$ with $|K|=k$ and $|L|=l$,
\begin{equation}\label{eq.alt}
(\forall d \in \{l+1,\dots,n-k-1\}) \; e_{K,L,d} = 0.
\end{equation}
The value $e_{K,L,d}$ is also the coefficient of degree $d-l$ of the univariate polynomial in $z$ obtained from $\PartialZero{L}{K}{p}$ after the substitution  \eqref{eq.thm.ideal.membership.sub} from \expref{condition}{thm.ideal.membership.cond2} in \expref{Theorem}{thm.ideal.membership}. Since the range of $d$ in \eqref{eq.alt} covers all terms of this univariate polynomial in $z$, \eqref{eq.alt} is equivalent to the polynomial being zero, which is exactly \expref{condition}{thm.ideal.membership.cond2} in \expref{Theorem}{thm.ideal.membership}.
\end{proof}

\paragraph{Beyond multilinearity.}
\expref{Theorem}{thm.vanideal-char.multilinear}
does well for understanding the multilinear elements of the vanishing ideal.
For non-multilinear elements, one may do the following.
Let \(\ExtSimplicest\) be \(\Simplicest\) except that coefficients may be
arbitrary polynomials in \(\FF[x_1,\ldots,x_n]\) rather than just scalars in
\(\FF\).
The decoder map \(\Srepr\) and boundary maps \(\bdry_w\) carry over
to \(\ExtSimplicest\) directly,
though now \(\Srepr\) is no longer injective.
The following variation of \expref{Theorem}{thm.vanideal-char.base}
characterizes ideal membership for arbitrary polynomials.
\begin{proposition}
  \label{prop.vanideal-char.base-lift}
  Let \(k,l \in \N\).
  For any polynomial \(p\in\FF[x_1,\ldots,x_n]\),
  \(p(\RFEkl) = 0\) if and only if
  there exists \(\eta\in\ExtSimplicesparam{l+1}\)
  with \(\Srepr(\eta) = p\) such that,
  for every weight function $w$ of degree at most \(k\),
  \[
    \bdry_w(\eta) = 0.
  \]
\end{proposition}

\begin{proof}
    For the forward direction, we consider polynomials of the form $p= \VanGenkl[S] \cdot m$, where $S \subseteq [n]$ with $|S|=k+l+2$ and $m$ is a (not necessarily multilinear) monomial in $\FF[x_1,\dots,x_n]$. One choice of $\eta\in \ExtSimplicesparam{l+1}$ for which $\Srepr(\eta)=p$ is $\eta=(-1)^{(k+1)(l+1)} \bdry_{\alpha^k\wedgecdots\alpha^0}(v^S) \cdot m$, by \expref{Lemma}{lem.simplicial-vangen}. For this choice of $\eta$ and any weight function $w$ of degree at most $k$, $\bdry_{w}(\eta)=0$. The forward direction follows since every polynomial $p$ for which $p(\RFEkl)=0$ can be expressed as a linear combination of polynomials of the described form.

    For the backward direction, suppose there exists $\eta\in \ExtSimplicesparam{l+1}$ such that $\Srepr(\eta)=p$ and $\bdry_w(\eta)=0$ for all weight functions $w$ of degree at most $k$. We can write $\eta = \sum_m \omega_m 
    \cdot m$ as a linear combination of monomials $m\in \FF[x_1,\dots,x_n]$, each with coefficient $\omega_m\in \Simplicesparam{l+1}$. Since $\bdry_w$ does not affect polynomial coefficients by nonconstant factors, we have that for each $m$, $\bdry_w(\omega_m)$ vanishes for all $w$ of degree at most $k$. \expref{Theorem}{thm.vanideal-char.base} implies that $\Srepr(\bdry_w(\omega_m))$ is not hit by $\RFEkl$. By linearity, $\Srepr(\eta)$ is not hit by $\RFEkl$.
\end{proof}

While \expref{Proposition}{prop.vanideal-char.base-lift} applies to a broader class of
polynomials,
it has the drawback that representing polynomials with
\(\ExtSimplicesparam{l+1}\) is too redundant.
Specifically,
whenever \(p\) has a representation in \(\ExtSimplicesparam{l+1}\),
there are many \(\eta\in\ExtSimplicesparam{l+1}\) that represent \(p\),
and most of them do \emph{not} satisfy the boundary conditions,
even when \(p\) belongs to the vanishing ideal.
This erodes the utility of the characterization.
\expref{Theorems}{thm.vanideal-char.base}~and~\ref{thm.vanideal-char.multilinear} yield
straightforward tests:
Given \(p\),
form the unique \(\eta\) with \(\rho(\eta)=p\),
and then check whether the boundary conditions hold for \(\eta\).
\expref{Proposition}{prop.vanideal-char.base-lift}, on the other hand, leaves \(\eta\) underspecified.

\section*{Acknowledgements}
We are grateful to Herv\'{e} Fournier and Arpita Korwar for their presentation at WACT'18 in Paris \cite{FournierKorwar2018}. 
We are indebted to Gautam Prakriya for helpful discussions and detailed feedback. 
We also thank Amir Shpilka and Michael Forbes for comments and encouragement, the anonymous reviewers for their careful proofreading and interesting suggestions, and the ToC editors for their thorough work.  
Finally, we appreciate the partial support for this research by the U.S.\ National Science Foundation under Grants No.\ 1838434, 2137424, and 2312540. Any opinions, findings, and conclusions or recommendations expressed in this material are those of the authors and do not necessarily reflect the views of the National Science Foundation.

\bibliographystyle{tocplain}   %

\bibliography{bibstrings,v020a001,bibtail}

\begin{tocauthors}
\begin{tocinfo}[hu]
 Ivan Hu\\
 \phd\ student\\
 Department of Computer Science\\
 University of Wisconsin -- Madison\\
 Madison, Wisconsin, USA\\
 ilhu\tocat{}wisc\tocdot{}edu\\           %
 \url{hhtps://pages.cs.wisc.edu/~ihu/}    %
\end{tocinfo}
\begin{tocinfo}[vanmelkebeek]
 Dieter van Melkebeek\\
 Professor\\
 Department of Computer Science\\
 University of Wisconsin -- Madison\\
 Madison, Wisconsin, USA\\
 dieter\tocat{}cs\tocdot{}wisc\tocdot{}edu \\   %
 \url{https://pages.cs.wisc.edu/~dieter/}      %
\end{tocinfo}
\begin{tocinfo}[morgan]
 Andrew Morgan\\
 Software engineer \\                 %
 Google \\                            %
 amorgan\tocat{}cs\tocdot{}wisc\tocdot{}edu \\%%
 \url{https://pages.cs.wisc.edu/~amorgan/}    %
\end{tocinfo}
\end{tocauthors}

\begin{tocaboutauthors}
\begin{tocabout}[hu]  %
   \textsc{Ivan Hu} is a second year \phd\ student at the
  \href{https://www.wisc.edu/}{University of Wisconsin–Madison}
  under the supervision of Dieter van Melkebeek.
  He is studying complexity theory, with interests in algebraic complexity and pseudorandomness. He is currently an NSF Graduate Research Fellow. %
\end{tocabout}
\begin{tocabout}[vanmelkebeek]
\textsc{Dieter van Melkebeek} received his \phd\ from the 
\href{http://www.uchicago.edu}{University of Chicago},
under the supervision of 
\href{http://lance.fortnow.com}{Lance Fortnow}. 
His thesis was awarded the 
\href{https://awards.acm.org/award-recipients/vanmelkebeek_7183705}{ACM 
Doctoral Dissertation Award}. 
After postdocs at 
\href{http://dimacs.rutgers.edu}{DIMACS} and the
\href{http://www.ias.edu}{Institute for Advanced Study},
he joined the faculty at the
\href{http://www.wisc.edu}{University of Wisconsin-Madison}, where he currently is a full professor. His research interests include the power of randomness, lower bounds for NP-complete problems, and connections between derandomization and lower bounds. 
\end{tocabout}
\begin{tocabout}[morgan]
    \textsc{Andrew Morgan} received his \phd\ in 2022 from the 
    \href{https://www.wisc.edu/}{University of Wisconsin–Madison} 
    under the supervision of Dieter van Melkebeek. After
    graduating, he became a software engineer at Google.
\end{tocabout}
\end{tocaboutauthors}

\end{document}